\documentclass[11pt,a4paper]{article}
\usepackage{amssymb,amsmath,amsthm}
\usepackage{graphicx}
\usepackage{hyperref}
\usepackage{subcaption}
\usepackage{booktabs,tabu}

\setcounter{totalnumber}{1}

\setlength{\oddsidemargin}{0in}
\setlength{\evensidemargin}{0in}
\setlength{\leftmargin}{0in}
\setlength{\rightmargin}{0in}
\setlength{\topmargin}{0in}
\setlength{\headheight}{0in}
\setlength{\headsep}{0in}
\setlength{\textheight}{9.6in}
\setlength{\textwidth}{6.25in}
\setlength{\parskip}{0in}

\renewcommand{\leq}{\leqslant}
\renewcommand{\geq}{\geqslant}
\newcommand{\src}{\mathit{source}}
\newcommand{\sink}{\mathit{sink}}
\newcommand{\dm}{\mathrm{D}}
\newcommand{\im}{\mathrm{Im}}
\newcommand{\poly}{\text{poly}}
\newcommand{\eps}{\varepsilon}
\newcommand{\val}{\mathit{value}}
\newcommand{\overbar}[1]{\mkern 1.5mu\overline{\mkern-3.5mu#1\mkern-1.5mu}}

\newenvironment{emromani} {%
\begin{enumerate}}
{\end{enumerate}}




\newtheorem{theorem}{Theorem}[section]

\newtheorem{claim}[theorem]{\bfseries{Claim}}
\newtheorem{lemma}[theorem]{\bfseries{Lemma}}

\newbox\ProofSym
\setbox\ProofSym=\hbox{%
\unitlength=0.18ex%
\begin{picture}(10,10)
\put(0,0){\framebox(9,9){}}
\put(0,3){\framebox(6,6){}}
\end{picture}}


\bibliographystyle{plainurl}

\begin{document}

\title{
	Restricted Max-Min Fair Allocation\footnote{
		A preliminary version appears in the Proceedings of the 45th International Colloquium on Automata, Languages, and Programming(ICALP), 37:1-37:13.
	}
	\thanks{ 
		Research supported by the Research Grants Council, 
		Hong Kong, China (project no.~16201116).
	}
}

\author{
	Siu-Wing Cheng\footnote{Department of Computer Science and Engineering, HKUST. \{scheng, ymaoad\}@cse.ust.hk} 
	\and 
	Yuchen Mao\footnotemark[3]
}

\date{\today}

\maketitle

\begin{abstract}
	The restricted max-min fair allocation problem seeks an allocation of resources to players that maximizes the minimum total value obtained by any player.  It is NP-hard to approximate the problem to a ratio less than 2.  Comparing the current best algorithm for estimating the optimal value with the current best for constructing an allocation, there is quite a gap between the ratios that can be achieved in polynomial time: roughly 4 for estimation and roughly $6 + 2\sqrt{10}$ for construction.  We propose an algorithm that constructs an allocation with value within a factor of $6 + \delta$ from the optimum for any constant $\delta > 0$.  The running time is polynomial in the input size for any constant $\delta$ chosen.
 \end{abstract}
\section{Introduction}
\label{sec:intro}
\paragraph*{Background.}
Let $P$ be a set of $m$ players.  Let $R$ be a set of $n$ indivisible resources.  Resource $r \in R$ is worth a non-negative integer value $v_{pr}$ for player $p \in P$.  An allocation is a partition of $R$ into disjoint subsets $\{C_p : p \in P\}$ so that player $p$ is assigned the resources in $C_p$.  The \emph{max-min fair allocation} problem is to distribute resources to players so that the minimum total value of resources received by any player is maximized.  We define the \emph{value of an allocation} to be $\min_{p \in P} \sum_{r \in C_p} v_{pr}$.  Equivalently, we want to find an allocation with maximum value.

Bez\'{a}kov\'{a} and Dani~\cite{BD05} attacked the problem using the techniques of Lenstra et al.~\cite{LST90} for the min-max version: the problem of scheduling on unrelated machine to minimize makespan.  Bez\'{a}kov\'{a} and Dani proved that no polynomial-time algorithm can give an approximation ratio less than 2 unless P $=$ NP.  However, the assignment LP used in~\cite{LST90} cannot be rounded to give an approximation for the max-min allocation problem because the integrality gap is unbounded.  Later, Bansal and Sviridenko~\cite{BS06} proposed a stronger LP relaxation, the configuration LP, for the max-min allocation problem.  They showed that although the configuration LP has exponentially many constraints, it can be solved to any desired accuracy in polynomial time.  They also showed that there is an integrality gap of $\Omega(\sqrt{m})$.   Asadpour and Saberi~\cite{AS07} developed a polynomial-time rounding scheme for the configuration LP that gives an approximation ratio of $O(\sqrt{m}\log^3 m)$.  Saha and Srinivasan~\cite{SS10} improved it to $O(\sqrt{m\log m})$.  Chakrabarty, Chuzhoy, and Khanna~\cite{CCK09} showed that an $O(n^\delta\log n)$-approximate allocation can be computed in $n^{O(1/\delta)}$ time for any $\delta \geq \frac{9\log\log n}{\log n}$.

In this paper, we focus on the \emph{restricted} max-min fair allocation problem.  In the restricted case, each resource is desired by some subset of players, and has the same value $v_r$ for those who desire it and value 0 for the rest.  Even in this case, no approximation ratio better than 2 can be obtained unless P $=$ NP~\cite{BD05}.  Bansal and Sviridenko~\cite{BS06} proposed a polynomial-time  $O\bigl(\frac{\log\log m}{\log\log\log m}\bigr)$-approximation algorithm which is based on rounding the configuration LP.  Feige~\cite{F08} proved that the integrality gap of the configuration LP is bounded by a constant (large and unspecified).  His proof was made constructive by Haeupler et al.~\cite{HSS11}, and hence, a constant approximation can be found in polynomial time. Adapting Haxell's techniques for hypergraph bipartite matching~\cite{H95}, Asadpour et al.~\cite{AFS12} proved that the integrality gap of the configuration LP is at most 4.  Therefore, by solving the configuration LP approximately, one can estimate the optimal solution value within a factor of $4 + \delta$ in polynomial time for any constant $\delta > 0$. However, it is not known how to construct a $(4+\delta)$-approximate allocation in polynomial time.  Inspired by the ideas in~\cite{AFS12} and~\cite{H95}, Annamalai et al.~\cite{AKS17} developed a purely combinatorial algorithm that avoids solving the configuration LP.  It runs in polynomial time and guarantees an approximation ratio $6 + 2\sqrt{10} + \delta$ for any constant $\delta > 0$.  Nevertheless, the analysis still relies on the configuration LP.  There is quite a gap between the current best estimation ratio\footnote{It was recently improved to $3 + \frac{5}{6}$ independently in \cite{CM18-2,JR18}} $4+\delta$ and the current best approximation ratio $6+2\sqrt{10}+\delta \approx 12.325 +\delta$. This is an interesting status that few problems have.

If one constrains the \emph{restricted case} further by requiring $v_r\in\{1,\eps\}$ for some fixed constant $\eps \in (0,1)$, then it becomes the \emph{$(1,\eps)$-restricted case}.  Golovin proposed an $O(\sqrt{n})$-approximation algorithm for this case~\cite{G05}.  Chan et al.~\cite{CTW16} showed that it is still NP-hard to obtain an approximation ratio less than 2
and that the algorithm of Annamalai et al.~\cite{AKS17} achieves an approximation ratio of $9$ in this case.  The analysis in~\cite{CTW16} does not rely on the configuration LP.

\paragraph*{Our contributions.}
We propose an algorithm for the restricted max-min fair allocation problem that achieves an approximation ratio of $6 + \delta$ for any constant $\delta > 0$.  It runs in polynomial time for any constant $\delta$ chosen.  Our algorithm uses the same framework of Annamalai et al.~\cite{AKS17}: we maintain a stack of layers to record the relation between players and resources, and use lazy update and a greedy strategy to achieve a polynomial running time.  

Let $\tau^*$ be the optimal solution value.  Let $\lambda > 2$ be the target approximation ratio.  To obtain a $\lambda$-approximate solution, the value of resources a player need is $\tau^*/\lambda$. Our first contribution is a greedy strategy that is much more aggressive than that of Annamalai et al.~\cite{AKS17}.  Their greedy strategy considers a player greedy if that player claims at least $\tau^*/2$ worth of resources, which is more than needed.  In contrast, we consider a player greedy if it claims (nearly) the largest total value among all the candidates.  When building the stack, as in~\cite{AKS17}, we add greedy players and the resources claimed by them to the stack.  Intuitively, our more aggressive greedy strategy leads to a faster growth of the stack, and hence a significantly smaller approximation ratio can be achieved.

Our aggressive strategy brings challenge to the analysis that previous approaches~\cite{AKS17,CTW16} cannot cope with.  Our second contribution is a new analysis tool: an injection that maps a lot of players in the stack to their \emph{competing players} who can access resources of large total value.  Since players added to the stack must be greedy, they claim more than their competing players.  Therefore, such an injection allows us to conclude that players in the stack claim large worth of resources.  By incorporating competing players into the analysis framework of Chan et al.~\cite{CTW16}, we improve the approximation ratio to $6+\delta$.  Our analysis does not rely on the configuration LP, and it is purely combinatorial.
\section{Preliminaries}
\label{sec:pre}

Let $\tau^*$ be the optimal solution value.  Let $\lambda > 2$ denote our target approximation ratio.  Given any value $\tau \leq \tau^*$, our algorithm returns an allocation of value $\tau/\lambda$ in polynomial time.  We will show how to combine this algorithm with binary search to obtain an allocation of value at least $\tau^*/\lambda$ in the end.  We assume that $\tau$ is no more than $\tau^*$ in the rest of this section.

\subsection{Fat edges, thin edges and partial allocations}  
A resource $r$ is \emph{fat} if $v_r \geq \tau/\lambda$, and \emph{thin} otherwise.  For a set $B$ of thin resources, we define $\val(B) = \sum_{r\in B}v_r$.  For any player $p$, and any fat resource $r_f$ that is desired by $p$, $(p, r_f)$ is a \emph{fat edge}.  For any player $p$, and any set $B$ of thin resources, $(p, B)$ is a \emph{thin edge} if $p$ desires all the resources in $B$ and $\val(B) \geq \tau/\lambda$.  For a thin edge $e = (p, B)$, we say player $p$ and the resources in $B$ are covered by $e$, and define $\val(e) = \val(B) = \sum_{r\in B}v_r$.  We use uppercase calligraphic letters to denote sets of thin edges.  Given a set $\cal S$ of thin edges, we say $\cal S$ covers a player or a thin resource if some edge in $\cal S$ covers that player or resource, and define $\val({\cal S})$ to be the total value of the thin resources covered by ${\cal S}$.  That is, $\val({\cal S}) = \val(\bigcup_{(p, B)\in {\cal S} }B)$.  

Since our target approximation ratio is $\lambda$, a player will be \emph{satisfied} if it receives either a single fat resource it desires, or at least $\tau/\lambda$ worth of thin resources it desires.  Hence, it suffices to consider allocations that consist of two parts, one being a set of fat edges and the other being a set of thin edges.

Let $G$ be the bipartite graph formed by all the players, all the fat resources, and all the fat edges.  We will start with an arbitrary maximum matching $M$ of $G$ (which is a set of fat edges) and an empty set $\cal E$ of thin edges,  and iteratively update and grow $M$ and $\cal E$ into an allocation that satisfies all the players.  We call the intermediate solutions \emph{partial allocations} and formally define them as follows.  

A partial allocation consists of a maximum matching $M$ of $G$ and a subset $\cal E$ of thin edges such that (i) no two edges in $M$ and $\cal E$ satisfy (i.e., cover) the same player, (ii) no two edges in $\cal E$ share any resource, (iii) every edge $(p,B) \in {\cal E}$ is minimal in the sense that every proper subset $B' \subset B$ has value less than $\tau/\lambda$.  

In Section~\ref{sec:alg}, we present an algorithm which, given a partial allocation and an unsatisfied player $p_0$, computes a new partial allocation that satisfies $p_0$ and all the players that used to be satisfied.  Repeatedly invoking this algorithm returns an allocation that satisfies all the players.

\subsection{A problem of finding node-disjoint paths}  

We define a family of networks and a problem of finding node-disjoint paths in these networks.  These networks and the node-disjoint paths problem are used heavily in our algorithm and analysis.

\subsubsection{The problem} 
\label{subsubsec:path-problem}

\begin{figure}
	\centering
	\begin{subfigure}[b]{0.2\linewidth}
		\centering
		\includegraphics[scale = 0.7]{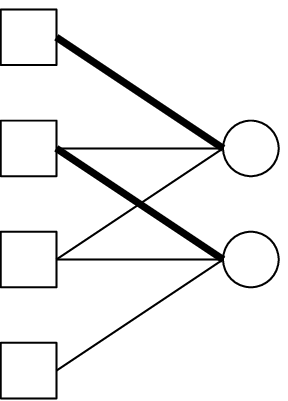}
		\caption{$M$}
	\end{subfigure}
	\begin{subfigure}[b]{0.2\linewidth}
		\centering
		\includegraphics[scale = 0.7]{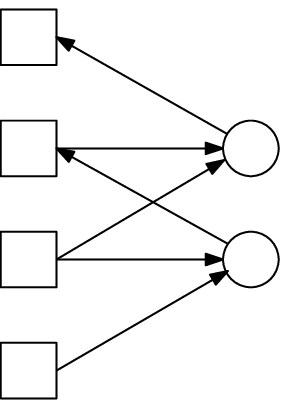}
		\caption{$G_M$}
	\end{subfigure}
	\begin{subfigure}[b]{0.22\linewidth}
		\centering
		\includegraphics[scale = 0.7]{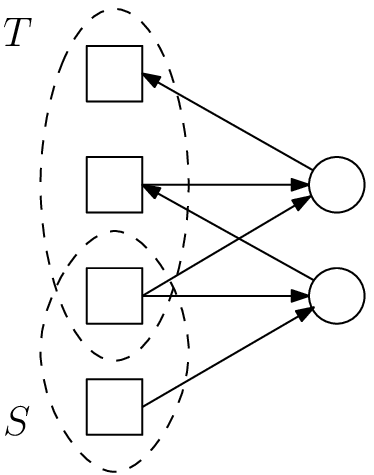}
		\caption{$G_M(S,T)$}
	\end{subfigure}
	\begin{subfigure}[b]{0.28\linewidth}
		\centering
		\includegraphics[scale = 0.7]{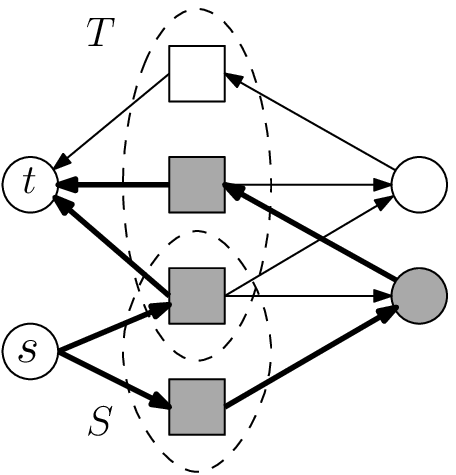}
		\caption{$s$-$t$ flow network}
	\end{subfigure}
	\caption{A reduction from the node-disjoint path problem to max flow problem.  The squares represent players and the circles represent fat resources.}
	\label{fg:s-t-flow}
\end{figure}

Recall that $G$ is a bipartite graph formed by all the players, all the fat resources, and all the fat edges. With respect to any maximum matching $M$ of $G$, we define $G_M$ to be a directed bipartite graph obtained from $G$ by orienting edges of $G$ from $r$ to $p$ if the edge $(p,r)$ is in $M$, and from $p$ to $r$ if $(p,r)$ is not in $M$. See Figure~\ref{fg:s-t-flow}(a) and (b) for an example.

We use $P_M$ and $\overbar{P}_M$ to denote the subsets of players matched and unmatched in $M$, respectively.  Given $S\subseteq \overbar{P}_M$ and $T\subseteq P$, we use $G_M(S,T)$ to denote the problem of finding the maximum number of node-disjoint paths from $S$ to $T$ in $G_M$.  This problem will arise in this paper for different choices of $S$ and $T$.  A feasible solution of $G_M(S,T)$ is just any set of node-disjoint paths from $S$ to $T$ in $G_M$.  An optimal solution maximizes the number of such paths.  Let $f_M(S,T)$ denote the size of an optimal solution of $G_M(S,T)$.  In the cases that $S \cap T \not= \emptyset$, a feasible solution may contain a path from a player $p \in S \cap T$ to itself, i.e., a path with no edge.   We call such a path a \emph{trivial path}.  Any path with at least one edge is \emph{non-trivial}.

Let $\Pi$ be any feasible solution of $G_M(S,T)$. The paths in $\Pi$ originate from a subset of $S$, which we call the \emph{sources}, and terminate at a subset of $T$, which we call the \emph{sinks}.  We denote the sets of sources and sinks by $\src(\Pi)$ and $\sink(\Pi)$, respectively.  A trivial path has only one node which is both its source and sink. From now on, we use $\Pi^+$ to denote the subset of non-trivial paths in $\Pi$.

\subsubsection{Solving the problem} 
\label{subsubsec:find-paths}

An optimal solution of $G_M(S,T)$ can be found by solving a maximum $s$-$t$ flow problem. Let $H$ be the $s$-$t$ flow network obtained from $G_M$ by adding a super source $s$ and directed edges from $s$ to all vertices in $S$,  adding a super sink $t$ and directed edges from all vertices in $T$ to $t$, and setting the capacities of all edges to 1.  It suffices to find an integral maximum flow in $H$.  The paths in $G_M$ used by this maximum flow is an optimal solution of $G_M(S,T)$.  Node-disjointness is guaranteed because, in $H$, every player has its in-degree at most one and every resource has its out-degree at most one.  

Figure~\ref{fg:s-t-flow} gives an example.  The squares represent players.  The circles represent fat resources.  In (a), the bold undirected edges form the maximum matching $M$.  The two lower square nodes are unmatched and they form $\overbar{P}_M$.  The two upper nodes are matched and they form $P_M$.  An optimal solution of $G_M(S,T)$ can be computed by finding an integral maximum $s$-$t$ flow in the network in (d).  The shaded nodes and bold edges in (d) form a maximum $s$-$t$ flow.  If you ignore $s$, $t$, and the edges incident to them, the remaining shaded nodes and bold edges form an optimal solution of $G_M(S,T)$, which contains one trivial path and one non-trivial path.

\subsubsection{Non-trivial paths and the \texorpdfstring{$\mathbf{\oplus}$}{XOR} operator} 
\label{subsubsec:alternating-path}
Let $\pi$ be a non-trivial path from $\overbar{P}_M$ to $P$ in $G_M$.  If we ignore the directions of edges in $\pi$, then $\pi$ is called an \emph{alternating path} in the matching literature~\cite{PL86}: the first edge of $\pi$ does not belong to $M$, every other edge of $\pi$ belongs to $M$, and $\pi$ has an even number of edges.  We use $M \oplus \pi$ to denote the result of flipping $\pi$, i.e., removing the edges in $\pi \cap M$ from the matching and adding the edges in $\pi \setminus M$ to the matching.  $M \oplus \pi$ is a maximum matching of $G$.  Moreover, $\src(\pi)$ is unmatched in $M$ but it becomes matched in $M \oplus \pi$, and $\sink(\pi)$ is matched in $M$ but it becomes unmatched in $M \oplus \pi$.

We can extend the above operation to any set $\Pi^+$ of node-disjoint non-trivial paths from $\overbar{P}_M$ to $P$ in $G_M$.  $\Pi^+$ can be regarded as a set of edges.  We can form $M \oplus \Pi^+$ as in the previous paragraph, i.e., ignore the directions of edges in $\Pi^+$, remove the edges in $\Pi^+ \cap M$ from the matching, and add the edges in $\Pi^+ \setminus M$ to the matching.  $M \oplus \Pi^+$ is a maximum matching of $G$. Players in $\src(\Pi^+)$ are unmatched in $M$ but they become matched in $M \oplus \Pi^+$, and players in $\sink(\Pi^+)$ are matched in $M$ but they become unmatched in $M \oplus \Pi^+$.

\subsubsection{Feasible solutions of \texorpdfstring{$G_M(S,T)$}{the path problem}} 

The preliminary background above are sufficient for understanding how our algorithm works. However, in order to carry out a rigorous analysis, we have to delve into the feasible solutions of $G_M(S, T)$.

\begin{figure}
	\centering
	\begin{subfigure}[b]{0.2\linewidth}
		\centering
		\includegraphics[scale = 0.7]{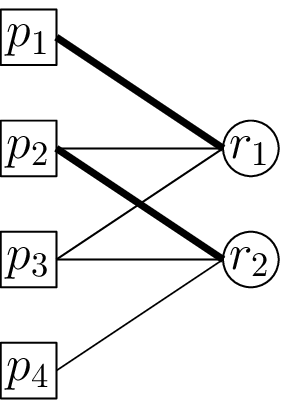}
		\caption{$M$}
	\end{subfigure}
	\begin{subfigure}[b]{0.22\linewidth}
		\centering
		\includegraphics[scale = 0.7]{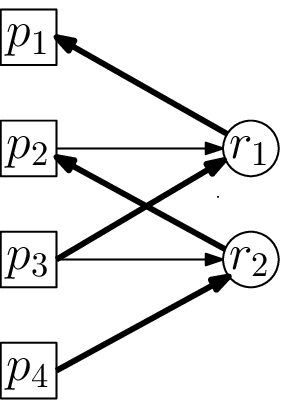}
		\caption{$G_M$ and $\Pi^+$}
	\end{subfigure}
	\begin{subfigure}[b]{0.2\linewidth}
		\centering
		\includegraphics[scale = 0.7]{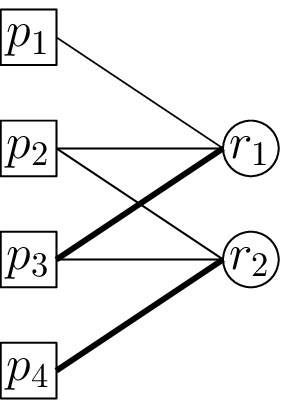}
		\caption{$M\oplus \Pi^+$}
	\end{subfigure}
	\begin{subfigure}[b]{0.2\linewidth}
		\centering
		\includegraphics[scale = 0.7]{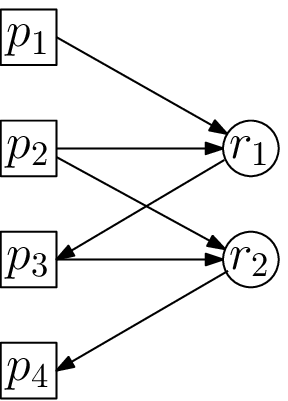}
		\caption{$G_{M\oplus \Pi^+}$}
	\end{subfigure}
	\caption{An illustration of the $\oplus$ operation and the relation between $G_M$ and $G_{M\oplus \Pi^+}$.}
	\label{fg:path-oplus}
\end{figure}

First let's discuss the $\oplus$ operation further.  Let $\Pi^+$ be any set of node-disjoint non-trivial paths from $\overbar{P}_M$ to $P$ in $G_M$.  $M\oplus \Pi^+$ is a maximum matching of $G$.  Now consider $G_{M\oplus \Pi^+}$, the directed bipartite graph defined for the maximum matching $M\oplus \Pi^+$ as in section~\ref{subsubsec:path-problem}.  We claim that $G_{M\oplus \Pi^+}$ can be interpreted as a graph obtained from $G_M$ by reversing the edges used by $\Pi^+$: when edges in $\Pi^+ \cap M$ are removed from the matching and edges in $\Pi^+ \setminus M$ are added to the matching, their counterparts in $G_M$ are reversed. Figure~\ref{fg:path-oplus} gives an example.  In (a), the bold edges form a maximum matching $M$.  (b) shows $G_M$, and $\Pi^+$ consists of the two bold paths, one from $p_3$ to $p_1$ and the other from $p_4$ to $p_2$.   In (c), the bold edges form a maximum matching $M\oplus \Pi^+$ which is obtained from $M$ by flipping the edges in $\Pi^+$. $G_{M \oplus \Pi^+}$ is shown in (d). Comparing (b) and (d), it is easy to see that $G_{M\oplus \Pi^+}$ can be obtained from $G_M$ by reversing the edges in $\Pi^+$.

Now we are ready to establish a few properties for feasible solutions of $G_M(S,T)$ that will be used later in the analysis of our algorithm.  As we explained in section~\ref{subsubsec:find-paths}, computing an optimal solution of $G_M(S,T)$ can be reduced to computing an integral maximum $s$-$t$ flow.  Consequently, feasible solutions of $G_M(S,T)$ have some properties that are similar to those in the max-flow literature.  Claims~\ref{cl:path-opt},~\ref{cl:path-aug} and~\ref{cl:path-reroute} are very much like testing the optimality of a flow, augmenting a flow, and rerouting a flow, respectively.  Recall that for a set $\Pi$ of node-disjoint paths, $\Pi^+$ is the subset of non-trivial paths in $\Pi$.

\begin{claim}
	\label{cl:path-opt}
	Let $\Pi$ be a feasible solution of $G_M(S, T)$.   
	$\Pi$ is an optimal solution of $G_M(S, T)$ if and only if $G_{M\oplus \Pi^+}$ contains no path from $S \setminus \src(\Pi)$ to $T \setminus sink(\Pi)$.
\end{claim}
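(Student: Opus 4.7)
The plan is to reduce the statement to the standard augmenting-path characterization of maximum $s$-$t$ flows, applied to the network $H$ built in Section~\ref{subsubsec:find-paths}. The key point is that feasible solutions of $G_M(S,T)$ are in bijection with integral $s$-$t$ flows in $H$: given $\Pi$, route one unit along $s \to v_1 \to \cdots \to v_k \to t$ for each non-trivial path $v_1\cdots v_k \in \Pi^+$, and one unit along $s \to p \to t$ for each trivial path $p \in \Pi \setminus \Pi^+$. Since the paths in $\Pi$ are node-disjoint and every player has in-degree $\leq 1$ while every resource has out-degree $\leq 1$ in $H$, this defines an integral flow of value $|\Pi|$; conversely, integral $s$-$t$ flows decompose into such path collections by standard flow decomposition. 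Under this bijection, $\Pi$ is optimal iff the associated flow is maximum.

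The next step is to identify $s$-$t$ paths in the residual network of $H$ with paths in $G_{M\oplus\Pi^+}$ from $S\setminus\src(\Pi)$ to $T\setminus\sink(\Pi)$. Three elementary checks suffice: (i) the residual edge $s\to p$ is present iff $p \in S$ is unsaturated by the flow, i.e.\ iff $p \in S\setminus\src(\Pi)$; (ii) symmetrically, $q \to t$ is residual iff $q \in T\setminus\sink(\Pi)$; (iii) on the $G_M$-part, an edge of $G_M$ is used by some path in $\Pi^+$ iff it appears reversed in the residual, and is unused iff it appears with its original orientation. By the interpretation established in Section~\ref{subsubsec:path-problem} (and illustrated in Figure~\ref{fg:path-oplus}), $G_{M\oplus\Pi^+}$ is obtained from $G_M$ by reversing exactly the edges in $\Pi^+$, so item (iii) says the $G_M$-part of the residual coincides with $G_{M\oplus\Pi^+}$.

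Combining the bijection with the standard fact that an integral flow is maximum iff its residual network has no $s$-$t$ path immediately yields both directions of the claim. The forward direction says that if $\Pi$ is optimal, no augmenting residual path exists, hence no path from $S\setminus\src(\Pi)$ to $T\setminus\sink(\Pi)$ exists in $G_{M\oplus\Pi^+}$; the reverse direction says that the existence of such a path yields a residual augmenting path and thus a strictly larger feasible solution, contradicting optimality.

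The only mildly delicate point is bookkeeping for trivial paths: a trivial path at a vertex $p \in S\cap T$ saturates both $s\to p$ and $p\to t$ in $H$ while contributing no edge to $G_M$, which is precisely why $p$ must be removed from both $S$ and $T$ (via $\src(\Pi)$ and $\sink(\Pi)$) when searching for augmenting paths. Once this case is handled explicitly in the bijection, the rest of the argument is routine, so I expect no genuine obstacle beyond writing the endpoint cases carefully.
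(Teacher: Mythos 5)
Your proposal is correct and follows essentially the same route as the paper: reduce $G_M(S,T)$ to a unit-capacity $s$-$t$ flow in $H$, observe that the residual graph of $H$ (minus $s$, $t$, and their incident edges) is exactly $G_{M\oplus\Pi^+}$, and invoke the augmenting-path criterion for max flow. One small nitpick: calling the correspondence between feasible $\Pi$ and integral flows a ``bijection'' is a slight overstatement, since an integral flow may contain unit cycles that contribute nothing to a path collection --- the paper sidesteps this by noting such cycles can simply be discarded when extracting node-disjoint paths, and your argument only needs that direction anyway.
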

\begin{proof}
	Let $H$ be the $s$-$t$ flow network constructed from $G_M$ as in section~\ref{subsubsec:find-paths}.  One can extend $\Pi$ to a flow in $H$ with value $|\Pi|$ by pushing unit flows from $s$ to $\src(\Pi)$, along $\Pi$, and then from $\sink(\Pi)$ to $t$.  Denote this flow by $F$.  Let $H_F$ be the residual graph of $H$ with respect to $F$.  $H_F$ can be obtained from $H$ by reversing the edges used by $F$.  Recall that $G_{M\oplus \Pi^+}$ can be obtained from $G_M$ by reversing the edges in $\Pi^+$.  Hence, if you ignore the $s$, $t$, and the edges incident to them, the remaining of the residual graph $H_F$ is exactly $G_{M\oplus \Pi^+}$. If there is a path $\pi$ in $G_{M\oplus \Pi^+}$ from $S \setminus \src(\Pi)$ to $T \setminus sink(\Pi)$, then the concatenation $s \cdot \pi \cdot t$ is a path in the residual graph $H_F$.  This means that we can augment $F$ using $s \cdot \pi \cdot t$ to increase the flow value and obtain more node-disjoints paths from $S$ to $T$ in $G_M$. (The augmentation may produce some unit-flow cycle(s) in $H$, and such cycles can be simply ignored when extracting the node-disjoint paths in $G_M$ from $S$ to $T$.)   If such a path $\pi$ does not exist, then $F$ is a maximum flow which proves the optimality of $\Pi$.
\end{proof}

The proof of Claim~\ref{cl:path-opt} immediately implies Claim~\ref{cl:path-aug}.

\begin{claim}
	\label{cl:path-aug}
	Let $\Pi$ be a feasible solution of $G_M(S, T)$.  Suppose that $G_{M\oplus \Pi^+}$ contains a path $\pi$ from $S \setminus \src(\Pi)$ to $T \setminus \sink(\Pi)$.  We can use $\pi$ to augment $\Pi$ to a feasible solution $\Pi'$ of $G_M(S, T)$ such that $|\Pi'| = |\Pi| + 1$, the vertex set  of $\Pi'$ is a subset of the vertices in $\Pi\cup\{\pi\}$,  $\src(\Pi') = \src(\Pi) \cup \{\src(\pi)\}$, and $\sink(\Pi') = \sink(\Pi) \cup \{\sink(\pi)\}$.
\end{claim}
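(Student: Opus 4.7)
The plan is to piggyback on the flow-based reduction already set up in the proof of Claim~\ref{cl:path-opt}. First I would extend $\Pi$ to an integral $s$-$t$ flow $F$ of value $|\Pi|$ in the auxiliary network $H$: for each $\sigma \in \Pi$, push one unit of flow along $s \cdot \sigma \cdot t$ (a trivial path $p \in S \cap T$ corresponds to routing $s \to p \to t$). By construction, the residual graph $H_F$, with $s$, $t$, and their incident edges removed, coincides with $G_{M \oplus \Pi^+}$, exactly as argued for Claim~\ref{cl:path-opt}.

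Next, since $\src(\pi) \in S \setminus \src(\Pi)$ and $\sink(\pi) \in T \setminus \sink(\Pi)$, the edges $(s, \src(\pi))$ and $(\sink(\pi), t)$ are unsaturated by $F$ and therefore survive in $H_F$. Concatenating them with $\pi$ gives an $s$-$t$ augmenting path in $H_F$; performing the augmentation yields an integral flow $F'$ of value $|\Pi| + 1$. Flow-decomposing $F'$, and discarding any unit cycles that the augmentation may leave behind (as already noted in the proof of Claim~\ref{cl:path-opt}), produces $|\Pi| + 1$ node-disjoint $s$-$t$ paths in $H$; stripping off the endpoints $s, t$ gives the desired $\Pi'$ with $|\Pi'| = |\Pi| + 1$.

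It remains to verify the three structural assertions. For the source identity, observe that a vertex $p \in S$ lies in $\src(\Pi')$ iff the edge $(s, p)$ carries one unit of flow under $F'$; the augmentation adds one such unit at $p = \src(\pi)$ and preserves it for every $p \in \src(\Pi)$, since the augmenting path does not touch any other $(s, \cdot)$ edge. The sink identity is symmetric. For the vertex-containment property, the support of $F'$ (as a set of directed edges) is contained in the symmetric difference of the supports of $F$ and of $s \cdot \pi \cdot t$, so every vertex visited by $F'$ already appears on some path in $\Pi$ or on $\pi$, and flow-decomposition cannot introduce new vertices.

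The main obstacle I anticipate is precisely this last bookkeeping step: the augmenting path $\pi$ may traverse players or resources already used by paths of $\Pi$, so the fragments could recombine in unintuitive ways during decomposition, and one also has to argue carefully that the cycles we discard do not drag in extra vertices of the claim. Keeping the argument at the level of the edge support of $F'$ (rather than of individual paths) makes this transparent and lets one conclude $V(\Pi') \subseteq V(\Pi) \cup V(\pi)$.
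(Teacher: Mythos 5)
Your proposal is correct and follows essentially the same route the paper takes: the paper simply remarks that Claim~\ref{cl:path-opt}'s proof (extend $\Pi$ to a unit flow $F$, observe $H_F$ minus $s,t$ equals $G_{M\oplus\Pi^+}$, and augment along $s\cdot\pi\cdot t$) immediately implies Claim~\ref{cl:path-aug}. You have merely spelled out the bookkeeping on $\src$, $\sink$, and the vertex containment via the edge-support symmetric difference, which is exactly what the paper leaves implicit.
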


\begin{figure}
	\centering
	\begin{subfigure}[t]{0.18\linewidth}
		\centering
		\includegraphics[scale = 0.5]{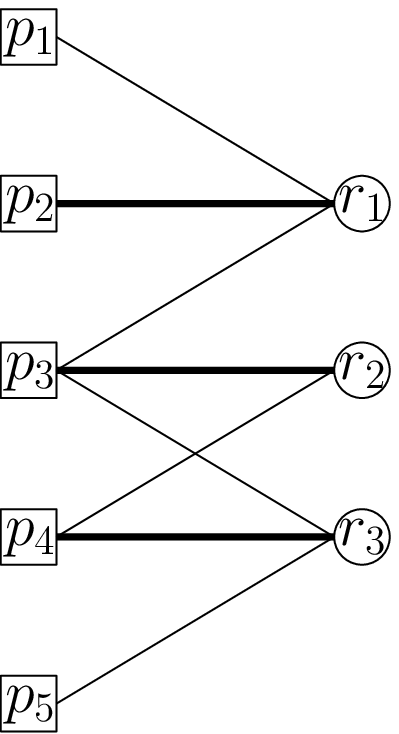}
		\caption{$M$}
	\end{subfigure}
	\begin{subfigure}[t]{0.25\linewidth}
		\centering
		\includegraphics[scale = 0.5]{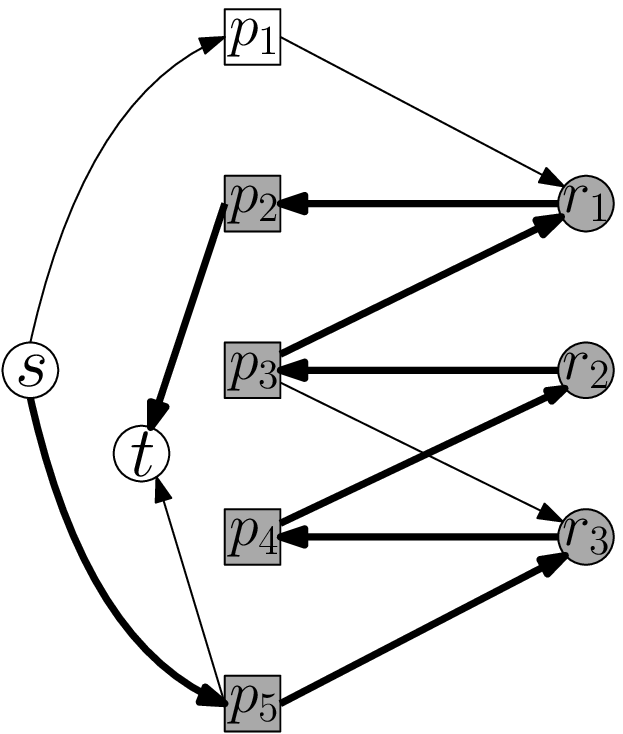}
		\caption{The $s$-$t$ flow network obtained from $G_M$}
	\end{subfigure}
	\begin{subfigure}[t]{0.25\linewidth}
		\centering
		\includegraphics[scale = 0.5]{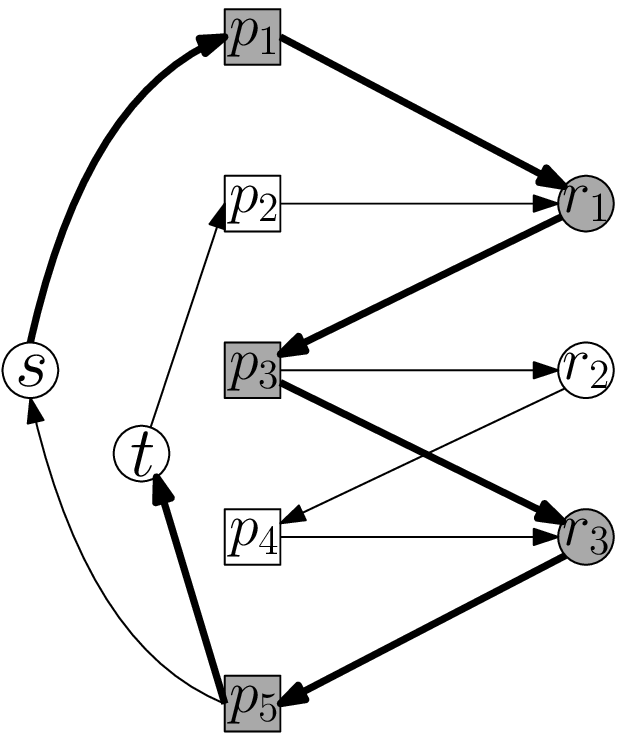}
		\caption{Residual graph}
	\end{subfigure}
	\begin{subfigure}[t]{0.25\linewidth}
		\centering
		\includegraphics[scale = 0.5]{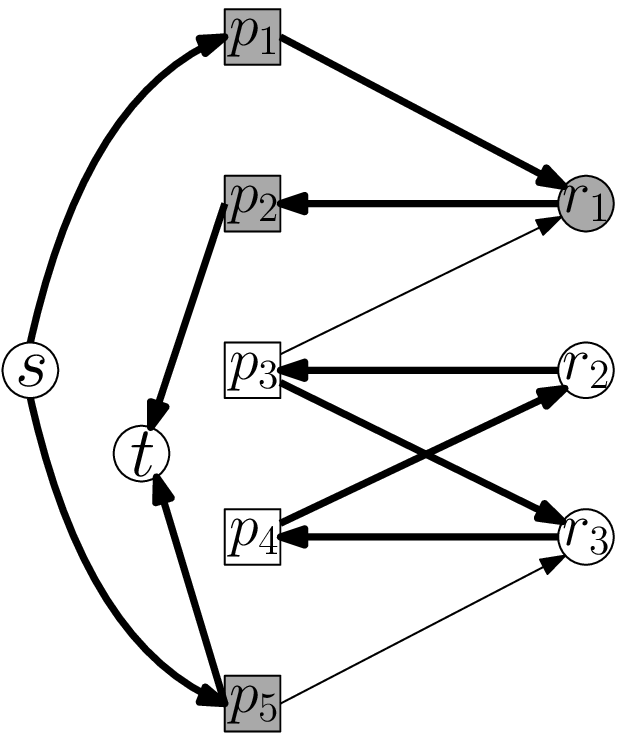}
		\caption{Augmented flow}
	\end{subfigure}
	\caption{An example of the proof of Claim~\ref{cl:path-opt} with $S = \{p_1, p_5\}$
	and $T = \{p_2, p_5\}$}
	\label{fg:path-aug}
\end{figure}

Figure~\ref{fg:path-aug} illustrates the proof of Claim~\ref{cl:path-opt}.  The maximum matching $M$ consists of the bold edges in (a).  In (b), the bold edges form an $s$-$t$ flow $F$.  The bold edges other than those incident to $s$ and $t$ form a feasible solution $\Pi$ of $G_M(S,T)$, which consist of a single path from $p_5$ to $p_2$. Note that $\Pi^+ = \Pi$ in this case.  The residual graph of the flow network in (b) with respect to $F$ is shown in (c). If we ignore $s$ and $t$ in (c), the subgraph is exactly $G_{M\oplus \Pi^+}$. The bold edges form an augmenting path $s \cdot \pi \cdot t$, where $\pi$ is a path in $G_{M\oplus \Pi^+}$. In (d), the bold edges form an $s$-$t$ flow $F'$ which is obtained from $F$ by augmenting along $s \cdot \pi \cdot t$.  $F'$ naturally induces a set $\Pi'$ of two node-disjoint paths from $S$ to $T$ in $G_M$: one trivial path from $p_5$ to itself and one non-trivial path from $p_1$ to $p_2$. The cycle $p_3r_3p_4r_2$ in $F'$ is ignored. One can check that $\Pi$ and $\Pi'$ satisfy Claim~\ref{cl:path-aug}.

\begin{claim}
	\label{cl:path-reroute}
	Let $\Pi$ be a feasible solution of $G_M(S, T)$.  Suppose that there is a non-trivial path $\pi$ in $G_{M\oplus \Pi^+}$ from $\sink(\Pi)$ to $T$.  Then it must be that $\sink(\pi) \notin \sink(\Pi)$, and we can use $\pi$ to convert $\Pi$ to another feasible solution $\Pi'$ of $G_M(S, T)$ such that $|\Pi'| = |\Pi|$, the vertex set of $\Pi'$ is a subset of the vertices in $\Pi\cup\{\pi\}$, $\src(\Pi') = \src(\Pi)$, and $\sink(\Pi') = \left(\sink(\Pi)\setminus \{\src(\pi)\}\right) \cup \{\sink(\pi)\}$.
\end{claim}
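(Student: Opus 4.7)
The plan is to mirror the flow-network argument used in the proof of Claim~\ref{cl:path-opt}: extend $\Pi$ to the $s$-$t$ flow $F$ of value $|\Pi|$ in the network $H$, and work in the residual network $H_F$, which (after deleting $s$, $t$, and their incident edges) coincides with $G_{M \oplus \Pi^+}$. In Claim~\ref{cl:path-aug}, an $s$-to-$t$ path in $H_F$ was used to augment the flow. Here, because both endpoints of $\pi$ lie on the sink side of $T$, the corresponding structure in $H_F$ will be a directed cycle through $t$, and pushing one unit of flow around this cycle will reroute flow through $t$ without changing the flow value, which is exactly the effect we want on $\Pi$.

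First I would establish that $\sink(\pi) \notin \sink(\Pi)$ by showing that every vertex $p \in \sink(\Pi)$ has no incoming edges in $G_{M \oplus \Pi^+}$. If $p$ is the sink of a trivial path, then $p \in \overbar{P}_M$ is unmatched in $M$, and since $p$ does not lie on any path in $\Pi^+$, it remains unmatched in $M \oplus \Pi^+$, so it has no incoming edges in $G_{M \oplus \Pi^+}$. If $p$ is the sink of a non-trivial path $\pi' \in \Pi^+$, then the final edge of $\pi'$ is the unique $M$-edge at $p$; the $\oplus$ operation removes this edge from the matching while adding no new $M$-edge at $p$, so $p$ is unmatched in $M \oplus \Pi^+$ and again has no incoming edges in $G_{M \oplus \Pi^+}$. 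Since $\pi$ is non-trivial, its last vertex must have at least one incoming edge in $G_{M \oplus \Pi^+}$, forcing $\sink(\pi) \notin \sink(\Pi)$.

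Next I would construct the rerouted flow. Because $\src(\pi) \in \sink(\Pi)$, the edge $\src(\pi) \to t$ is saturated in $F$, so $H_F$ contains the reverse edge $t \to \src(\pi)$; because $\sink(\pi) \in T \setminus \sink(\Pi)$, the forward edge $\sink(\pi) \to t$ has unit residual capacity in $H_F$. Concatenating $t \to \src(\pi)$, the path $\pi$ (which sits inside $G_{M \oplus \Pi^+} \subseteq H_F$), and $\sink(\pi) \to t$ gives a directed cycle through $t$ in $H_F$, and pushing one unit of flow around it yields a flow $F'$ with $|F'| = |F|$ in which $\src(\pi) \to t$ is no longer used, $\sink(\pi) \to t$ is newly used, and every edge out of $s$ carries exactly the same flow as before. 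Decomposing $F'$ into $s$-$t$ paths and discarding any unit-flow cycles produced by the augmentation (exactly as in the proof of Claim~\ref{cl:path-opt}) delivers the desired $\Pi'$: its source set is $\src(\Pi)$, its sink set is $(\sink(\Pi)\setminus\{\src(\pi)\})\cup\{\sink(\pi)\}$, and its vertex set is contained in that of $\Pi \cup \{\pi\}$ because only edges on $\Pi$ and $\pi$ are touched by the augmentation.

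The only subtle step is the first one: one has to exploit the effect of the $\oplus$ operation at the terminal vertex of each path in $\Pi$ to see that these vertices have in-degree zero in $G_{M \oplus \Pi^+}$. Everything after that is a routine adaptation of the residual-flow machinery already employed in the proofs of Claims~\ref{cl:path-opt} and~\ref{cl:path-aug}.
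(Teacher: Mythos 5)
Your proposal is correct and follows essentially the same route as the paper: first argue that every node in $\sink(\Pi)$ has zero in-degree in $G_{M\oplus\Pi^+}$ (so $\sink(\pi)\notin\sink(\Pi)$), then lift $\pi$ to a directed cycle $t\cdot\pi\cdot t$ in the residual graph $H_F$, push a unit of flow around it, and read off $\Pi'$ from the resulting flow. The only difference is presentational — you spell out the in-degree argument separately for trivial and non-trivial sinks — but the structure and the key ideas match the paper's proof.
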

\begin{proof}
	Every node in $\sink(\Pi)$ is unmatched in $M\oplus \Pi^+$, and hence has zero in-degree in $G_{M\oplus \Pi^+}$.  Therefore, they cannot be the sink of a non-trivial path in $G_M$.  $\sink(\pi) \notin \sink(\Pi)$.  

	As in the proof of Claim~\ref{cl:path-opt}, let $H$ be the $s$-$t$ flow network constructed from $G_M$, let $F$ be the flow in $H$ corresponding to $\Pi$, and let $H_F$ be the residual graph of $H$ with respect to $F$.  Since $G_{M\oplus \Pi^+}$ is a subgraph of $H_F$, the path $\pi$ is also a path in $H_F$ from a player in $\sink(\Pi)$ to a player in $T$.  Since $\src(\pi) \in \sink(\Pi)$, there is an edge directed from $t$ to $\src(\pi)$ in $H_F$.  Since $\sink(\pi) \notin \sink(\Pi)$ and $\sink(\pi) \in T$, there is an edge directed from $\sink(\pi)$ to $t$ in $H_F$.  Therefore, $t \cdot \pi \cdot t$ is a cycle in $H_F$.  We update $F$ to another flow $F'$ by sending a unit flow along $t \cdot \pi \cdot t$.  After removing all cycle(s) of flows in $F'$ and removing all edges in $F'$ incident to $s$ and $t$, we obtain a set $\Pi'$ of node-disjoint paths from $S$ to $T$ in $G_M$.

	Since sending flow around a cycle does not change the total flow, the values of $F$ and $F'$ are equal, implying that $|\Pi| = |\Pi'|$.  By sending the unit flow around $t \cdot \pi \cdot t$, we do not update the flow on directed edges incident to $s$ in $H$.  Thus, every player who received a unit flow from $s$ before the update still receives a unit flow from $s$ afterwards, so $\src(\Pi') = \src(\Pi)$.  Since we push a flow from $t$ to $\src(\pi) \in \sink(\Pi)$, $\src(\pi)$ no longer sends a unit flow to $t$ in $H$, and is no longer a sink after the update.  As we push a flow from $\sink(\pi)\in T$ to $t$, $\sink(\pi)$ becomes a new sink.  All the other sinks are not affected.   We conclude that $\sink(\Pi') = \left(\sink(\Pi) \setminus \{\src(\pi)\} \right) \cup \{\sink(\pi)\}$.
\end{proof}

\begin{figure}
	\centering
	\begin{subfigure}[t]{0.18\linewidth}
		\centering
		\includegraphics[scale = 0.5]{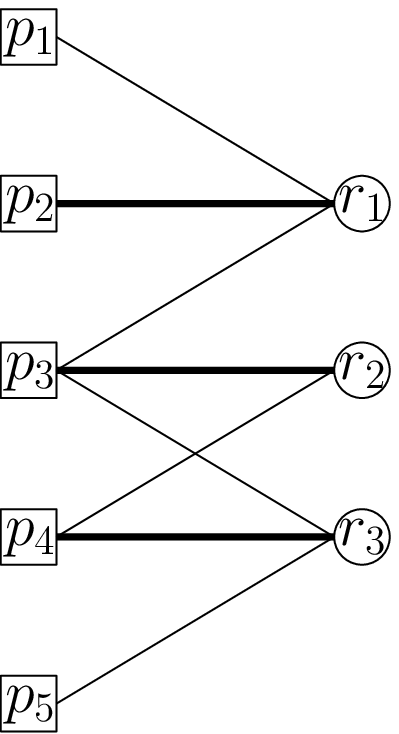}
		\caption{$M$}
	\end{subfigure}
	\begin{subfigure}[t]{0.25\linewidth}
		\centering
		\includegraphics[scale = 0.5]{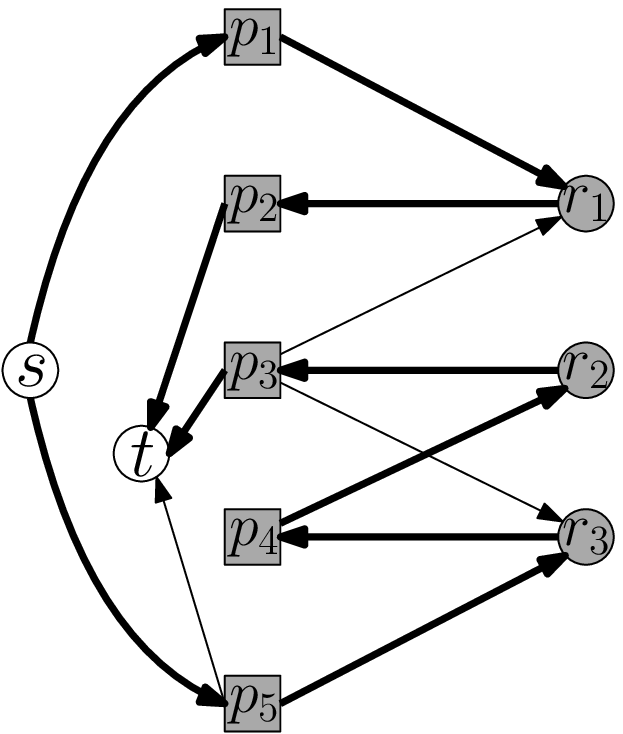}
		\caption{The $s$-$t$ flow network\\obtained from $G_M$}
	\end{subfigure}
	\begin{subfigure}[t]{0.25\linewidth}
		\centering
		\includegraphics[scale = 0.5]{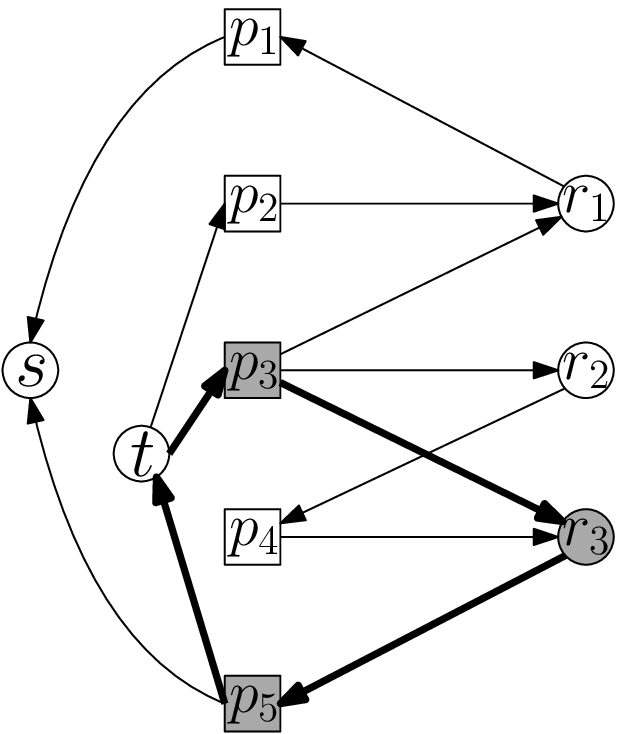}
		\caption{residual graph}
	\end{subfigure}
	\begin{subfigure}[t]{0.25\linewidth}
		\centering
		\includegraphics[scale = 0.5]{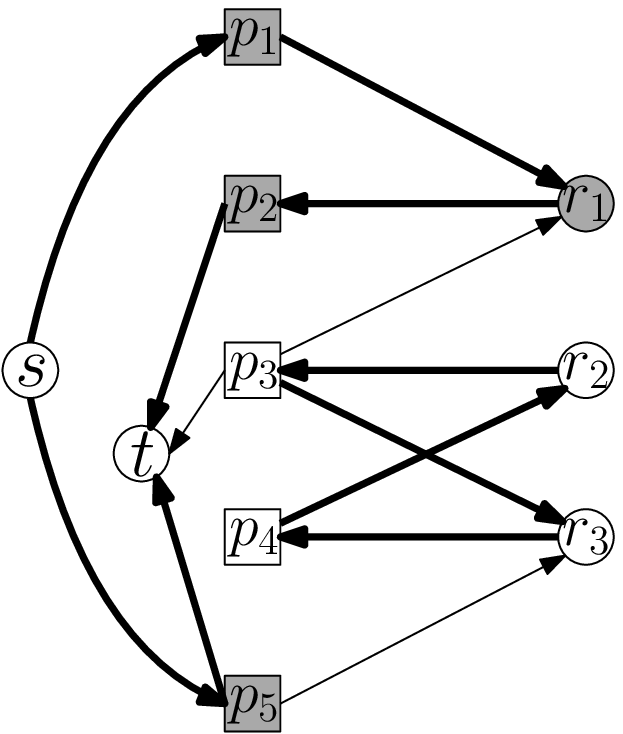}
		\caption{updated flow}
	\end{subfigure}
	\caption{An example of the proof of Claim~\ref{cl:path-reroute} with $S = \{p_1, p_5\}$
	and $T = \{p_2, p_3, p_5\}$}
	\label{fg:path-reroute}
\end{figure}

Figure~\ref{fg:path-reroute} gives an example of the proof of Claim~\ref{cl:path-reroute}.  In (a), $M$ consists of the bold edges. In (b), the bold edges form a flow $F$.  The bold edges other than those incident to $s$ and $t$ form a feasible solution $\Pi$ of $G_M(S,T)$ consisting of two paths: one from $p_1$ to $p_2$ and the other from $p_5$ to $p_3$.  Note that $\Pi^+ = \Pi$.  The residual graph of the flow network in (b) with respect to $F$ is shown in (c). The subgraph of the residual graph that excludes $s$ and $t$ is exactly $G_{M\oplus \Pi^+}$. The bold edges form a cycle $t\cdot \pi \cdot t$ where $\pi$ is a path in $G_{M\oplus \Pi^+}$. In (d), the bold edges form an $s$-$t$ flow $F'$ which is obtained from $F$ by pushing a unit flow along $t \cdot \pi \cdot t$. $F'$ induces a set $\Pi'$ of two node-disjoint paths from $S$ to $T$ in $G_M$: a trivial one from $p_5$ to itself and a non-trivial one from $p_1$ to $p_2$. The cycle $p_3r_3p_4r_2$ in $F'$ is ignored. $\Pi$ and $\Pi'$ satisfy Claim~\ref{cl:path-reroute}.

\subsubsection{More properties} 

We derive some relations between $f_M(S, T)$'s for different choices of $M$, $S$, and $T$.

\begin{claim}
	\label{cl:equal-path-num}
	For any maximum matchings $M$ and $M'$ of $G$, 
	\begin{emromani}
		\item $f_M(\overbar{P}_M,\overbar{P}_{M'}) = |\overbar{P}_{M}| = |\overbar{P}_{M'}|$, and
		\item for every subset $T$ of players, $f_M(\overbar{P}_M,T) = f_{M'}(\overbar{P}_{M'},T)$.
	\end{emromani}
\end{claim}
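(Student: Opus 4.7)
My plan is to prove both parts simultaneously by establishing the formula
\[
f_M(\overbar{P}_M, T) \; = \; \max \bigl\{ \, |\overbar{P}_{M''} \cap T| \, : \, M'' \text{ is a maximum matching of } G \, \bigr\}
\]
for every maximum matching $M$ and every $T \subseteq P$. Since the right-hand side depends only on $T$, part (ii) is immediate. For part (i), $|\overbar{P}_M| = |\overbar{P}_{M'}|$ follows from $|M| = |M'|$, and then applying the formula with $T = \overbar{P}_{M'}$ shows $f_M(\overbar{P}_M, \overbar{P}_{M'}) = |\overbar{P}_{M'}|$, with the maximum realised by the choice $M'' = M'$.

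For the inequality $\leq$, I take any optimal $\Pi$ for $G_M(\overbar{P}_M, T)$ and apply the $\oplus$ operation of Section~\ref{subsubsec:alternating-path} to obtain the maximum matching $M'' := M \oplus \Pi^+$. Its set of unmatched players is $(\overbar{P}_M \setminus \src(\Pi^+)) \cup \sink(\Pi^+)$, and the sinks of the trivial paths of $\Pi$ (which already lie in $\overbar{P}_M \cap T$ by definition of a trivial path) survive inside $\overbar{P}_{M''}$ as well. Putting these together gives $\sink(\Pi) \subseteq \overbar{P}_{M''} \cap T$, hence $|\Pi| = |\sink(\Pi)| \leq |\overbar{P}_{M''} \cap T|$.

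For the reverse inequality, I fix any maximum matching $M''$ and use the symmetric difference $M \oplus M''$ to construct node-disjoint paths in $G_M$ from $\overbar{P}_M$ to the players of $\overbar{P}_{M''} \cap T$. For each $p \in \overbar{P}_{M''} \cap T$ with $p \in \overbar{P}_M$ I take the trivial path at $p$. Otherwise $p$ is matched in $M$ but unmatched in $M''$, so it is an endpoint of a player-to-player alternating path of $M \oplus M''$ whose other endpoint $p'$ must be matched in $M''$ but unmatched in $M$, so $p' \in \overbar{P}_M$.

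The main technical point is that this alternating path, traversed from $p'$ to $p$, is a \emph{directed} path in $G_M$. The first edge at $p'$ must lie in $M'' \setminus M$ (else $p'$ would be matched in $M$), so by the orientation rule of Section~\ref{subsubsec:path-problem} it is directed out of $p'$; the alternating pattern then forces the remaining edges to be oriented consistently along the path, ending with an $M$-edge entering $p$. Node-disjointness follows because the alternating paths of $M \oplus M''$ are themselves vertex-disjoint, and every internal vertex of such a path has degree two in $M \oplus M''$ (hence is matched by both $M$ and $M''$) and so cannot coincide with any trivial path endpoint in $\overbar{P}_M$. Carrying out these direction and disjointness checks cleanly is what I expect to be the most delicate step of the argument.
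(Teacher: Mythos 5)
Your argument is correct and uses the same two ingredients as the paper's proof, but organized around a cleaner intermediate statement.  You isolate the matching-independent characterization $f_M(\overbar{P}_M,T) = \max_{M''}|\overbar{P}_{M''} \cap T|$ (maximum over all maximum matchings $M''$ of $G$), from which (ii) is immediate and (i) is the special case $T = \overbar{P}_{M'}$; the paper instead proves (i) first from the alternating-path decomposition of $M \oplus M'$ and then derives (ii) by forming $M'' = M \oplus \Pi^+$ from an optimal $\Pi$, applying (i) to $M'$ and $M''$, and discarding the paths that end outside $\sink(\Pi) \subseteq T$.  Under the hood the computations coincide: your $\leq$ direction is exactly the paper's observation that $\sink(\Pi) \subseteq \overbar{P}_{M\oplus\Pi^+}$, and your $\geq$ direction is the paper's proof of (i) restricted to $\overbar{P}_{M''}\cap T$.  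The direction and disjointness checks you flag as delicate are indeed the crux and your sketch of them is sound; it is worth stating explicitly that for $p \in \overbar{P}_{M''}\cap T$ with $p \notin \overbar{P}_M$, the alternating path has even length, starts at $p$ with an $M$-edge, and therefore ends at a player $p' \in \overbar{P}_M \setminus \overbar{P}_{M''}$ via an $M''$-edge---this not only gives the correct orientation in $G_M$ when traversed from $p'$ to $p$, but also shows that two distinct players of $\overbar{P}_{M''}\cap T$ can never be the two ends of the same alternating path, which is what makes the collection of chosen paths node-disjoint.
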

\begin{proof}
	We first prove (i).  Consider the symmetric difference $M\oplus M'$. It consists of cycles and alternating paths of even lengths~\cite{HK73}.  All these alternating paths are node-disjoint and appear as directed paths in $G_M$.  Since these paths have even lengths, they are either from players to players or from resources to resources.  Any node in $\overbar{P}_M \setminus \overbar{P}_{M'}$ (i.e., matched by $M'$ but not by $M$) must be an endpoint of some alternating path, and the other endpoint of the path must be a node in $\overbar{P}_{M'}\setminus \overbar{P}_{M}$ (i.e., matched by $M$ but not by $M'$ ). Any node in $\overbar{P}_M \cap \overbar{P}_{M'}$ has no incident edge in $M \oplus M'$, so it is a trivial path.  Putting things together, there are node-disjoint paths (trivial or non-trivial) in $G_M$ from all nodes in $\overbar{P}_M$ to $\overbar{P}_{M'}$.	 So $f_M(\overbar{P}_M, \overbar{P}_{M'}) = |\overbar{P}_M|$. 

	Next we prove (ii).  Let $\Pi$ be an optimal solution of $G_M(\overbar{P}_M,T)$.  Let $M''$ be the maximum matching obtained from $M$ by flipping the alternating paths in $\Pi^+$, i.e., $M'' = M \oplus \Pi^+$.  After flipping the alternating paths,  players in $\src(\Pi^+)$ become matched and players in $\sink(\Pi^+)$ become unmatched.  Thus, $\overbar{P}_{M''} = (\overbar{P}_M\setminus \src(\Pi^+)) \cup \sink(\Pi^+) = (\bar{P}_M\setminus \src(\Pi)) \cup \sink(\Pi)$. The last step is due to the fact that each trivial path is a single vertex in $\overbar{P}_M$ which serves as a source and a sink simultaneously.
	So $\sink(\Pi) \subseteq \overbar{P}_{M''}$.  By (i), $f_{M'}(\overbar{P}_{M'}, \overbar{P}_{M''}) = |\overbar{P}_{M'}| = |\overbar{P}_{M''}|$, which implies that $f_{M'}(\bar{P}_{M'}, \sink(\Pi)) = |\Pi|$  (Just take an optimal solution of $G_{M'}(\overbar{P}_{M'},\overbar{P}_{M''})$ and delete the paths ending at $\overbar{P}_{M''}\setminus \sink(\Pi)$).  As $\sink(\Pi) \subseteq T$ by definition, $f_{M'}(\overbar{P}_{M'},T) \geq f_{M'}(\overbar{P}_{M'},\sink(\Pi)) = |\Pi|$.  Recall that $\Pi$ is an optimal solution of $G_M(\overbar{P}_M,T)$, so $f_M(\overbar{P}_M, T) = |\Pi| \leq f_{M'}(\overbar{P}_{M'},T)$. We can similarly prove the other direction that $f_M(\overbar{P}_M,T) \geq f_{M'}(\overbar{P}_{M'},T)$.
\end{proof}

Claim~\ref{cl:more-path} below states that if adding a player $p$ to $T$ increases $f_M(S,T)$, adding $p$ to any subset $T' \subset T$ increases $f_M(S,T')$ too.

\begin{claim}
	\label{cl:more-path}
	Let $M$ be a maximum matching of $G$.  Let $S$ be any subset of $\overbar{P}_M$. Let $T$ be any subset of $P$. Let $p$ be an arbitrary player in $P$. If $f_M(S,T \cup \{p\}) = f_M(S,T) + 1$, then for every $T' \subseteq T$, $f_M(S,T'\cup \{p\}) = f_M(S,T') + 1$.
\end{claim}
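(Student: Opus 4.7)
The inequality $f_M(S, T' \cup \{p\}) \leq f_M(S, T') + 1$ is immediate, since enlarging the target set by one vertex raises the number of node-disjoint paths by at most one. I aim to establish the reverse inequality. The plan is to argue by contrapositive: assume $f_M(S, T' \cup \{p\}) = f_M(S, T')$ and derive $f_M(S, T \cup \{p\}) = f_M(S, T)$, contradicting the hypothesis.

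I would begin by fixing an optimal solution $\Pi^{T'}$ of $G_M(S, T')$ with $|\Pi^{T'}| = f_M(S, T')$. Since $\sink(\Pi^{T'}) \subseteq T' \subseteq T' \cup \{p\}$, $\Pi^{T'}$ is also feasible for $G_M(S, T' \cup \{p\})$; under the contrapositive assumption the two optima agree, so $\Pi^{T'}$ is in fact optimal for $G_M(S, T' \cup \{p\})$. Claim~\ref{cl:path-opt} applied in $G_M(S, T' \cup \{p\})$ then shows that $G_{M \oplus (\Pi^{T'})^+}$ has no path from $S \setminus \src(\Pi^{T'})$ to $p$.

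Starting from $\Pi_0 := \Pi^{T'}$, I would iteratively apply Claim~\ref{cl:path-aug} inside $G_M(S, T)$ to obtain $\Pi_1, \Pi_2, \dots, \Pi_k$ with $|\Pi_j| = f_M(S, T') + j$ and $\Pi_k$ optimal for $G_M(S, T)$. The central invariant, maintained by induction on $j$, is that $G_{M \oplus \Pi_j^+}$ contains no path from $S \setminus \src(\Pi_j)$ to $p$. Once the invariant is verified for $\Pi_k$, combining it with Claim~\ref{cl:path-opt} applied to $\Pi_k$ in $G_M(S, T)$ (which rules out any path to $T \setminus \sink(\Pi_k)$) gives that $G_{M \oplus \Pi_k^+}$ contains no path from $S \setminus \src(\Pi_k)$ to $(T \cup \{p\}) \setminus \sink(\Pi_k)$. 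Claim~\ref{cl:path-opt} applied in $G_M(S, T \cup \{p\})$ then certifies that $\Pi_k$ is also optimal for $G_M(S, T \cup \{p\})$, so $f_M(S, T \cup \{p\}) = f_M(S, T)$, which is the desired contradiction.

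The hard part will be the inductive step of the invariant. Given the invariant at $\Pi_j$, the augmenting path $\pi_j$ supplied by Claim~\ref{cl:path-aug} cannot pass through $p$, otherwise the prefix of $\pi_j$ ending at $p$ would violate the invariant. Since $G_{M \oplus \Pi_{j+1}^+}$ differs from $G_{M \oplus \Pi_j^+}$ only by reversing the edges of $\pi_j$ (as discussed around Figure~\ref{fg:path-oplus}), any purported path $\sigma$ from $S \setminus \src(\Pi_{j+1})$ to $p$ in $G_{M \oplus \Pi_{j+1}^+}$ must traverse at least one reversed edge of $\pi_j$: otherwise it would already be a path in $G_{M \oplus \Pi_j^+}$ contradicting the invariant at step $j$. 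Viewing $\sigma$ and $\pi_j$ as two unit flows in $G_{M \oplus \Pi_j^+}$ from sources $\{\src(\pi_j), \src(\sigma)\}$ to sinks $\{\sink(\pi_j), p\}$ and cancelling edges traversed in opposite directions, a residual-flow decomposition (in the spirit of the proofs of Claims~\ref{cl:path-opt}--\ref{cl:path-reroute}) produces two node-disjoint paths in $G_{M \oplus \Pi_j^+}$ from these sources to these sinks. The path that terminates at $p$ has its source in $\{\src(\pi_j), \src(\sigma)\} \subseteq S \setminus \src(\Pi_j)$, contradicting the invariant at step $j$.
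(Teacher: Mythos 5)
Your contrapositive plan (establishing and propagating the invariant that $G_{M\oplus\Pi_j^+}$ has no path from $S\setminus\src(\Pi_j)$ to $p$) is a genuinely different and far heavier route than the paper's. The paper's proof is direct and takes four lines: let $\Pi_1$ be an optimal solution of $G_M(S,T')$; since $\sink(\Pi_1)\subseteq T'\subseteq T\cup\{p\}$, $\Pi_1$ is feasible for $G_M(S,T\cup\{p\})$, so augment it via Claim~\ref{cl:path-aug} to an optimal $\Pi_2$ with $\sink(\Pi_1)\subseteq\sink(\Pi_2)$. If $p\notin\sink(\Pi_2)$, then $\Pi_2$ is feasible for $G_M(S,T)$, so $f_M(S,T\cup\{p\})=|\Pi_2|\leq f_M(S,T)$, contradicting the hypothesis; hence $p\in\sink(\Pi_2)$, and the $|\Pi_1|+1$ node-disjoint paths of $\Pi_2$ ending in $\sink(\Pi_1)\cup\{p\}\subseteq T'\cup\{p\}$ show $f_M(S,T'\cup\{p\})\geq f_M(S,T')+1$. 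No residual-graph invariant is needed at all.

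Beyond being the long way round, your inductive step contains a real gap. You assert that $G_{M\oplus\Pi_{j+1}^+}$ "differs from $G_{M\oplus\Pi_j^+}$ only by reversing the edges of $\pi_j$," citing the discussion around Figure~\ref{fg:path-oplus}. That discussion establishes only the one-step relation between $G_M$ and $G_{M\oplus\Pi^+}$; it does not iterate across successive augmentations, because the augmentation in Claim~\ref{cl:path-aug} is implemented by pushing a unit flow along $s\cdot\pi_j\cdot t$ and then \emph{discarding any resulting unit-flow cycles} (as in Figure~\ref{fg:path-aug}(d), where the cycle $p_3r_3p_4r_2$ is dropped). Consequently $M\oplus\Pi_{j+1}^+$ and $(M\oplus\Pi_j^+)\oplus\pi_j$ can differ by an alternating cycle, so $G_{M\oplus\Pi_{j+1}^+}$ need not equal "$G_{M\oplus\Pi_j^+}$ with $\pi_j$ reversed." Your path $\sigma$ lives in the former graph, but your flow-cancellation decomposition of $\sigma$ and $\pi_j$ is carried out in the latter, so the two unit flows are not in the same network and the decomposition is not licensed. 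This can be patched — for instance, by showing that flipping an alternating cycle cannot create a path from $S\setminus\src(\Pi_{j+1})$ to $p$ where none existed, or by the simpler counting argument that $f_M(S,\sink(\Pi_{j+1})\cup\{p\})>|\Pi_{j+1}|$ would, after deleting the path to $\sink(\pi_j)$, contradict the invariant at step $j$ — but as written, the key step of the induction does not go through.
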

\begin{proof}
	Let $\Pi_1$ be an optimal solution of $G_M(S,T')$.  Note that $\Pi_1$ is also a feasible solution of $G_M(S,T \cup \{p\})$.  Let $\Pi_2$ be an optimal solution of $G_M(S,T \cup \{p\})$ obtained by augmenting $\Pi_1$ (using Claim~\ref{cl:path-aug}).  Then, $\sink(\Pi_1) \subseteq \sink(\Pi_2)$.  If $p \in \sink(\Pi_2)$, then $\sink(\Pi_1) \cup \{p\} \subseteq \sink(\Pi_2)$, implying that there are $|\Pi_1| + 1 = f_M(S,T') + 1$ node-disjoint paths from $S$ to 
	$T' \cup \{p\}$, and thus establishing the claim.  If $p \not\in \sink(\Pi_2)$, then $\Pi_2$ is a feasible solution of $G_M(S,T)$.  But then $f_M(S,T \cup \{p\}) = |\Pi_2| \leq f_M(S,T)$, a contradiction to the assumption.
\end{proof}
\section{The Algorithm}
\label{sec:alg}

In this section, we present an algorithm which, given a partial allocation and an unsatisfied player $p_0$, computes a new partial allocation that satisfies $p_0$ and all the players that used to be satisfied. Recall that a partial allocation consists of a maximum matching $M$ of $G$ and a subset $\cal E$ of thin edges such that (i) no two edges in $M$ and $\cal E$ satisfy (i.e., cover) the same player, (ii) no two edges in $\cal E$ share any resource, (iii) every edge $(p,B) \in {\cal E}$ is minimal in the sense that every proper subset $B' \subset B$ has value less than $\tau/\lambda$.

Let $M$ and ${\cal E}$ be the maximum matching of $G$ and the set of thin edges in the current partial allocation, respectively.  Let $p_0$ be an arbitrary player who is not yet satisfied.

\subsection{Overview}
\label{sec:overview}
To satisfy $p_0$, the simplest case is that we can find a minimal thin edge $(p_0, B_0)$ such that $B_0$ excludes all the resources covered by ${\cal E}$. (Recall that by definition of thin edges, $\val(B_0) \geq \tau/\lambda$.)  We can extend the partial allocation by adding $(p_0, B_0)$ to ${\cal E}$.

More generally, we can use any thin edge $(q_0, B_0)$ such that $B_0$ meets the above requirements even if $q_0 \not= p_0$, provided that there is a path from $p_0$ to $q_0$ in $G_M$.  If $q_0 \neq p_0$, such a path is an alternating path in $G$ with respect to $M$, and $q_0$ is matched by $M$.  We can flip this path to match $p_0$ with a fat resource and then include $(q_0,B_0)$ in ${\cal E}$ to satisfy $q_0$.

The thin edge $(q_0,B_0)$ mentioned above may not always exist. In other words, some edges in ${\cal E}$ may share resources with $(q_0, B_0)$.  Let $\{(p_1,B'_1),\ldots,(p_k,B'_k)\}$ be such thin edges in $\cal E$.  In this situation, we say $(q_0, B_0)$ is blocked by $\{(p_1,B'_1),\ldots,(p_k,B'_k)\}$.  In order to free up the resources held by $p_1,\ldots,p_k$ and to make $(q_0,B_0)$ unblocked, we need to satisfy each player $p_i$ with resources other than those in $B_0, B'_1,\ldots, B'_k$.  Afterwards, we can satisfy $p_0$ as before.  To record the different states of the algorithm, we initialize a stack to contain $(p_0,\emptyset)$ as the first layer and then create another layer on top that stores the sets ${\cal X}_2 = \{(q_0,B_0)\}$ and ${\cal Y}_2 = \{(p_1,B'_1),\ldots,(p_k,B'_k)\}$ among other things for bookkeeping.  We change our focus to satisfy the set of players $Y_2 = \{p_1,\ldots,p_k\}$.

To satisfy a player in $Y_2$ (by a new edge), we need to identify a minimal thin edge $(q_1,B_1)$ such that $B_1$ excludes the resources already covered by thin edges in the current stack because we don't want $(q_1, B_1)$ to block or be blocked by any edges in the current stack.  As to $q_1$, we require that $G_M$ contains two node-disjoint paths from $\{p_0\} \cup Y_2$ to $\{q_0,q_1\}$.  If $(q_1,B_1)$ is blocked by some thin edges in ${\cal E}$, we initialize a set ${\cal X}_3 = \{(q_1,B_1)\}$; otherwise, we initialize a set ${\cal I} = \{(q_1,B_1)\}$.  Ideally, if $(q_1,B_1)$ is unblocked, we could immediately make some progress.  Since there are two node-disjoint paths from $\{p_0\} \cup Y_2$ to $\{q_0,q_1\}$, $q_1$ is reachable from either $p_0$ or a player in $Y_2$.  In the former case, we can satisfy $p_0$. In the latter case, the path from $Y_2$ to $q_1$ must be node-disjoint from the path from $p_0$ to $q_0$, so we can satisfy some player $p_i$ without affecting the alternating path from $p_0$ to $q_0$, and free up the resources previously held by that player.  But we would not do so because, as argued in~\cite{AKS17}, in order to achieve a polynomial running time, we should let $\cal I$ grow bigger so that a larger progress can be made at once.

Since there are multiple players in $Y_2$ to be satisfied, we continue to look for another minimal thin edge $(q_i,B_i)$.  We require that $B_i$ excludes the resources covered by thin edges in the current stack (including ${\cal X}_3$ and $\cal I$), and that $G_M$ contains one more  node-disjoint paths from $\{p_0\} \cup Y_2$ after adding $q_i$ to the destination set.  If $(q_i,B_i)$ is blocked by some thin edges in ${\cal E}$, we add $(q_i,B_i)$ to ${\cal X}_3$; otherwise, we add it to ${\cal I}$.  After collecting all such thin edges in ${\cal X}_3$ and $\cal I$, we construct the set ${\cal Y}_3$ of thin edges in the current partial allocation that block ${\cal X}_3$.  Then, we add a new top layer to the stack that stores ${\cal X}_3$ and ${\cal Y}_3$  among other things for bookkeeping.  Then, in order to free up the resources held by edges in ${\cal Y}_3$ and to make edges in ${\cal X}_3$ unblocked, we turn our attention to satisfying the players covered by ${\cal Y}_3$ with new edges and so on.  These repeated additions of layers to the stack constitute the build phase of the algorithm.

The build phase stops when we have enough thin edges in ${\cal I}$ to satisfy a predetermined fraction of players covered by ${\cal Y}_l$ for some $l$. We shrink the $l$-th layer and delete all layers above it.  The above is repeated until ${\cal I}$ is not large enough to satisfy the predetermined fraction of players covered by any ${\cal Y}_l$ in the stack.  These repeated removal of layers constitute the collapse phase of the algorithm.  At the end of the collapse phase, we switch back to the build phase.

The alternation of build and collapse phases continues until we succeed in satisfying player $p_0$, our original goal, which is stored in the bottommost layer in the stack.

The lazy update (i.e., wait until $\cal I$ is large enough before switching to the collapse phase) is not sufficient for achieving a polynomial running time. A greedy strategy is also needed.  In~\cite{AKS17}, when a blocked thin edge $(q,B)$ is picked and added to ${\cal X}_l$ for some $l$, $B$ is required to be a minimal set of value at least $\tau/2$, which is more than $\tau/\lambda$.  Intuitively, if such an edge is blocked, it must be blocked by many edges. Hence, the strategy leads to a fast growth of the stack.  We use a more aggressive strategy: we allow the value of $B$ to be as large as $\tau+\tau/\lambda$, and among all candidates, we pick the thin edge $(q,B)$ with (nearly) the largest value.  Our strategy leads to a faster growth of the stack, and hence, a polynomial running time can be achieved for a smaller $\lambda$.

\subsection{Notation and definitions}  

Let $M$ and $\cal E$ denote the maximum matching of $G$ and the set of thin edges, respectively, that are used in the current partial allocation.  Let $p_0$ denote the next player we want to satisfy.  

A \emph{state of the algorithm} consists of several components, namely, $M$, $\cal E$, a stack of layers, and a global variable $\cal I$ that stores a set of unblocked thin edge.  The layers in the stack are indexed starting from 1 at the bottom.  For $i \geq 1$, the $i$-th layer $L_i$ is a 4-tuple $({\cal X}_i, {\cal Y}_i,d_i,z_i)$, where ${\cal X}_i$ and ${\cal Y}_i$ are sets of thin edges, and $d_i$ and $z_i$ are two numeric values that we will explain later.  We use $I$, $X_i$ and $Y_i$ to denote the sets of players covered by edges in $\cal I$, ${\cal X}_i$ and ${\cal Y}_i$, respectively.  The set $\cal I$ grows during the build phase and shrinks during the collapse phase, and $I$ changes correspondingly.  The same is true for ${\cal X}_i$, $X_i$, ${\cal Y}_i$, and $Y_i$.  For any $k \geq 1$, let ${\cal X}_{\leq k}$ denote $\bigcup_{i=1}^{k} {\cal X}_i$. ${\cal Y}_{\leq k}$, $X_{\leq k}$, and $Y_{\leq k}$ are similarly defined.

The sets ${\cal X}_i$ and ${\cal Y}_i$ are defined inductively.  At the beginning of the algorithm, ${\cal X}_1 = \emptyset$, ${\cal Y}_1 = \{(p_0,\emptyset)\}$, $d_1 = z_1 = 0$, and ${\cal I} = \emptyset$.  The first layer in the stack is thus $(\emptyset,\{(p_0,\emptyset)\},0,0)$.

Let $\ell$ be the index of the topmost layer in the stack. Consider the construction of the $(\ell+1)$-th layer in an execution of the build phase.  When it first starts, ${\cal X}_{\ell+1}$ is initialized to be empty.  We say that a player $p$ is {\bf addable} if
\[
	f_M(Y_{\leq \ell}, X_{\leq \ell+1} \cup I \cup \{p\}) = f_M(Y_{\leq \ell}, X_{\leq \ell+1} \cup I) + 1.
\]
Note that this definition depends on $X_{\leq \ell+1} \cup I$, so adding edges to ${\cal X}_{\ell+1}$ and $\cal I$ may affect the addability of players.

Given an addable player $p$, we say that a thin edge $(p,B)$ is {\bf addable} if
\[
	\val(B) \in [\tau/\lambda,\tau + \tau/\lambda] \,\, \mbox{and} \,\, \mbox{$B$ excludes resources covered by ${\cal X}_{\leq \ell+1} \cup {\cal Y}_{\leq \ell} \cup {\cal I}$}.
\]
An addable thin edge $(p,B)$ is {\bf unblocked} if there exists a subset $B' \subseteq B$ such that $\val(B') \geq \tau/\lambda$ and $B'$ excludes resources used in $\cal E$.  Otherwise, $(p,B)$ is {\bf blocked}.  During the construction of the $(\ell+1)$-th layer, the algorithm adds some blocked addable thin edges to ${\cal X}_{\ell+1}$ and some unblocked addable thin edges to ${\cal I}$.  When the growth of ${\cal X}_{\ell+1}$ stops, the algorithm constructs ${\cal Y}_{\ell+1}$ as the set of thin edges in $\cal E$ that share resource(s) with some edge(s) in ${\cal X}_{\ell+1}$.  

After constructing ${\cal X}_{\ell+1}$ and ${\cal Y}_{\ell+1}$ and growing $\cal I$, the values $d_{\ell+1}$ and $z_{\ell+1}$ are defined as
\[
	d_{\ell+1} := f_M(Y_{\leq \ell}, X_{\leq \ell+1} \cup I), \quad
	z_{\ell+1} := |X_{\ell+1}|.
\]
The values $d_{\ell+1}$ and $z_{\ell+1}$ do not change once computed unless the layer $L_{\ell+1}$ is destructed in the collapse phase.  That is, $d_{\ell+1}$ and $z_{\ell+1}$ record the values $f_M(Y_{\leq \ell}, X_{\leq \ell+1} \cup I)$ and $|X_{\ell+1}|$ at the time of construction.  (Note that $f_M(Y_{\leq \ell}, X_{\leq \ell+1} \cup I)$ and $|X_{\ell+1}|$ may change subsequently.)  The values $d_{\ell+1}$ and $z_{\ell+1}$ are introduced only for the analysis.  They are not used by the algorithm.

Whenever we complete the construction of a new layer in the stack, we enter the collapse phase to  check whether any existing layer is collapsible.  If so, shrink the stack and update the current partial allocation ($M$ and $\cal E$).  We stay in the collapse phase until no layer is collapsible.  If the stack has become empty, we are done as the player $p_0$ has been satisfied.  Otherwise, we reenter the build phase.  We give the detailed specification of the build and collapse phases in the following subsections.

\subsection{Build phase}  

Let $\ell$ be the index of the topmost layer in the stack.  Let $M$ and $\cal E$ denote the maximum matching in $G$ and the set of thin edges in the current partial allocation, respectively.  We call the following routine {\sc Build} to construct the next layer $L_{k+1}$.

\begin{quote}
\noindent {\sc Build}$(M, {\cal E}, {\cal I}, (L_1,\cdots,L_\ell))$
\begin{enumerate}
	\item Initialize ${\cal X}_{\ell+1}$ to be the empty set.  \label{step:build-1}
	
	\item If there is an addable player $p$ and an unblocked addable edge $(p,B)$, then: \label{step:build-2}
	\begin{enumerate}
		\item take a minimal subset $B' \subseteq B$ such that $\val(B')\geq \tau/\lambda$ and $B'$ excludes the resources covered by $\cal E$ (we call $(p,B')$ a \emph{minimal unblocked addable edge}),
		\item add $(p,B')$ to $\cal I$,
		\item repeat step~\ref{step:build-2}. 
	\end{enumerate}

	\item When we come to step~\ref{step:build-3}, no unblocked addable edge is left.  If there is no (blocked) addable edge, go to step~\ref{step:build-4}.  For each addable player $p$ who is incident to at least one addable edge, identify one \emph{maximal blocked addable edge $(p,B)$} such that $B \not\subset B'$ for any blocked addable edge $(p,B')$. Among the maximal blocked addable edges identified, pick the one with the largest value, and add it to ${\cal X}_{\ell+1}$. Then repeat step~\ref{step:build-3}. \label{step:build-3}
	
	\item At this point, the construction of ${\cal X}_{\ell+1}$ is complete.  Let ${\cal Y}_{\ell+1}$ be the set of the thin edges in $\cal E$ that share resource(s) with some thin edge(s) in ${\cal X}_{\ell+1}$. \label{step:build-4}
	
	\item Compute $d_{\ell+1} := f_M(Y_{\leq \ell}, X_{\leq \ell+1} \cup I)$ and $z_{\ell+1} := \bigl|X_{\ell+1}\bigr|$. \label{step:build-5}
	
	\item Push the new layer $L_{\ell+1} = ({\cal X}_{\ell+1},{\cal Y}_{\ell+1},d_{\ell+1},z_{\ell+1})$ onto the stack. $\ell := \ell + 1$. \label{step:build-6}
\end{enumerate}
\end{quote}

{\sc Build} differs from its counterpart in~\cite{AKS17} in several places, particularly in step~\ref{step:build-3}.  First, we require blocked addable edges to be maximal while only minimal addable edges of value at least $\tau/2$ are considered in~\cite{AKS17}.  Second, when adding addable edges to ${\cal X}_{\ell+1}$, we pick the one with (nearly) the largest value.  In contrast, one arbitrary addable edge is picked in~\cite{AKS17}. 

One may wonder, instead of identifying a maximal blocked addable edge for each player, whether it is better to identify a maximum blocked addable edge (i.e., the blocked addable edge with the largest value).  However, finding the blocked addable edge with the largest value for $p$ is an instance of the NP-hard knapsack problem.  Maximal blocked addable edges are sufficient for our purposes.

Is it possible that ${\cal X}_{\ell+1} = \emptyset$ and so ${\cal Y}_{\ell+1} = \emptyset$?  We will establish the result, Lemma~\ref{lem:key} in Section~\ref{sec:key}, that if $|Y_{i+1}| < \sqrt{\mu}|Y_{\leq i}|$ for some $i$, then some layer below $L_{i+1}$ is collapsible.  Therefore, if ${\cal Y}_{\ell+1}$ is empty, then some layer below $L_{\ell+1}$ must be collapsible, the algorithm will enter the collapse phase next, and $L_{\ell+1}$ will be removed.

\begin{lemma}\label{lem:time-build}
	{\sc Build} runs in $\poly(m,n)$ time.
\end{lemma}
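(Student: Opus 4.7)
The plan is to split the running-time analysis into two pieces: a polynomial bound on the number of iterations of the two loops (steps~\ref{step:build-2} and~\ref{step:build-3}), and a polynomial bound on the work done inside a single iteration.

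For the iteration counts, the key observation is that once a player $p$ has been covered by $\cal I$ or by ${\cal X}_{\ell+1}$, it can never again be addable during this call to {\sc Build}.  Indeed, right after $(p,B')$ is appended to $\cal I$ we have $p\in I$, so $X_{\leq\ell+1}\cup I\cup\{p\} = X_{\leq\ell+1}\cup I$ and hence $f_M$ on the two sides coincides, failing the addability test.  The same argument works when $(p,B)$ is added to ${\cal X}_{\ell+1}$: then $p\in X_{\leq\ell+1}$, so the two $f_M$ values agree again.  Thus each loop adds at most one edge per player, giving $O(m)$ iterations for step~\ref{step:build-2} and $O(m)$ iterations for step~\ref{step:build-3}.

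For the per-iteration work, I would verify that each building block runs in polynomial time: (a) testing whether a player $p$ is addable reduces by section~\ref{subsubsec:find-paths} to computing a maximum $s$-$t$ flow on a unit-capacity network of $O(m+n)$ vertices, which is polynomial; (b) scanning all players to find an addable one, and checking for an addable edge at each, is polynomial since the number of resources is $n$; (c) extracting the minimal subset $B'$ in step~\ref{step:build-2} is a simple greedy shrinking over $B$; and (d) computing a maximal blocked addable edge for a given addable player $p$ is a greedy packing in which one first adds resources desired by $p$ and not used in $\cal E$ while the accumulated value stays strictly below $\tau/\lambda$ (so the edge remains blocked), then pads with resources of $p$ that are used in $\cal E$ while the total value stays at most $\tau+\tau/\lambda$.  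Finally, picking the candidate of largest value across the $O(m)$ addable players is an $O(m)$ scan.

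The main obstacle is checking (d) carefully: one must show that the greedy does yield a \emph{maximal} blocked addable edge whenever some blocked addable edge on $p$ exists, despite the two-sided value constraint $\val(B)\in[\tau/\lambda,\tau+\tau/\lambda]$ interacting with the blocked condition $\val(B\cap R_p^{\bar E})<\tau/\lambda$, where $R_p^{\bar E}$ denotes the resources desired by $p$ that are neither in the forbidden set ${\cal X}_{\leq\ell+1}\cup{\cal Y}_{\leq\ell}\cup{\cal I}$ nor in $\cal E$.  Once this combinatorial point is settled, combining $O(m)$ iterations per loop with $\poly(m,n)$ work per iteration delivers the bound $\poly(m,n)$ for {\sc Build}.
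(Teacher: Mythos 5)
Your approach matches the paper's: bound the number of loop iterations, then show each iteration's work is polynomial, with the central constructive ingredient being the greedy construction of a maximal blocked addable edge by first inserting thin resources outside $\cal E$ and then padding with thin resources inside $\cal E$.  Your explicit $O(m)$ iteration bound (via the observation that a player covered by ${\cal X}_{\leq\ell+1}\cup{\cal I}$ is no longer addable, since adding $p$ to $X_{\leq\ell+1}\cup I\cup\{p\}$ then leaves the set unchanged) is a nice piece of bookkeeping that the paper leaves implicit.

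The concern you flag in (d) is resolvable and is not really an obstacle, so the proof goes through.  Since we are in step~\ref{step:build-3}, there is no unblocked addable edge for $p$; this means the total value of \emph{all} resources desired by $p$ that are outside the forbidden set ${\cal X}_{\leq\ell+1}\cup{\cal Y}_{\leq\ell}\cup{\cal I}$ and outside $\cal E$ is already strictly below $\tau/\lambda$, so your ``stay strictly below $\tau/\lambda$'' guard never actually triggers --- you simply add all of them.  After the padding phase, the subset of $B$ outside $\cal E$ still has value below $\tau/\lambda$, so the resulting edge is blocked; and it is maximal because every resource of $p$ not in the final $B$ is either forbidden or would push $\val(B)$ past $\tau+\tau/\lambda$, and in either case no blocked addable superset edge can exist.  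You should also note that a single thin resource has value below $\tau/\lambda$, so the value of $B$ cannot jump past both thresholds in one insertion, which guarantees the final value lands in $[\tau/\lambda,\tau+\tau/\lambda]$.
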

\begin{proof}
	It suffices to show that steps~\ref{step:build-2} and \ref{step:build-3} run in polynomial time.  Two maximum flow computations tell us whether a player $p$ is addable.  Suppose so.  We start with the thin edge $(p,B)$ where $B = \emptyset$.  Let $R_p$ denote the set of thin resources that are desired by $p$.  First, we incrementally insert to $B$ thin resources from $R_p$ that appear in neither the current partial allocation nor $\mathcal{X}_{\leq \ell+1} \cup \mathcal{Y}_{\leq \ell} \cup \mathcal{I}$.  If $\val(B)$ becomes greater than or equal to $\tau/\lambda$, then we must be in step~\ref{step:build-2} and $(p,B)$ is a minimal unblocked addable edge that can be added to $\cal I$.  Suppose that $\val(B) < \tau/\lambda$ after the incremental insertion stops.  Then, $p$ has no unblocked addable edge.  If we are in step~\ref{step:build-3}, we continue to add to $B$ thin resources from $R_p$ that appear in the current partial allocation but not in $\mathcal{X}_{\leq \ell+1}\cup \mathcal{Y}_{\leq \ell} \cup \mathcal{I}$.  If $\val(B) < \tau/\lambda$ when the incremental insertion stops, then $p$ has no addable edge.  Otherwise, we continue until $\val(B)$ is about to exceed $\tau + \tau/\lambda$ or we have examined all thin resources in $R_p$, whichever happens earlier.  In either case, the final $\val(B)$ is in the range $\left[\tau/\lambda, \tau+\tau/\lambda\right]$ and $(p,B)$ is a maximal blocked addable edge.
\end{proof}

Table~\ref{tb:invar} shows the invariants that will be used in the analysis of the algorithm. Clearly, they all hold at the start of the algorithm, i.e., $\ell = 1$, ${\cal X}_1 = \emptyset$, ${\cal I} = \emptyset$, ${\cal Y}_1 = \{(p_0,\emptyset)\}$, and $d_1 = z_1 = 0$. We show that they are maintained by {\sc Build}.

\begin{table}
	\centering
	\caption{
		Invariants maintained by the algorithm. $(M, {\cal E}, {\cal I}, \{L_1, \ldots, L_{\ell}\})$ is the current state of the algorithm.}
	\label{tb:invar}
	\vspace{.1in}
	\begin{tabu} to \textwidth {X[-.5cb]X[lp]}
	\toprule
	\label{invar:1}
	Invariant~1  & Every edge in $\cal I$ has value in the range $[\tau/\lambda,2\tau/\lambda]$. Every edge in ${\cal X}_{\leq \ell}$ has value in the range $[\tau/\lambda,\tau+\tau/\lambda]$.  No two edges from ${\cal X}_{\leq \ell} \cup {\cal I}$ cover the same player or share any resource. \\
	\midrule
	\label{invar:2}
	Invariant~2 & No edge in $\cal E$ shares any resource with any edge in $\cal I$.  \\
	\midrule
	\label{invar:3}
	Invariant~3 & For all $i \in [1,\ell]$, every edge in ${\cal X}_i$ shares some resource(s) with some edge(s) in ${\cal Y}_i$ but not with any edge in ${\cal E} \setminus {\cal Y}_i$. \\
	\midrule
	\label{invar:4}
	Invariant~4 &  ${\cal Y}_2, \ldots, {\cal Y}_\ell$ are disjoint subsets of $\cal E$. (Note that ${\cal Y}_1 = \{(p_0, \emptyset)\}$ is not a subset of $\cal E$ and is certainly disjoint from ${\cal Y}_2, \ldots, {\cal Y}_\ell$.)\\
	\midrule
	\label{invar:5}
	Invariant~5 & For all $i \in [1,\ell]$, no edge in ${\cal Y}_i$ shares any resource with any edge in ${\cal X}_j$ for any $j \not= i$. \\
	\midrule
	\label{invar:6}
	Invariant~6 & $f_M(Y_{\leq \ell-1},I) = |I|$. \\
	\midrule
	\label{invar:7}
	Invariant~7 & For all $i \in [1,\ell-1]$, $f_M(Y_{\leq i},X_{\leq i+1} \cup I) \geq d_{i+1}$. \\
	\bottomrule
	\end{tabu}
\end{table} 

\begin{lemma}
	{\sc Build} maintains invariants~1--7 in Table~\ref{tb:invar}.
\end{lemma}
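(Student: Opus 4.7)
I would argue by induction over the execution of the algorithm, taking one invocation of {\sc Build} as a single inductive step: assuming Invariants~1--7 hold just before step~\ref{step:build-1}, I would verify each invariant at the end of step~\ref{step:build-6}. Steps~\ref{step:build-1}, \ref{step:build-5}, and \ref{step:build-6} are pure bookkeeping, so all the work concentrates on how the growth of ${\cal I}$ in step~\ref{step:build-2}, the growth of ${\cal X}_{\ell+1}$ in step~\ref{step:build-3}, and the construction of ${\cal Y}_{\ell+1}$ in step~\ref{step:build-4} interact with the invariants.

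\paragraph*{Structural invariants (1, 2, 3).} The value bounds in Invariant~1 are immediate from construction: in step~\ref{step:build-2}(a), a minimal $B'$ with $\val(B')\geq \tau/\lambda$ can overshoot the threshold by at most one thin resource, giving $\val(B') < 2\tau/\lambda$; a maximal blocked addable edge added in step~\ref{step:build-3} has value in $[\tau/\lambda,\tau+\tau/\lambda]$ by the very definition of ``addable edge.''  The no-shared-player clause of Invariant~1 holds because a player already covered by ${\cal X}_{\leq \ell+1} \cup {\cal I}$ cannot satisfy $f_M(Y_{\leq \ell},X_{\leq \ell+1} \cup I \cup \{p\}) = f_M(Y_{\leq \ell},X_{\leq \ell+1} \cup I)+1$, hence is not addable.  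The no-shared-resource clause of Invariant~1 and the whole of Invariant~2 follow from the exclusion requirements spelled out in the definitions of ``addable edge'' and of $B'$ in step~\ref{step:build-2}(a).  Invariant~3 for the new layer combines two facts: every edge of ${\cal X}_{\ell+1}$ is blocked, so it shares at least one resource with ${\cal E}$; and step~\ref{step:build-4} defines ${\cal Y}_{\ell+1}$ to collect every such sharing edge, ruling out any sharing with ${\cal E}\setminus{\cal Y}_{\ell+1}$.

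\paragraph*{Inter-layer disjointness (4, 5).} For Invariant~4, any edge of ${\cal Y}_{\ell+1}$ is an ${\cal E}$-edge sharing a resource with ${\cal X}_{\ell+1}$; but addable edges are forbidden from using resources covered by ${\cal Y}_{\leq \ell}$, so ${\cal Y}_{\ell+1}$ is disjoint from ${\cal Y}_2,\ldots,{\cal Y}_\ell$.  For Invariant~5 in the new case $i = \ell+1$: if some $e \in {\cal Y}_{\ell+1}$ shared a resource with $e' \in {\cal X}_j$ for $j \leq \ell$, then old Invariant~3 at layer~$j$ would force $e \in {\cal Y}_j$, contradicting the just-established Invariant~4.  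For $i \leq \ell$, the layer $L_i$ is unchanged and the only new ${\cal X}$-edges sit in ${\cal X}_{\ell+1}$, which again excludes all resources in ${\cal Y}_{\leq \ell}$.

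\paragraph*{Flow identities (6, 7); main obstacle.} The delicate invariant is Invariant~6: at the end of {\sc Build} the top index has become $\ell+1$, so I must show $f_M(Y_{\leq \ell}, I) = |I|$, whereas the inductive hypothesis only supplies $f_M(Y_{\leq \ell-1}, I) = |I|$.  Monotonicity of $f_M$ in the sink set, together with the trivial upper bound $f_M(\cdot,I) \leq |I|$, upgrades this immediately to $f_M(Y_{\leq \ell}, I) = |I|$ before any insertion.  The main step is to preserve the identity as each $(p,B')$ is added to ${\cal I}$ in step~\ref{step:build-2}.  The addability of $p$ is an equality about the \emph{large} sink set $X_{\leq \ell+1} \cup I \cup \{p\}$, not about $I \cup \{p\}$; translating the unit increment to the smaller sink set is exactly what Claim~\ref{cl:more-path} is designed to do, and applying it with $T' = I \subseteq X_{\leq \ell+1} \cup I$ gives $f_M(Y_{\leq \ell}, I \cup \{p\}) = f_M(Y_{\leq \ell},I)+1 = |I|+1 = |I \cup \{p\}|$.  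I expect this transfer from the ``big'' addability equation to the ``small'' sink set $I$ to be the main conceptual obstacle; everything else is structural.  Finally, Invariant~7 holds for each $i \leq \ell-1$ because $d_{i+1}$ and ${\cal X}_{\leq i+1}$ are frozen while $I$ only grows (so $f_M$ cannot decrease), and in the new case $i = \ell$ it holds with equality by the very definition $d_{\ell+1} := f_M(Y_{\leq \ell}, X_{\leq \ell+1} \cup I)$ set in step~\ref{step:build-5}.
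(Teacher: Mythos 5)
Your proof is correct and follows essentially the same route as the paper: Invariants~1--5 fall out of the exclusion conditions built into the definitions of addable/blocked edges, and Invariant~6 is handled (as in the paper) by transferring the addability equation to the smaller sink set $I$ via Claim~\ref{cl:more-path}. One small slip worth fixing: to upgrade $f_M(Y_{\leq \ell-1}, I) = |I|$ to $f_M(Y_{\leq \ell}, I) = |I|$ you need monotonicity of $f_M$ in the \emph{source} set, not the sink set, since it is $Y_{\leq \ell-1}\subseteq Y_{\leq \ell}$ that is being enlarged; the rest of that step is intact.
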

\begin{proof}
	Suppose that the invariants hold before {\sc Build} constructs the new topmost layer $L_{\ell+1}$.  It suffices to check the invariants after the construction of $L_{\ell+1}$.  Invariants~1 and 2 are clearly preserved by the working of {\sc Build}. 

	Consider invariant~3.  It holds for $i \in [1,\ell]$ because none of $\cal E$, ${\cal X}_i$, and ${\cal Y}_i$ is changed.  Since all edges in $\cal E$ that share some resource(s) with some edge(s) in ${\cal X}_{\ell+1}$ are added to ${\cal Y}_{\ell+1}$, invariant~3 also holds for $i = \ell+1$.

	Consider invariant~4.   By induction assumption, ${\cal Y}_2,\ldots,{\cal Y}_\ell$ are disjoint subsets of $\cal E$.  By construction, ${\cal Y}_{\ell+1} \subseteq {\cal E}$.  If an edge $e \in {\cal Y}_{\ell+1}$ belongs to ${\cal Y}_{\leq \ell}$, then the resources covered by $e$ must be excluded by all edges in ${\cal X}_{\ell+1}$ by the definition of addable edges.  But $e$ must share resource(s) with some edge(s) in ${\cal X}_{\ell+1}$ in order that $e \in {\cal Y}_{\ell+1}$, a contradiction.  So ${\cal Y}_2,\ldots, {\cal Y}_{\ell+1}$ are disjoint subsets of $\cal E$.

	Invariant~5 follows from invariants~3 and~4 and the fact that edges in $\cal E$ do not share any resource.

	Let ${\cal I}'$ denote the version of $\cal I$ immediately before the execution of {\sc Build} to construct $L_{\ell+1}$.  Let ${\cal I}''$ denote the set of unblocked addable edges inserted into ${\cal I}'$ for constructing $L_{\ell+1}$.  Correspondingly, $I'$ and $I''$ denote the set of players covered by ${\cal I}'$ and ${\cal I}''$, respectively.

	Consider invariant~6.  We have $f_M(Y_{\leq \ell-1},I') = |I'|$ as invariant~6 is assumed to hold before {\sc Build} executes.   Each player in $X_{\ell+1}$ and $I''$ is determined to be addable, i.e., adding such a player to $X_{\leq \ell+1} \cup I$ increases the value of $f_M(Y_{\leq \ell}, X_{\leq \ell+1} \cup I)$ by one.  Then, Claim~\ref{cl:more-path} implies that $f_M(Y_{\leq \ell},I'\cup I'') = f_M(Y_{\leq \ell},I') + |I''|$.  Recall that $f_M(Y_{\leq \ell-1},I') = |I'|$, which implies $f_M(Y_{\leq \ell},I') = |I'|$.  Thus $f_M(Y_{\leq \ell}, I' \cup I'') = |I'| + |I''| = |I' \cup I''|$, preserving invariant~6.

	Consider invariant~7.  {\sc Build} does not change $Y_{\leq i}$ and $X_{\leq i+1}$ for any $i \in [1,\ell-1]$, and {\sc Build} does not delete any edge from $\cal I$.  Therefore, for all $i \in [1,\ell-1]$, $f_M(Y_{\leq i},X_{\leq i+1} \cup I)$ cannot decrease and so $f_M(Y_{\leq i},X_{\leq i+1} \cup I)$ remains larger than or equal to $d_{i+1}$.  By construction, {\sc Build} sets $d_{\ell+1} := f_M(Y_{\leq \ell}, X_{\leq \ell+1} \cup I)$.  Invariant 7 is preserved.
\end{proof}

\subsection{Collapse phase}

Let $M$ be the maximum matching in $G$ in the current partial allocation.  Let  $L_1, L_2, \ldots, L_\ell$ be the layers currently in the stack from bottom to top.  We need to tell whether a layer can be collapsed, and this requires a certain decomposition of $\cal I$.

\paragraph*{Collapsibility.}  Let ${\cal I}_1 \cup {\cal I}_2 \cup \cdots {\cal I}_\ell$ be some partition of $\cal I$.  Let $I_i$ denote the set of players covered by ${\cal I}_i$.  We use ${\cal I}_{\leq j}$ and $I_{\leq j}$ to denote $\bigcup_{i=1}^j {\cal I}_i$ and $\bigcup_{i=1}^j I_i$, respectively.  Note that $|I_i| = |{\cal I}_i|$ by invariant~1 in Table~\ref{tb:invar}.  The partition ${\cal I}_1 \cup {\cal I}_2 \cup \cdots {\cal I}_\ell$ is a \emph{canonical decomposition} of $\cal I$~\cite{AKS17} if
\[
\forall \, i \in [1,\ell], \quad f_M(Y_{\leq i}, I_{\leq i}) = f_M(Y_{\leq i}, I) = |I_{\leq i}| = |{\cal I}_{\leq i}|.
\]

\begin{lemma}
\label{lem:canon}
	In $\poly(\ell,m,n)$ time, one can compute a canonical decomposition 
		${\cal I}_1 \cup {\cal I}_2 \cup \ldots {\cal I}_\ell$ of $\cal I$ 
	and a canonical solution of $G_M(Y_{\leq \ell},I)$ that can be partitioned into a disjoint union $\Gamma_1 \cup \Gamma_2 \cup \cdots \Gamma_\ell$ such that for every $i \in [1,\ell]$, $\Gamma_i$ is a set of $|I_i|$ paths from $Y_i$ to $I_i$.  
\end{lemma}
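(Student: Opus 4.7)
The plan is to construct the decomposition inductively from $i = 1$ up to $\ell$ by maintaining a single path system $\Pi^{(i)}$ that is an optimal solution of $G_M(Y_{\leq i}, I)$, obtained from $\Pi^{(i-1)}$ by augmentation, and at the end extract each $\Gamma_i$ from the final $\Pi^{(\ell)}$ by partitioning its paths according to which $Y_j$ contains each source. Starting from $\Pi^{(0)} := \emptyset$, at step $i$ the system $\Pi^{(i-1)}$ is already a feasible solution of $G_M(Y_{\leq i}, I)$ since $Y_{\leq i-1} \subseteq Y_{\leq i}$; I apply Claim~\ref{cl:path-aug} repeatedly, using Claim~\ref{cl:path-opt} to locate an augmenting path in $G_{M \oplus \Pi^+}$ whenever $|\Pi| < f_M(Y_{\leq i}, I)$. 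Each augmentation is a single reachability search in the residual graph, and the total number of augmentations summed over all $\ell$ steps is at most $|I|$, giving total running time $\poly(\ell,m,n)$.

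The structural observation that powers the extraction is that $\src(\Pi^{(i)}) \setminus \src(\Pi^{(i-1)}) \subseteq Y_i$ for every $i$. Because $\Pi^{(i)}$ is obtained by repeated application of Claim~\ref{cl:path-aug}, its source set contains $\src(\Pi^{(i-1)})$. The subset of paths in $\Pi^{(i)}$ whose sources lie in $Y_{\leq i-1}$ forms a feasible solution of $G_M(Y_{\leq i-1}, I)$, so its cardinality is at most $f_M(Y_{\leq i-1}, I) = |\src(\Pi^{(i-1)})|$; since $\src(\Pi^{(i-1)}) \subseteq \src(\Pi^{(i)}) \cap Y_{\leq i-1}$, equality forces $\src(\Pi^{(i)}) \cap Y_{\leq i-1} = \src(\Pi^{(i-1)})$. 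By induction this gives $\src(\Pi^{(\ell)}) \cap Y_{\leq i} = \src(\Pi^{(i)})$ for every $i$. I then define $\Gamma_i$ to be the paths of $\Pi^{(\ell)}$ whose sources lie in $Y_i$, set $I_i := \sink(\Gamma_i)$, and set $\mathcal{I}_i := \{(p,B) \in \mathcal{I} : p \in I_i\}$; invariant~4 (pairwise disjointness of the $Y_j$'s) ensures that the $\Gamma_i$'s partition $\Pi^{(\ell)}$ and the $\mathcal{I}_i$'s partition the players covered by $\mathcal{I}$.

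The canonical decomposition conditions then fall out by counting: $|I_{\leq i}| = |\src(\Pi^{(\ell)}) \cap Y_{\leq i}| = |\src(\Pi^{(i)})| = f_M(Y_{\leq i}, I)$, and the paths in $\Gamma_1 \cup \cdots \cup \Gamma_i$ form a feasible solution of $G_M(Y_{\leq i}, I_{\leq i})$ of size $|I_{\leq i}|$, giving $f_M(Y_{\leq i}, I_{\leq i}) \geq |I_{\leq i}|$, while the reverse inequalities $f_M(Y_{\leq i}, I_{\leq i}) \leq |I_{\leq i}|$ and $f_M(Y_{\leq i}, I_{\leq i}) \leq f_M(Y_{\leq i}, I)$ are immediate. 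Invariant~6 together with monotonicity of $f_M$ in its source argument forces $f_M(Y_{\leq \ell}, I) = |I|$, so $I_{\leq \ell} = I$ and $\mathcal{I}$ is fully partitioned. The main obstacle I anticipate is precisely the structural observation in the second paragraph: although Claim~\ref{cl:path-aug} lets the internal structure of the paths of $\Pi^{(i-1)}$ be reshuffled during augmentation, one must argue via the counting-based pigeonhole above that no new source ever comes from $Y_{\leq i-1}$, so that partitioning the final system by source-set membership is consistent with the intended layering.
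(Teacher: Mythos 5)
Your proposal follows the paper's proof essentially verbatim: build $\Pi^{(1)},\ldots,\Pi^{(\ell)}$ by successive augmentation using Claim~\ref{cl:path-aug}, partition the final system by which $Y_i$ each source lies in, set $I_i := \sink(\Gamma_i)$, and invoke invariants~4 and~6 for disjointness and $I_{\leq \ell} = I$. The only difference is that you explicitly supply the pigeonhole argument showing $\src(\Pi^{(i)}) \cap Y_{\leq i-1} = \src(\Pi^{(i-1)})$, a step the paper leaves implicit in the phrase ``we inductively maintain the property''; this is a correct and helpful clarification, not a different route.
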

\begin{proof}
	We first compute an optimal solution $\Pi_1$ of $G_M(Y_1,I)$ by successive augmentations (using Claim~\ref{cl:path-aug}).  So $|\Pi_1| = f_M(Y_1,I)$.  For $j = 2,\ldots,\ell$, we compute an optimal solution $\Pi_j$ of $G_M(Y_{\leq j},I)$ by successively augmenting $\Pi_{j-1}$.  By Claim~\ref{cl:path-aug}, $\src(\Pi_{j-1}) \subseteq \src(\Pi_j)$.  Therefore, we inductively maintain the property that for all $i \in [1,j]$, $\Pi_j$ contains $|\Pi_i| = f_M(Y_{\leq i},I)$ node-disjoint paths from $Y_{\leq i}$ to $I$.  In the end, $I = \sink(\Pi_\ell)$ by invariant~6 in Table~\ref{tb:invar}.  We obtain the canonical decomposition and canonical solution as follows: for every $i \in [1,\ell]$, let $\Gamma_i$ be the subset of paths in $\Pi_\ell$ from $Y_i$ to $I$, let $I_i = \sink(\Gamma_i)$, and let ${\cal I}_i$ be the subset of edges in $\cal I$ that cover the players in $I_i$.  The $\Gamma_i$'s are disjoint because the $Y_i$'s are disjoint by invariant~4 in Table~\ref{tb:invar}.
\end{proof}

Note that $I_\ell$ and $\Gamma_\ell$ are actually empty because the invariant 6 in Table~\ref{tb:invar} ensures that $f_M(Y_{\leq \ell-1},I) = |I|$ when we enter the collapse phase.

Consider $\Gamma_i$. The sources (which are also sinks) of the trivial paths in $\Gamma_i$ are players covered by unblocked addable thin edges in ${\cal I}_i$, and can be satisfied by these thin edges. Recall that the non-trivial paths in $\Gamma_i$ are alternating paths with respect to $M$. If we flip these alternating paths, their sources can be satisfied by fat resources. Their  sinks can be satisfied by thin edges in ${\cal I}_i$. The subset of ${\cal Y}_i$ that cover $\src(\Gamma_i)$ should then be removed from ${\cal E}$ because the players in $\src(\Gamma_i)$ are now satisfied by either a fat edge or a thin edge from ${\cal I}_i$. This subset of ${\cal Y}_i$ should also be removed from ${\cal Y}_i$ because they no longer block edges in ${\cal X}_i$. A layer is collapsible if a certain portion of its blocking edges can be removed.  More precisely, for any $i \in [1,\ell]$, $L_i$ is {\bf collapsible}~\cite{AKS17} if:
\begin{quote}
	$\exists$ a canonical decomposition 
		${\cal I}_1 \cup {\cal I}_2 \cup \ldots {\cal I}_\ell$ of $\cal I$ 
	such that $|I_i| \geq \mu |Y_i|$, where $\mu$ is a constant that will be defined later.
\end{quote}
By Lemma~\ref{lem:canon}, the collapsibility of a layer can be checked in $\poly(\ell,m,n)$ time.

\paragraph{Collapse layers.}  Let $(L_1,\ldots,L_\ell)$ denote the current layers in the stack.  The routine {\sc Collapse} below checks whether some layer is collapsible, and if yes, it collapses layers in the stack until no collapsible layer is left.  The execution of {\sc Collapse} may update the stack, ${\cal I}$, and the current partial allocation, including both the maximum matching $M$ in $G$ and the set of thin edges $\cal E$ in the partial allocation.  {\sc Collapse} works in the same manner as its counterpart in~\cite{AKS17}, but there are small differences in the presentation.

\begin{quote}
\noindent {\sc Collapse}$(M, {\cal E}, {\cal I}, (L_1,\cdots,L_\ell))$
\begin{enumerate}
	\item Compute a canonical decomposition ${\cal I}_1 \cup {\cal I}_2 \cup \cdots {\cal I}_\ell$ of ${\cal I}$ and a canonical solution $\Gamma_1 \cup \Gamma_2 \cup \cdots \Gamma_\ell$ of $G_M(Y_{\leq \ell},I)$.  If no layer is collapsible, go to build phase.  Otherwise, let $L_t$ be the collapsible layer with the smallest index $t$.\label{step:collapse-1}

	\item Remove all layers above $L_t$ from the stack.  Set ${\cal I} := {\cal I}_{\leq t-1}$. \label{step:collapse-2}

	\item Recall that $\src(\Gamma_t) \subseteq Y_t$ by Lemma~\ref{lem:canon}.  Let ${\cal V}$ denote the subset of edges in ${\cal Y}_t$ that cover players in $\src(\Gamma_t)$. \label{step:collapse-3}
	\begin{enumerate}
		\item Update the maximum matching $M$ by flipping the non-trivial paths in $\Gamma_t$, i.e., set $M := M \oplus \Gamma^+_{t}$ where $\Gamma^+_{t}$ is the set of non-trivial paths in $\Gamma_{t}$.  This update matches the sources of non-trivial paths in $\Gamma_t$, and makes their sinks unmatched. 
	
		\item Add to ${\cal E}$ the edges in ${\cal I}_t$, i.e., set ${\cal E} := {\cal E} \cup {\cal I}_t$. The sinks of non-trivial paths in $\Gamma_t$ and the sources (which are also sinks) of trivial paths $\Gamma_t$ are now satisfied by edges in ${\cal I}_t$.

		\item Each player in $\src(\Gamma_t)$ is now satisfied by either a fat resource or a thin edge from ${\cal I}_t$. If $t = 1$, then $p_0$ is already satisfied, and the algorithm terminates.  Assume that $t \geq 2$.  Then ${\cal V} \subseteq {\cal Y}_t \subseteq {\cal E}$. Edges in $\cal V$ can be removed from $\cal E$, so set ${\cal E} := {\cal E} \setminus {\cal V}$.  Consequently, edges in $\cal V$ no longer block edges in ${\cal X}_t$, so set ${\cal Y}_t := {\cal Y}_t \setminus {\cal V}$.  
	\end{enumerate}

	\item If $t \geq 2$, we need to update ${\cal X}_t$ because the removal of $\cal V$ from ${\cal E}$ (and hence ${\cal Y}_t$) may make some edges in ${\cal X}_t$ unblocked.  For each edge $(p,B) \in {\cal X}_t$ that becomes unblocked, perform the following operations: \label{step:collapse-4}
	\begin{enumerate}
		\item Remove $(p,B)$ from ${\cal X}_t$.
		
		\item If $f_M(Y_{\leq t-1}, I\cup \{p\}) = f_M(Y_{\leq t-1}, I) + 1$, then add $(p,B')$ to $\cal I$, where $B'$ is an arbitrary minimal subset of $B$ such that $\val(B') \geq \tau/\lambda$ and $B'$ excludes the resources used in $\cal E$.
	\end{enumerate}  

	\item If $t = 1$, step~\ref{step:collapse-3} already satisfied the player $p_0$ in the bottommost layer in the stack, so the algorithm terminates.  Otherwise, update $\ell := t$ and go back to step~\ref{step:collapse-1}.
\end{enumerate}
\end{quote}

When we shrink a layer in the stack, the basis for generating the layers above it is no longer valid, and therefore, the easiest handling is to remove all such layers.  In turn, it means that we should shrink the lowest collapsible layer $L_t$ in the stack, which explains the choice of $t$ in step~\ref{step:collapse-1}.  In one iteration, the layers $L_1,\ldots,L_{t-1}$ are preserved, $L_t$ is updated, and the updated $L_t$ becomes the topmost layer in the stack.  One may wonder if it is possible that ${\cal Y}_t$ becomes empty after step~\ref{step:collapse-3} and so ${\cal X}_t$ also becomes empty after step~\ref{step:collapse-4}.  We will establish the result, Lemma~\ref{lem:key} in Section~\ref{sec:key}, that if $|Y_{i+1}| < \sqrt{\mu}|Y_{\leq i}|$ for some $i$, then some layer below the $(i+1)$-th layer is collapsible.  Therefore, if ${\cal Y}_t$ becomes empty, some layer below $L_t$ is collapsible and so $L_t$ will be removed in the next iteration of steps~\ref{step:collapse-2}--\ref{step:collapse-4} of {\sc Collapse}.

We need to show that invariants~1--7 in table~\ref{tb:invar} are satisfied after {\sc Collapse} terminates so that we are ready to enter the build phase.

\begin{lemma}
	{\sc Collapse} maintains invariants~1--7 in Table~\ref{tb:invar}.
\end{lemma}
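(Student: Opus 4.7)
The plan is to show preservation through a single iteration of steps~\ref{step:collapse-2}--\ref{step:collapse-4}, since the outer loop (steps~\ref{step:collapse-1} and step~5) just reapplies the same argument. Fix the lowest collapsible layer $L_t$ identified in step~\ref{step:collapse-1}, together with the canonical decomposition ${\cal I}_1 \cup \cdots \cup {\cal I}_\ell$ and the canonical solution $\Gamma_1 \cup \cdots \cup \Gamma_\ell$ supplied by Lemma~\ref{lem:canon}. Since all layers above $L_t$ are discarded and layers with index below $t$ have their ${\cal X}_i,{\cal Y}_i$ untouched, it suffices to check that the updates to $M$, $\cal E$, $\cal I$, ${\cal X}_t$, and ${\cal Y}_t$ preserve every invariant.

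Invariants~1--5 are structural and can be dispatched by direct case analysis. For each edge $(p,B')$ that step~\ref{step:collapse-4}(b) adds to $\cal I$, minimality of $B'$ together with the thinness bound $v_r < \tau/\lambda$ forces $\val(B') \in [\tau/\lambda, 2\tau/\lambda)$, and resource-disjointness from ${\cal X}^{new}_{\leq t} \cup {\cal I}$ is inherited from the parent edge $(p,B) \in {\cal X}_t$ by the old invariant~1; this handles invariant~1. Invariant~2 is preserved because the surviving $\cal I$ is contained in the old ${\cal I}_{\leq t-1}$, which was disjoint from the old $\cal E$ by invariant~2 and from ${\cal I}_t$ (now added to $\cal E$) by invariant~1, while $B'$ in step~\ref{step:collapse-4}(b) is explicitly chosen to avoid the updated $\cal E$. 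For invariant~3 at layer $t$, step~\ref{step:collapse-4} only removes from ${\cal X}_t$ those edges that become unblocked after $\cal V$ is deleted from $\cal E$, so every surviving edge in ${\cal X}_t^{new}$ still shares a resource with some edge in ${\cal Y}_t \setminus {\cal V}$. Invariants~4 and~5 then follow by monotonicity since we only shrink sets.

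Invariant~6 uses the key structural fact that $\Gamma_{\leq t-1}$ and $\Gamma^+_t$ are node-disjoint inside the canonical solution. Since $G_{M\oplus \Gamma^+_t}$ is obtained from $G_M$ by reversing precisely the edges in $\Gamma^+_t$, the paths in $\Gamma_{\leq t-1}$ remain valid in $G_{M^{new}}$ and witness $f_{M^{new}}(Y_{\leq t-1}, I_{\leq t-1}) \geq |I_{\leq t-1}|$; the matching upper bound is immediate. Each subsequent insertion into $\cal I$ in step~\ref{step:collapse-4}(b) is gated by an explicit check that the new player increments $f$ by exactly one, so the equality $f_{M^{new}}(Y_{\leq t-1}, I) = |I|$ is preserved inductively.

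Invariant~7 is the main obstacle. For each $i \in [1,t-1]$ I need to exhibit $d_{i+1}$ node-disjoint paths in $G_{M^{new}}$ from $Y_{\leq i}$ to $X^{new}_{\leq i+1} \cup I^{new}$, starting from a witness $\Pi$ of $d_{i+1}$ paths in $G_M(Y_{\leq i}, X_{\leq i+1} \cup I)$ supplied by the pre-collapse invariant. The plan is to view $\Pi$ and $\Gamma^+_t$ jointly as flows in the $s$--$t$ network, cancel the cycles produced when they share interior vertices, and then invoke Claims~\ref{cl:path-reroute} and~\ref{cl:path-aug} iteratively in the residual graph to redirect any sinks of $\Pi$ that fall into $I\setminus I^{new} = I_{\geq t}$ onto valid sinks in $I^{new}$, exploiting the alternative routes supplied by $\Gamma_{\leq t-1}$. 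For the boundary case $i = t-1$ one must additionally certify that players removed from ${\cal X}_t$ in step~\ref{step:collapse-4}(b) for which the increment check failed were not essential sinks; this I expect to follow from the contrapositive of Claim~\ref{cl:more-path}. The hardest technical point is controlling the interaction between $\Pi$ and $\Gamma^+_t$ at shared interior vertices, since flipping $\Gamma^+_t$ can disrupt arbitrary interior portions of paths in $\Pi$ and force delicate global rerouting.
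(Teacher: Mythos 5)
Your handling of invariants~1--5 matches the paper's, and for invariant~6 you have exactly the right observation (that $\Gamma_{\leq t-1}$ and $\Gamma_t^+$ are node-disjoint, so $\Gamma_{\leq t-1}$ survives the matching flip and continues to certify the equality). The paper additionally notes that each insertion in step~4(b) is gated on an explicit increment check, which you also use, so invariant~6 is fine.

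The gap is invariant~7, and you have in effect named it yourself. Your plan starts from an \emph{arbitrary} witness $\Pi$ of $f_M(Y_{\leq i}, X_{\leq i+1} \cup I) \geq d_{i+1}$ and then tries to repair it after flipping $\Gamma_t^+$ by cancelling shared structure and rerouting sinks out of $I_{\geq t}$. That plan has two holes you do not close. First, node-sharing between $\Pi$ and $\Gamma_t^+$ is not the same as edge-cancellable flow overlap: the two path systems are individually node-disjoint but can cross at a vertex without creating a cycle to cancel, and flipping $\Gamma_t^+$ then severs the $\Pi$-path through that vertex with nothing to splice it back to. Second, sinks of $\Pi$ that land in $I_t \cup \cdots \cup I_\ell$ cannot simply be rerouted into $I_{\leq t-1}$: the canonical decomposition property $f_M(Y_{\leq i}, I) = |I_{\leq i}|$ is precisely the obstruction saying no extra paths from $Y_{\leq i}$ into $I$ exist. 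The paper's key idea --- which is absent from your sketch --- is to not take an arbitrary witness but to \emph{build} the witness $\Pi_i$ by starting from $\Gamma_{\leq i}$ and successively augmenting it (Claim~\ref{cl:path-aug}) toward $X_{\leq i+1} \cup I$. This guarantees from the outset that $\sink(\Pi_i) \supseteq I_{\leq i}$ and that no sink lies in $I_{>i}$, and it enables the decisive contradiction: if an augmenting path $\gamma$ ever touched a node of $\Gamma_t$, one could switch at that node to obtain a path to a sink in $I_t$, producing a solution of $G_M(Y_{\leq i}, I)$ of size exceeding $|I_{\leq i}|$ --- contradicting the canonical decomposition. That is what yields node-disjointness of $\Pi_i$ from $\Gamma_t$, and without it the rerouting you propose has no foothold. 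Finally, for the removals in step~4 your appeal to ``the contrapositive of Claim~\ref{cl:more-path}'' is pointing at the wrong direction: the paper applies Claim~\ref{cl:more-path} directly to conclude that whenever removing a player $p$ from $X_t$ drops $f_{M'}(Y_{\leq t-1}, X_{\leq t}\cup I)$, the gating test in step~4(b) is passed and $p$ is added to $I$, restoring the count.
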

\begin{proof}
	It suffices to show that invariants~1--7 are preserved after collapsing the lowest collapsible layer $L_t$ in steps~\ref{step:collapse-2}--\ref{step:collapse-4}.  Clearly, invariant~1 is preserved by the working of {\sc Collapse}.  Consider invariant~2. Since $\cal{I}$ is updated to ${\cal I}_{\leq t-1}$, the inclusion of ${\cal I}_t$ into ${\cal E}$ in step~\ref{step:collapse-3}(b) does not break invariant~2.  Step~\ref{step:collapse-3}(c) deletes edges from $\cal E$, which clearly preserves invariant~2.  In step~\ref{step:collapse-4}(b), all edges added to $\cal I$ exclude the resources used by $\cal E$, so invariant~2 is preserved.  Let $M$ denote the maximum  matching in the current partial allocation.  Let $(L_1,L_2,\ldots,L_\ell)$ be the layers in the stack before step~2.  It suffices to consider $(L_1,L_2,\ldots,L_t)$ for invariants~3--7 as $L_t$ will become stack top after one iteration of steps~\ref{step:collapse-2}--\ref{step:collapse-4}.

	Consider invariant~3.  Only steps~\ref{step:collapse-3}(b) and~\ref{step:collapse-3}(c) may affect it.  In step~\ref{step:collapse-3}(b), the edges that are added to $\cal E$ are from $\cal I$.   By invariant~1, no edge from $\cal I$ shares any resource with any edge in ${\cal X}_{\leq t}$, so invariant~3 is preserved.  In step~\ref{step:collapse-3}(c), we may delete edges from ${\cal Y}_t$, but these edges are also deleted from $\cal E$, so invariant~3 still holds.  
	
	Invariant~4 holds because {\sc Collapse} never adds any new edge to any ${\cal Y}_i$, and when some edges are removed from $\cal E$ by {\sc Collapse}, they are also removed from ${\cal Y}_t$.
	
	Invariant~5 follows from invariants~3 and~4 and the fact that edges in $\cal E$ do not share any resource.
	
	Consider invariant~6.  Since the topmost layer in the stack is going to be $L_t$, invariant~6 is concerned with $f_M(Y_{\leq t-1},I)$.  By the definition of a canonical solution, $\Gamma_{t-1}$ certifies that 
		$f_M(Y_{\leq t-1},I_{\leq t-1}) = |I_{\leq t-1}| \iff f_M(Y_{\leq t-1}, I) = |I|$ 
	as ${\cal I} := {\cal I}_{\leq t-1}$ in step~\ref{step:collapse-2}.  In step~\ref{step:collapse-3}, only step~\ref{step:collapse-3}(a) may affect the equality $f_M(Y_{\leq t-1},I) = |I|$ because $M$ may be changed by flipping the alternating paths in $\Gamma^+_t$.   Let $M'$ denote the updated matching.  $\Gamma_t^+$ is node-disjoint from $\Gamma_{\leq t-1}$, so flipping the alternating paths in $\Gamma^+_t$ does not affect $\Gamma_{\leq t-1}$.   Therefore, $\Gamma_{\leq t-1}$ still certifies that 
		$f_{M'}(Y_{\leq t-1},I_{\leq t-1}) = |I_{\leq t-1}| \iff f_{M'}(Y_{\leq t-1},I) = |I|$, 
	so step~\ref{step:collapse-3} preserves invariant~6.  In step~\ref{step:collapse-4}, a new edge $(p,B)$ is inserted into $\cal I$ only if 
		$f_{M'}(Y_{\leq t-1},I \cup \{p\}) > f_{M'}(Y_{\leq t-1},I)$.  
	Thus, when the size of $\cal I$ increases by one, $f_{M'}(Y_{\leq t-1},I)$ also increases by one.  So step~\ref{step:collapse-4} preserves invariant~6.
	
	Consider invariant~7.  Since it holds before {\sc Collapse}, we conclude that for all $i \in [1,t-1]$, $f_M(Y_{\leq i}, X_{\leq i+1} \allowbreak \cup I) \geq d_{i+1}$ before step~\ref{step:collapse-2}.  Since $L_t$ is going to be the topmost layer, we only need to show that these inequalities hold after steps~\ref{step:collapse-2}--\ref{step:collapse-4}.  Take any index $i \in [1,t-1]$.  We claim that, before the execution of steps~\ref{step:collapse-2}--\ref{step:collapse-4}, $G_M(Y_{\leq i},X_{\leq i+1} \cup I_{\leq i})$ and $G_M(Y_{\leq i},X_{\leq i+1} \cup I)$ have a common optimal solution $\Pi_{i}$ that is node-disjoint from $\Gamma_t$.  We will prove this claim later.   Assume that the claim is true.  It follows that 
		$f_M(Y_{\leq i},X_{\leq i+1} \cup I_{\leq i}) = |\Pi_{i}| = f_M(Y_{\leq i},X_{\leq i+1} \cup I) \geq d_{i+1}$ before the execution of step 2.  
	Step~\ref{step:collapse-2} sets ${\cal I}:= {\cal I}_{\leq t-1}$, and so after that, $\Pi_{i}$ remains an optimal solution of $G_M(Y_{\leq i},X_{\leq i+1} \cup I)$, and hence 
		$f_M(Y_{\leq i},X_{\leq i+1} \cup I) = |\Pi_{i}| \geq d_{i+1}$.
	Step~\ref{step:collapse-3} changes the matching $M$ by flipping the alternating paths in $\Gamma^+_t$.  Let $M'$ denote the updated matching.  Since $\Gamma_t^+$ is node-disjoint from $\Pi_{i}$ by the claim, flipping the paths in $\Gamma^+_t$ does not affect $\Pi_{i}$, meaning that $\Pi_{i}$ is still a feasible solution of $G_{M'}(Y_{\leq i},X_{\leq i+1} \cup I)$. Thus, 	$f_{M'}(Y_{\leq i},X_{\leq i+1} \cup I)
		\geq |\Pi_{i}| 
		= f_M(Y_{\leq i},X_{\leq i+1} \cup I) 
		\geq d_{i+1}$ 
	after step~3.  In step~\ref{step:collapse-4}, the removal of edges from ${\cal X}_t$ can only affect $f_{M'}(Y_{\leq t-1},X_{\leq t} \cup I)$ and so for $i \in [1,t-2]$, $f_{M'}(Y_{\leq i},X_{\leq i+1} \cup I) \geq d_{i+1}$ after step~\ref{step:collapse-4}.  If $f_{M'}(Y_{\leq t-1},X_{\leq t} \cup I)$ decreases after removing an edge $(p,B)$ from ${\cal X}_t$, that is, 
		$f_{M'}(Y_{\leq t-1}, X_{\leq t} \cup I) 
		= f_{M'}(Y_{\leq t-1}, (X_{\leq t} \cup I) \setminus \{p\}) + 1$, 
	then when we reach step~\ref{step:collapse-4}, Claim~\ref{cl:more-path} will imply that 
		$f_{M'}(Y_{\leq t-1},I \cup \{p\}) = f_{M'}(Y_{\leq t-1},I) + 1$, 
	and so step~\ref{step:collapse-4} will add $p$ to $I$.  Afterwards, $f_{M'}(Y_{\leq t-1},X_{\leq t} \cup I)$ returns to its value prior to the removal of $(p,B)$ from ${\cal X}_t$.  As a result, invariant~7 holds after step~\ref{step:collapse-4}.

	It remains to prove the claim: before executing steps~2--4, for 
		$i \in [1,t-1]$, $G_M(Y_{\leq i},X_{\leq i+1} \cup I_{\leq i})$ 
	and $G_M(Y_{\leq i},X_{\leq i+1} \cup I)$ have a common optimal solution $\Pi_{i}$ that is node-disjoint from $\Gamma_t$.  For convenience, define 
		$\Gamma_{\leq i} = \Gamma_1 \cup \Gamma_2 \ldots \cup \Gamma_i$.
	By the definition of a canonical solution, $\Gamma_{\leq i}$ is an optimal solution of $G_M(Y_{\leq i},I)$.  We augment $\Gamma_{\leq i}$ successively (using Claim~\ref{cl:path-aug}) to obtain $\Pi_{i}$ as an optimal solution of $G_M(Y_{\leq i},X_{\leq i+1} \cup I)$.   It remains to show that $\Pi_{i}$ is an optimal solution of $G_M(Y_{\leq i},X_{\leq i+1} \cup I_{\leq i})$ and that $\Pi_{i}$ is node-disjoint from $\Gamma_t$.  By the definition of canonical solution, $I_{\leq i} = \sink(\Gamma_{\leq i})$.  By claim~\ref{cl:path-aug}, the successive augmentations maintain $I_{\leq i} \subseteq \sink(\Pi_{i})$.  No player in $I_j$ for any $j > i$ can be a sink in $\Pi_{i}$. Otherwise, there would be node-disjoint paths that originate from $Y_{\leq i}$, cover all players in $I_{\leq i}$, and another player in $I_j$ for some $j > i$, contradicting the requirement of a canonical decomposition that $f_M(Y_{\leq i},I) = f_M(Y_{\leq i},I_{\leq i})$.  Hence, $\Pi_{i}$ is also an optimal solution of $G_M(Y_{\leq i},X_{\leq i+1} \cup I_{\leq i})$.
	Next we prove that $\Pi_{i}$ is node-disjoint from $\Gamma_t$.  Recall that $\Pi_{i}$ is obtained from $\Gamma_{\leq i}$ by successive augmentations using Claim~\ref{cl:path-aug}.  Let $\{\Gamma_{\leq i} = \Phi_1, \Phi_2, \ldots, \Phi_k = \Pi_{i}\}$ be the successive sets of node-disjoint paths that are obtained during the successive augmentations.  By Claim~\ref{cl:path-aug}, $I_{\leq i} = \sink(\Gamma_{\leq i}) \subseteq \sink(\Phi_j)$ for all $j \in [1,k]$.  Note that $\Phi_1 = \Gamma_{\leq i}$ is node-disjoint from $\Gamma_t$.  Suppose that $\Phi_{j-1}$ is node-disjoint from $\Gamma_t$ for some $j \in [2,k]$.  We argue that $\Phi_j$ is  node-disjoint from $\Gamma_t$ and then so is $\Phi_k = \Pi_{i}$ by induction.  Recall that $\Phi_{j-1}^+$ is the set of non-trivial paths in $\Phi_{j-1}$. Consider $G_{M\oplus \Phi^+_{j-1}}$.  Let $\gamma$ be the path in $G_{M\oplus \Phi^+_{j-1}}$ that we use to augment $\Phi_{j-1}$ to produce $\Phi_j$.  Since $\Phi_{j-1}$ is node-disjoint from $\Gamma_t$, $G_{M\oplus \Phi^+_{j-1}}$ contains $\Gamma_t$. If $\gamma$ is node-disjoint from $\Gamma_t$, then by Claim~\ref{cl:path-aug}, $\Phi_{j}$ must be node-disjoint from $\Gamma_t$ because the vertex set of $\Phi_j$ is a subset of the vertices of $\Phi_{j-1}$ and $\gamma$.  If $\gamma$ shares a node with some path in $\Gamma_t$, then by switching at that shared node, we have a path $\gamma'$ in $G_{M\oplus \Phi^+_{j-1}}$ that originates from $Y_{\leq i}$ and ends at a sink of $\Gamma_t$ which is in $I_t$.  It means that if we augment $\Phi_{j-1}$ using $\gamma'$, we would obtain a feasible solution of $G_M(Y_{\leq i},X_{\leq i+1} \cup I)$ whose sinks contains all players of $I_{\leq i}$ and some player in $I_t$.  This allows us to extract a feasible solution of $G_M(Y_{\leq i},I)$ whose sinks contains all players of $I_{\leq i}$ and some player in $I_t$.  But then $f_M(Y_{\leq i},I) > |I_{\leq i}|$, a contradiction to the definition of canonical decomposition.
\end{proof}

In the following, we prove two more properties that result from the invariants and the working of the algorithm.  They will be used later in the analysis of the approximation ratio.

\begin{lemma}
	\label{lem:invar}
	Let $(L_1,\ldots,L_\ell)$ be the stack for an arbitrary state of the algorithm. 
	\begin{emromani}
		\item For every $j \in [1,\ell]$, $|X_j| \leq z_j$.
		
		\item For every $j \in [1,\ell]$, $d_j \geq \sum_{i = 1}^j z_i$.
	\end{emromani}
\end{lemma}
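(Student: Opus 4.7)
The plan is to prove (i) from a direct observation about the dynamics of ${\cal X}_j$, and to prove (ii) by induction on $j$, using Invariant~7 to bridge the stored quantity $d_{j-1}$ and the freshly recorded $d_j$.

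For (i), I would simply observe that $z_j$ is set to $|X_j|$ at the moment $L_j$ is pushed onto the stack in step~5 of {\sc Build}, and that subsequently no routine ever adds a new edge to ${\cal X}_j$; only step~4(a) of {\sc Collapse} may delete edges from it. Hence $|X_j|$ can only decrease over the lifetime of $L_j$, so $|X_j| \leq z_j$ at every state.

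For (ii), I would induct on $j$. The base case $j = 1$ is immediate since $d_1 = z_1 = 0$. For the inductive step, I would examine the moment at which the current layer $L_j$ was constructed by {\sc Build}. Any later {\sc Collapse} step that modified some lower layer $L_i$ with $i < j$ would have removed $L_j$, so the stored values $d_i,z_i$ for $i < j$ in the current stack coincide with their values at that earlier moment, and the induction hypothesis $d_{j-1} \geq \sum_{i=1}^{j-1} z_i$ is available. Let $M_j$ denote the matching at that moment and $I_0,I_1$ the values of $I$ immediately before and after the construction of $L_j$. The key observation is that every insertion into ${\cal X}_j$ (step~3) or into $\mathcal I$ (step~2) concerns a player that is addable at the time of insertion, so by the definition of addability each insertion increases $f_{M_j}(Y_{\leq j-1}, X_{\leq j} \cup I)$ by exactly one. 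Counting insertions yields
\[
d_j \,=\, f_{M_j}(Y_{\leq j-1}, X_{\leq j-1} \cup I_0) + z_j + |I_1 \setminus I_0|.
\]

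It then remains to lower-bound $f_{M_j}(Y_{\leq j-1}, X_{\leq j-1} \cup I_0)$ by $\sum_{i=1}^{j-1} z_i$. For $j=2$ this is free since $\sum_{i=1}^{1} z_i = 0$. For $j \geq 3$, I would note that enlarging the source set from $Y_{\leq j-2}$ to $Y_{\leq j-1}$ cannot decrease the maximum number of node-disjoint paths, and then invoke Invariant~7 at the state immediately before {\sc Build} starts constructing $L_j$ (applying it to index $j-2 \in [1,\ell-1]$) to conclude $f_{M_j}(Y_{\leq j-2}, X_{\leq j-1} \cup I_0) \geq d_{j-1}$. Combining with the induction hypothesis and the displayed equation gives $d_j \geq \sum_{i=1}^{j} z_i$. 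The main subtlety I anticipate is aligning the stored value $d_{j-1}$, frozen at the construction of $L_{j-1}$, with an estimate taken with respect to the possibly different matching $M_j$ at the later construction of $L_j$; Invariant~7, maintained by both {\sc Build} and {\sc Collapse}, is precisely the tool that makes this comparison valid.
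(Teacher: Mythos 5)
Your proposal is correct and follows essentially the same route as the paper: part (i) is the same observation that $z_j$ is frozen at creation while ${\cal X}_j$ can only shrink, and part (ii) uses the same chain — each insertion of an addable player increments the flow value by one, monotonicity in the source set, Invariant~7 to lower-bound by $d_{j-1}$, then the induction hypothesis. The only cosmetic difference is that you induct on the layer index $j$ within a fixed state and explicitly note the stored $d_i,z_i$ for $i<j$ are unchanged since $L_j$ was built, whereas the paper inducts on the chronological sequence of {\sc Build}/{\sc Collapse} calls; the underlying computation is identical.
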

\begin{proof}
	Recall that $z_j:= |X_j|$ after {\sc Build} completes the construction of $L_j$ for all $j \in [1,\ell]$.  Afterwards, $z_j$ remains unchanged until $L_j$ is removed from the stack.  $X_j$ may shrink in step~\ref{step:collapse-4} of {\sc Collapse}, but it never grows.  As a result, $z_j \geq |X_j|$ at all times.
	
	We prove (ii) by induction on the chronological order of executing {\sc Build} and {\sc Collapse}. Initially, $L_1$ is the only layer in the stack and $d_1 = z_1 = 0$ because both $\cal I$ and ${\cal X}_1$ are empty sets.  Thus, (ii) holds at the beginning.  Suppose that we build a new layer $L_{k + 1}$ in the build phase.  Let $\cal I'$ be the set of unblocked addable edges newly added to $\cal I$ during the construction of $L_{k + 1}$, and let $I'$ be the set of players covered by ${\cal I}'$. Then we have 
	\allowdisplaybreaks
	\begin{align*}
		d_{k+1} & =  f_M(Y_{\leq k},X_{\leq k+1} \cup {I} \cup {I}') \\
		& \geq  f_M(Y_{\leq k},X_{\leq k} \cup {I}) + |{ X}_{k+1} \cup { I}'| 
		\quad\quad \mbox{($\because$ players in ${ X}_{k+1} \cup { I}'$ are addable)}\\
		& \geq  f_M(Y_{\leq k-1},X_{\leq k} \cup {I}) + | X_{k+1} \cup I'|\\
		& \geq  d_k + |X_{k+1}| + |I'| \quad\quad\quad\quad\quad\quad\quad\quad \mbox{($\because$ $X_{k+1} \cap I' = \emptyset$ and invariant~7 in Table~\ref{tb:invar})} \\
		& \geq  d_k + z_{k+1}  \quad\quad\quad\quad\quad\quad\quad\quad\quad\quad\quad\,\,\, (\because z_{k+1} := |X_{k+1}|) \\
		& \geq  \sum_{i=1}^k z_i + z_{k+1}   \quad\quad\quad\quad\quad\quad\quad\quad\quad\quad \mbox{($\because$ inductive assumption)} \\
		& =  \sum_{i=1}^{k+1} z_i.
	\end{align*}
	Thus, {\sc Build} preserves (ii).  {\sc Collapse} has no effect on (ii) because the values $d_i$'s and $z_i$'s will not change once they are computed until layer $L_i$ is removed.
\end{proof}

\begin{lemma}
	\label{lem:non-collapse}
	Let $(L_1,\ldots,L_\ell)$ be the stack for an arbitrary state of the algorithm.  If $L_i$ is not collapsible for all $i \in [1, \ell-1]$, the following properties are satisfied.
	\begin{emromani}
		\item $|I| < \mu|Y_{\leq \ell-1}|$.
		\item For every $j \in [1,\ell-1]$, $|X_{\leq j+1}| > \sum_{i=1}^{j+1} z_i - \mu |Y_{\leq j}|$.
	\end{emromani}
\end{lemma}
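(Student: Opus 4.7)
The plan is to fix a canonical decomposition of $\cal I$ (which exists by Lemma~\ref{lem:canon}) and exploit non-collapsibility layer-by-layer to bound $|I_i|$ by $\mu|Y_i|$, then derive (i) directly and (ii) by sandwiching $d_{j+1}$ between the bound from Lemma~\ref{lem:invar} and the trivial sink-counting upper bound on $f_M(Y_{\leq j}, X_{\leq j+1} \cup I)$.

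For (i), I would first observe that the canonical decomposition has $I_\ell = \emptyset$: invariant~6 gives $f_M(Y_{\leq \ell-1}, I) = |I|$, whereas the canonical decomposition gives $f_M(Y_{\leq \ell-1}, I) = |I_{\leq \ell-1}|$, forcing $|I_\ell| = |I| - |I_{\leq \ell-1}| = 0$. Since $L_i$ is not collapsible for every $i \in [1, \ell-1]$, the very canonical decomposition I fixed must satisfy $|I_i| < \mu|Y_i|$ for each such $i$. Summing over $i \in [1,\ell-1]$ and using $I = I_{\leq \ell-1}$ gives $|I| < \mu|Y_{\leq \ell-1}|$.

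For (ii), I would fix $j \in [1, \ell-1]$ and chain three inequalities:
\begin{equation*}
\sum_{i=1}^{j+1} z_i \;\leq\; d_{j+1} \;\leq\; f_M(Y_{\leq j},\, X_{\leq j+1} \cup I) \;\leq\; |X_{\leq j+1}| + |I_{\leq j}|.
\end{equation*}
The first inequality is Lemma~\ref{lem:invar}(ii), the second is invariant~7. The third, which is the one piece of real work, uses the canonical decomposition as follows: take any optimal solution $\Pi$ of $G_M(Y_{\leq j}, X_{\leq j+1} \cup I)$ and split it into the subcollection $\Pi_I$ of paths ending in $I$ and the subcollection $\Pi_X$ ending in $X_{\leq j+1}$. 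Then $\Pi_I$ is a feasible solution of $G_M(Y_{\leq j}, I)$, so $|\Pi_I| \leq f_M(Y_{\leq j}, I) = |I_{\leq j}|$ by the canonical decomposition property, while $|\Pi_X| \leq |X_{\leq j+1}|$. Combining, $|X_{\leq j+1}| \geq \sum_{i=1}^{j+1} z_i - |I_{\leq j}|$. Since $|I_{\leq j}| = \sum_{i=1}^{j} |I_i| < \mu|Y_{\leq j}|$ by the layer-wise non-collapsibility used in (i), we conclude $|X_{\leq j+1}| > \sum_{i=1}^{j+1} z_i - \mu|Y_{\leq j}|$, as claimed.

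The only subtlety I anticipate is making sure the three chained steps in (ii) refer to the \emph{same} canonical decomposition/matching state being analyzed — both invariant~7 and Lemma~\ref{lem:invar} are statements about the current state, and the sink-counting bound is purely combinatorial for any feasible solution, so this bookkeeping is routine rather than a real obstacle. The argument is short and follows directly from previously established invariants and Claims.
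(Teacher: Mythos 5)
Your proposal is correct and takes essentially the same approach as the paper: fix a canonical decomposition, use invariant~6 to see $I = I_{\leq \ell-1}$, and exploit non-collapsibility via the decomposition and invariant~7 together with Lemma~\ref{lem:invar}(ii). The only difference is stylistic — the paper argues both parts by contradiction (assume the bound fails, then pigeonhole produces a layer with $|I_j| \geq \mu|Y_j|$), whereas you derive the bounds directly by summing the per-layer inequalities $|I_i| < \mu|Y_i|$ and then splitting an optimal solution of $G_M(Y_{\leq j}, X_{\leq j+1}\cup I)$ by where its paths terminate — but the underlying ideas and invariants invoked are identical.
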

\begin{proof}
	By invariant~6 in Table~\ref{tb:invar}, $f_M(Y_{\leq \ell-1}, I) = |I|$.  Hence, in the canonical decomposition of ${\cal I}$, we have $I_{\leq \ell-1} = I$.  If $|I_{\leq \ell-1}| = |I| \geq \mu |Y_{\leq \ell -1}|$, by the pigeonhole principle, there exists an index $j \in[1,\ell-1]$ such that $|I_j|\geq \mu|Y_j|$.  But then layer $L_j$ is collapsible, a contradiction.
	
	Consider (ii).  Assume to the contrary that there exists $k \in [1,\ell-1]$ such that $|X_{\leq k+1}| \leq \sum_{i=1}^{k+1} z_i - \mu|Y_{\leq k}|$.  Equivalently, $\sum_{i=1}^{k+1} z_i \geq |X_{\leq k+1}| + \mu|Y_{\leq k}|$.  By invariant~7 in Table~\ref{tb:invar} and Lemma~\ref{lem:invar}(ii), 
	$f_M(Y_{\leq k}, X_{\leq k+1} \cup I) 
	\geq d_{k+1} \geq \sum_{i=1}^{k+1} z_i \geq |X_{\leq k+1}| + \mu|Y_{\leq k}|$.  So any optimal solution of $G_M(Y_{\leq k},X_{\leq k+1} \cup I)$ contains at least $\mu|Y_{\leq k}|$ node-disjoint paths from $Y_{\leq k}$ to $I$.  It follows that $f_M(Y_{\leq k},I) \geq \mu|Y_{\leq k}|$.  By the definition of canonical decomposition, $|I_{\leq k}| = f_M(Y_{\leq k},I) \geq \mu|Y_{\leq k}|$. By the pigeonhole principle, there exists some $j \in [1,k]$ such that $|I_j| \geq \mu|Y_j|$.  But then layer $L_j$ is collapsible, a contradiction.
\end{proof}

\section{Polynomial running time and binary search}
\label{sec:poly}

It is clear that each call of {\sc Build} and {\sc Collapse} runs in time polynomial in $\ell$, $m$ and $n$.  So we need to give a bound on $\ell$ (the number of layers in the stack) and the total number of calls of {\sc Build} and {\sc Collapse}.

Lemma~\ref{lem:key} below is the key to obtaining such bounds.  Recall that $\mu$ is the constant we use to determine the collapsibility of layers, i.e., $L_i$ is collapsible if $|I_i| \geq \mu|Y_i|$.  By Lemma~\ref{lem:key}, if no layer is collapsible in the current stack, then the size of the each layer is at least a constant fraction of the total size of all layers below it.   
This guarantees that the algorithm never gets stuck: it can either build a new non-empty layer or collapse some layer.  It will also allow us to obtain a logarithmic bound on the maximum number of layers.  The proof of Lemma~\ref{lem:key} is quite involved and it requires establishing several technical results and the use of competing players.  We defer the proof to Section~\ref{sec:analysis}.

\begin{lemma}
	\label{lem:key}
	Assume that the values $\tau$ and $\lambda$ used by the algorithm satisfy the relations $\tau \leq \tau^*$ and $\lambda = 6+\delta$ for an arbitrary constant $\delta \in (0,1)$.  There exists a constant $\mu \in (0,1)$ dependent on $\delta$ such that for any state $(M,{\cal E},{\cal I},(L_1,\ldots,L_\ell))$ of the algorithm, if $|Y_{i+1}| < \sqrt{\mu}|Y_{\leq i}|$ for some $i \in [1,\ell-1]$, then some layer below $L_{i+1}$ must be collapsible.
\end{lemma}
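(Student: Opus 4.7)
The plan is to prove the contrapositive: assuming no layer $L_j$ with $j \leq i$ is collapsible, I would derive $|Y_{i+1}| \geq \sqrt{\mu}|Y_{\leq i}|$ for a constant $\mu$ depending on $\delta$ chosen at the end. First, I would apply Lemma~\ref{lem:non-collapse} to the substack $(L_1,\ldots,L_{i+1})$ to obtain $|I| < \mu|Y_{\leq i}|$ and the sandwich $\sum_{k=1}^{i+1} z_k - \mu|Y_{\leq i}| < |X_{\leq i+1}| \leq \sum_{k=1}^{i+1} z_k$, the upper bound coming from Lemma~\ref{lem:invar}(i). Together with Lemma~\ref{lem:invar}(ii) and invariant~7, these certify that $X_{\leq i+1} \cup I$ is connected to almost all of $Y_{\leq i}$ by node-disjoint paths in $G_M$, up to at most an $O(\mu)|Y_{\leq i}|$ deficit.

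Second, I would bound $\sum_{(q,B)\in{\cal X}_{i+1}} v(B)$ from above. Every edge in ${\cal Y}_{i+1} \subseteq {\cal E}$ originated from some past $\cal I$, so by invariant~1 its value lies in $[\tau/\lambda,2\tau/\lambda]$; hence $\val({\cal Y}_{i+1}) \leq (2\tau/\lambda)|Y_{i+1}|$. Because each $(q,B) \in {\cal X}_{i+1}$ is addable and blocked, the portion of $B$ outside $\cal E$ has value strictly less than $\tau/\lambda$, while by invariant~3 the remaining resources of $B$ are covered by ${\cal Y}_{i+1}$ only. Pairwise resource-disjointness of ${\cal X}_{i+1}$ (invariant~1) then yields
\[
\sum_{(q,B)\in{\cal X}_{i+1}} \bigl(v(B) - \tau/\lambda\bigr) \;\leq\; \val({\cal Y}_{i+1}) \;\leq\; \frac{2\tau}{\lambda}\,|Y_{i+1}|.
\]

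The crux is a matching lower bound on $\sum v(B)$ through a competing-players injection. The greedy rule in step~\ref{step:build-3} of \textsc{Build} inserts the maximal blocked addable edge of nearly the largest value, so any candidate edge of value $v$ available at the moment of insertion forces the inserted edge to have value at least $v$. Using the optimal allocation (of value $\tau^* \geq \tau$ per player) together with the node-disjoint paths produced in the first step, I would construct an injection from a $(1-O(\mu))$-fraction of $Y_{\leq i}$ to players that were addable at some moment during the construction of ${\cal X}_{i+1}$ and whose maximal blocked addable candidate had value close to $\tau$. Tracking the greedy comparisons across the successive insertions would then give $\sum v(B) \geq \alpha(\delta)\,\tau\,|X_{i+1}|$ for a positive constant $\alpha(\delta)$, and a charging argument in which each inserted edge absorbs at most $O(1/\delta)$ competing slots would yield $|X_{i+1}| \geq \Omega(|Y_{\leq i}|)$.

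Substituting into the inequality from the second step and using $\lambda = 6+\delta$ keeps the coefficient of $|Y_{\leq i}|$ strictly positive and forces $|Y_{i+1}| \geq c(\delta)|Y_{\leq i}|$; choosing $\mu$ small enough that $\sqrt{\mu} \leq c(\delta)$ then closes the argument. The main obstacle lies in the competing-players construction itself: it must handle OPT allocations that arbitrarily mix fat and thin resources (fat resources have to be rerouted along alternating paths in $G_M$ so that addability transfers from the OPT recipient to the competing player), control the collision multiplicity so that the value lower bound survives the averaging, and do all of this sharply enough to work with the aggressive $\lambda = 6+\delta$ rather than the weaker $\lambda > 6 + 2\sqrt{10}$ of~\cite{AKS17}. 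This is precisely where the authors' new analytical tool should enter.
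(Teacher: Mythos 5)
Your high-level plan — pit an upper bound on stack value against a lower bound derived from an optimal allocation via a competing-players map, and tune $\mu$ to force a contradiction when $\lambda = 6+\delta$ — is indeed the paper's strategy. But the specific shape of your argument has several gaps that I do not think can be repaired without essentially rebuilding it along the paper's lines.

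First, your injection runs in the wrong direction and has the wrong target. The paper's map $\varphi$ has domain inside $X_{\leq \ell}$ (the players whose blocked addable edges were greedily selected) and image inside $\overbar{P}_{M^*}$ (players who in OPT receive only thin resources worth $\geq \tau$). Property (ii) of Lemma~\ref{lem:inj} then says that $\varphi(p)$ was addable at the very moment $p$ was chosen; since the greedy rule preferred $p$, either $\varphi(p)$ had no addable edge, or $\varphi(p)$'s maximal addable edge had value at most $w_p$. Either way, at least $\tau - w_p$ worth of $\varphi(p)$'s OPT resources are already in the stack. You instead propose a map from (a fraction of) $Y_{\leq i}$ to ``players that were addable during the construction of ${\cal X}_{i+1}$,'' which does not supply the two pieces the argument actually needs: an OPT-side anchor ($\overbar{P}_{M^*}$) and a per-player budget ($\tau - w_p$) to charge against the stack.

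Second, your proposed lower bound $\sum_{(q,B)\in{\cal X}_{i+1}} v(B) \geq \alpha(\delta)\,\tau\,|X_{i+1}|$ does not follow from the greedy rule. Greedy tells you that the competing player's \emph{maximal} addable edge had value $\leq v(B)$, which (because it is maximal) means the competing player's OPT resources \emph{not} covered by the stack are worth at most $v(B)$ — i.e., at least $\tau - v(B)$ worth of them \emph{are} in the stack. This is a lower bound on the value of resources in the entire stack ${\cal X}'_{\leq k+1}\cup{\cal Y}'_{\leq k}$, not on the values $v(B)$ of the newly inserted edges, and the two do not coincide: an inserted edge can have small $v(B)$ precisely because the competing player's OPT resources lie in older layers. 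The paper therefore bounds $\val({\cal X}'_{\leq k+1}\cup{\cal Y}'_{\leq k})$ from both sides (Claims~\ref{cl:upper} and~\ref{cl:lower}), whereas you bound only $\val({\cal X}_{i+1}\cup{\cal Y}_{i+1})$ from above, which does not match the quantity your lower bound controls.

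Third, the step $|X_{i+1}| \geq \Omega(|Y_{\leq i}|)$ is simply false in general. If the OPT resources of essentially every addable player are already trapped in the stack, then no addable player has a blocked addable edge, ${\cal X}_{i+1}$ is empty, and yet no layer need be collapsible. The paper explicitly treats this case: property (iv) of $\varphi$ plus Claim~\ref{cl:equal-path-num} show that roughly $|Y'_{\leq k}| - |X'_{\leq k+1}| - |I'|$ players in $\overbar{P}_{M^*}\setminus\im_\varphi$ are still addable when {\sc Build} terminates step~3, and since they have no addable edges, each contributes at least $\tau - \tau/\lambda$ to the stack value (leading to~\eqref{eq:lower-key-1}). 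Your charging argument (``each inserted edge absorbs $O(1/\delta)$ competing slots'') cannot see these players at all, because they are never inserted. Finally, note that the paper does need the minimality of the offending index $k$: it ensures $|Y'_{j+1}| \geq \sqrt{\mu}|Y'_{\leq j}|$ for $j<k$, which is what keeps the accumulated error term $\sum_{j}|Y'_{\leq j-1}|$ of order $\mu^{-1/2}|Y'_{\leq k}|$ in~\eqref{eq:lower-7}. Since the competing-player contributions in Claim~\ref{cl:lower-1} are assessed at different past times $t_j$, and each time carries an $O(\mu|Y'_{\leq j-1}|)$ slack from ${\cal I}^{t_j}$ and from the shrinkage of ${\cal X}^{t_j}_j$ to ${\cal X}'_j$, this geometric control is essential and is absent from your contrapositive formulation.
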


With Lemma~\ref{lem:key},  an argument similar to that in~\cite[Lemmas~4.10 and 4.11]{AKS17} can show that given a partial allocation, our algorithm can extend it to satisfy one more player in polynomial time.  By repeating the algorithm at most $n$ times, we can extend a maximum matching  of $G$ to an allocation that satisfies all the players.

We sketch the argument in~\cite{AKS17} in the following.  (The counterpart of Lemma~\ref{lem:key} in~\cite{AKS17} uses $\lambda \approx 12.325+\delta$.)   Recall that $P$ is the set of all players.  As $|Y_{\leq i}| \leq |P|$ and $|Y_{\leq i}|$ grows by a factor $1+\sqrt{\mu}$ from layer to layer when no layer is collapsible, the number of layers in the stack is at most $\log_{1+\sqrt{\mu}} |P|$.  Define a signature vector $(s_1,s_2,\ldots,s_\ell,\infty)$ such that $s_i := \lfloor \log_{1/(1-\mu)} |Y_i|\mu^{-i/2}\rfloor$.  

First, it is shown that the signature vector decreases lexicographically after one call of {\sc Build} and one call of {\sc Collapse}, and the coordinates in the signature vector are non-decreasing from left to right~\cite[Lemma 4.10]{AKS17}.  When a new layer is added, the vector gains a new second to rightmost coordinate and so the lexicographical order decreases.  After collapsing the last layer $L_t$ in the collapse phase, $|Y_t|$ drops to $(1-\mu)|Y_t|$ or less.  One can then verify that $s_t$ drops by one or more.  So the signature vector decreases lexicographically.  After  {\sc Collapse}, no layer is collapsible. Lemma~\ref{lem:key} implies immediately that $s_i \geq s_{i-1}$.  

Second, one can verify from the definition that each $s_i$ is bounded by an integer $U$ of value at most $\log |P| \cdot O(\mu^{-3/2}\log\frac{1}{\mu})$. The number of layers is at most $\log_{1+\sqrt{\mu}} |P| \leq U$.  Thus, the sum of coordinates in any signature vector is at most $U^2$.  The number of distinct partitions of the integer $U^2$ is $O(|P|^{O(\mu^{-3/2}\log\frac{1}{\mu})})$, which is an upper bound on the number of distinct signature vectors.    This bounds the number of calls to {\sc Build} and {\sc Collapse}.  Details can be found in~\cite[Lemma 4.11]{AKS17}.  Since each call of {\sc Build} and {\sc Collapse} runs in $\poly(\ell,m,n)$, we conclude that the algorithm runs in $\poly(m,n)$ time.  

The remaining task is to binary search for $\tau^*$.  If we use a value $\tau$ that is at most $\tau^*$, the algorithm terminates in polynomial time with an allocation.  If we use a value $\tau > \tau^*$, there are two possible outcomes.  We may be lucky and always have some collapsible layer below $L_{i+1}$ whenever $|Y_{i+1}| < \sqrt{\mu}|Y_{\leq i}|$ for some $i \in [1,\ell-1]$.  In this case, the algorithm returns in polynomial time an allocation of value at least $\tau/\lambda \geq \tau^*/\lambda$.  The second outcome is that no layer is collapsible at some point, but $|Y_{i+1}| < \sqrt{\mu}|Y_{\leq i}|$ for some $i \in [1,\ell-1]$.  This can be detected in $O(1)$ time by maintaining $|Y_{i+1}|$ and $|Y_{\leq i}|$, which allows us to detect that $\tau > \tau^*$ and halt the algorithm.  Since this is the first violation of the property that some layer below $L_{i+1}$ is collapsible if $|Y_{i+1}| < \sqrt{\mu}|Y_{\leq i}|$ for some $i \in [1,\ell-1]$, the running time before halting is polynomial in $m$ and $n$.  The last allocation returned by the algorithm during the binary search has value at least $\tau^*/\lambda = \tau^*/(6+\delta)$.  We will see in Section~\ref{sec:key} that a smaller $\delta$ requires a smaller $\mu$ and hence a higher running time.

In summary, the initial range for $\tau$ for the binary search process is $[a,b]$, where $a$ is initialized to zero, and $b$ is initialized to be the sum of values of all resources divided by the number of players (rounded down).  In each binary search probe (i.e., $\tau = \lfloor (a+b)/2 \rfloor$), if the algorithm returns an allocation of value at least $\tau/\lambda$, we update $a := \tau+1$ and recurse on $[a,b]$.  Otherwise, we update $b := \tau-1$ and recurse on $[a,b]$.  When the range becomes empty, we stop.  The last allocation returned by the algorithm during the binary search has value at least $\tau^*/\lambda = \tau^*/(6+\delta)$.  This gives the main theorem of this paper as stated below.

\begin{theorem}
	For any fixed constant $\delta \in (0,1)$, there is an algorithm for the restricted max-min fair allocation problem that returns a $(6+\delta)$-approximate solution in time polynomial in the number of players and the number of resources.
\end{theorem}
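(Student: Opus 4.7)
The plan is to reduce the theorem to Lemma~\ref{lem:key} together with a binary search on $\tau$, assembling the pieces already set up in Sections~\ref{sec:pre} and~\ref{sec:alg}. For a fixed value $\tau \leq \tau^*$, I would start from a maximum matching $M$ of $G$ and ${\cal E} = \emptyset$, and repeatedly apply the following one-player-extension procedure: choose an unsatisfied player $p_0$, initialize the stack with the layer $(\emptyset,\{(p_0,\emptyset)\},0,0)$ and ${\cal I} = \emptyset$, and alternate {\sc Build} and {\sc Collapse} until the stack empties (equivalently, until $p_0$ is satisfied). At most $|P| \leq m$ outer iterations satisfy every player, so it suffices to show that each outer iteration terminates in $\poly(m,n)$ time.

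Since Lemma~\ref{lem:time-build} and Lemma~\ref{lem:canon} ensure that individual {\sc Build} and {\sc Collapse} calls run in $\poly(\ell,m,n)$ time, I need a polynomial bound on both the stack depth $\ell$ and the total number of {\sc Build}/{\sc Collapse} alternations. After any execution of {\sc Collapse}, no layer in the stack is collapsible, so Lemma~\ref{lem:key} yields $|Y_{i+1}| \geq \sqrt{\mu}\,|Y_{\leq i}|$, whence $|Y_{\leq i+1}| \geq (1+\sqrt{\mu})\,|Y_{\leq i}|$ and therefore $\ell \leq \log_{1+\sqrt{\mu}}|P|$. For the number of alternations I would adopt the signature-vector argument of~\cite[Lemmas~4.10--4.11]{AKS17}: define $s_i = \lfloor\log_{1/(1-\mu)}|Y_i|\mu^{-i/2}\rfloor$ and the signature $(s_1,\ldots,s_\ell,\infty)$, then verify that each build--collapse cycle strictly decreases this signature lexicographically (a newly pushed layer inserts a coordinate strictly below $\infty$; a collapsed layer $L_t$ has its $|Y_t|$ drop by a factor at most $1-\mu$, decreasing $s_t$) while Lemma~\ref{lem:key} forces the coordinates to be non-decreasing from left to right. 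Since every $s_i$ is bounded by $U = O(\log|P|\cdot\mu^{-3/2}\log(1/\mu))$, the number of distinct signatures, and hence the number of alternations, is at most $|P|^{O(\mu^{-3/2}\log(1/\mu))}$, polynomial for any constant $\mu = \mu(\delta)$.

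The final step is a binary search for $\tau^*$ over the integer interval $[0,\lfloor(\sum_r v_r)/m\rfloor]$ of polynomially bounded bit length. For each probe $\tau$, I run the algorithm but instrument it so that whenever the condition $|Y_{i+1}| < \sqrt{\mu}|Y_{\leq i}|$ arises with no layer below $L_{i+1}$ collapsible, we halt immediately; by the contrapositive of Lemma~\ref{lem:key} this event certifies $\tau > \tau^*$, and we move $b := \tau-1$. Otherwise the algorithm succeeds and returns an allocation of value at least $\tau/\lambda$, whereupon we move $a := \tau+1$. Tracking the best allocation produced during the search gives, at termination, an allocation of value at least $\tau^*/(6+\delta)$, which is the desired $(6+\delta)$-approximation.

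The main obstacle is plainly Lemma~\ref{lem:key} itself, which is where the new aggressive-greedy strategy and the competing-players injection pay off and push the ratio down from $6+2\sqrt{10}+\delta$ to $6+\delta$; all the bookkeeping above is a direct port of the framework of~\cite{AKS17}, and the only new ingredient required to carry it through is the dependence of $\mu$ on $\delta$, which is determined inside the proof of Lemma~\ref{lem:key} in Section~\ref{sec:analysis}.
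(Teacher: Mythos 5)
Your proposal is correct and follows essentially the same route as the paper: reduce to polynomial termination of the one-player-extension loop via Lemma~\ref{lem:key}, bound the stack depth by $\log_{1+\sqrt{\mu}}|P|$, bound the number of \textsc{Build}/\textsc{Collapse} alternations by the signature-vector argument of Annamalai et al., and wrap everything in an integer binary search that halts and certifies $\tau > \tau^*$ upon the first violation of the collapsibility guarantee. The only differences from the paper's Section~\ref{sec:poly} are cosmetic (e.g.\ you count outer iterations by $|P|=m$, the paper loosely says $n$), so nothing further is needed beyond the deferred proof of Lemma~\ref{lem:key}.
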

\section{Analysis}
\label{sec:analysis}

We will develop lower and upper bounds for the total value of the thin resources in the stack and show that if Lemma~\ref{lem:key} does not hold, the lower bound would exceed the upper bound.  To do this, we need a tool to analyze our aggressive greedy strategy for picking blocked addable thin edges. We introduce this tool in section~\ref{sec:compete} and then prove Lemma~\ref{lem:key} in section~\ref{sec:key}. Recall that we assume $\tau \leq \tau^*$.

\subsection{Competing players}
\label{sec:compete}

In this section, we will show that there is an injective map $\varphi$ from players covered by blocked addable thin edges to players who can access thin resources of large total value.  We call the image of $\varphi$ the competing players.  The next result shows that the target players can be identified via a special maximum matching with respect to an optimal allocation.  Recall that for a maximum matching $M$ of $G$, $\overbar{P}_M$ is the set of players not matched by $M$.

\begin{lemma}
	\label{lem:induced-matching}
	Let {\rm OPT} be an arbitrary optimal allocation, i.e., of value $\tau^*$. There exists a maximum matching $M^*$ of $G$ induced by $\mathrm{OPT}$ such that $M^*$ matches every player who is assigned at least one fat resource in $\mathrm{OPT}$.  Hence, every player in $\overbar{P}_{M^*}$ is assigned only thin resources in {\rm OPT} which are worth a total value of $\tau^*$ or more.
\end{lemma}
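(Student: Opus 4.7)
The plan is to build $M^*$ in two stages: first extract a partial matching from OPT that pins down every player who has received a fat resource in OPT, then augment this partial matching into a maximum matching of $G$ while proving that the initially matched players cannot be displaced.

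For the first stage, I would let $P_f \subseteq P$ denote the set of players who receive at least one fat resource in OPT. For each $p \in P_f$, pick one fat resource $r_p$ assigned to $p$ in OPT and form the edge $(p, r_p)$. Because OPT is an allocation (a partition of resources), each $r_p$ is assigned only to $p$, so the chosen edges form a matching $M_0$ in $G$. By construction every player in $P_f$ is matched by $M_0$.

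For the second stage I would apply the classical augmenting-path argument for bipartite matching: repeatedly find an $M_0$-augmenting path and flip its edges until no augmenting path exists. The resulting matching $M^*$ is maximum in $G$. The essential observation I want to exploit is the well-known fact that a single augmenting step only changes the matched status of the two endpoints of the augmenting path (from unmatched to matched); every interior vertex stays matched, possibly with a different partner. Consequently, every player matched in $M_0$ remains matched throughout the augmentation process, so $P_f \subseteq P_{M^*}$. Equivalently, $\overbar{P}_{M^*} \cap P_f = \emptyset$, meaning no player in $\overbar{P}_{M^*}$ received any fat resource under OPT. Since OPT assigns each player at least $\tau^*$ worth of resources, every $p \in \overbar{P}_{M^*}$ must receive only thin resources in OPT whose total value is at least $\tau^*$, which is the stated conclusion.

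The only delicate point is justifying that augmentation preserves matched status on interior vertices, and this is standard from matching theory. The main thing to be careful about is pedantry: one should note that $M^*$ is maximum not merely \emph{inclusion-maximal}, which is guaranteed by augmenting until no augmenting path remains (König/Berge). Everything else is immediate from the definitions, so I do not anticipate any real obstacle.
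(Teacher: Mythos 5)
Your proof is correct and follows exactly the same route as the paper: extract a matching from OPT that covers every player who received a fat resource, then augment it to a maximum matching while noting that augmenting paths never unmatch already-matched vertices. Nothing to add.
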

\begin{proof}
	We construct a maximum matching of $G$ induced by {\rm OPT} as follows. {\rm OPT} induces a matching $M$ of $G$ that matches all the players who receive at least one fat resource in $OPT$.  $M$ may not be a maximum matching though.  We augment $M$ to a maximum matching $M^*$ using augmenting paths as in basic matching theory~\cite{HK73}.  Augmentation ensures that $M^*$ matches all the players who are matched by $M$.  Hence, $M^*$ is the desired maximum matching.
\end{proof}

Lemma~\ref{lem:inj} below establishes the existence of the injective map $\varphi$ from $X_{\leq \ell}$ to $\overbar{P}_{M^*}$ where $M^*$ is a maximum matching induced by some optimal allocation {\rm OPT}.  Let $\dm_{\varphi}$ and $\im_\varphi$ be the domain and image of $\varphi$, respectively. Consider Lemma~\ref{lem:inj} (i) and (iii).  It would be ideal if $\dm_{\varphi}$ covers the entire $X_{\leq \ell}$. However, for technical reasons, when {\sc Collapse} removes a player from $X_{\leq \ell}$, we may have to remove two players from $\dm_{\varphi}$ in order to maintain other properties of $\varphi$.  Lemma~\ref{lem:inj}(iii) puts a lower bound on the size of the domain of $\varphi$.  

As stated in Lemma~\ref{lem:induced-matching}, players in $\im_\varphi \subseteq \overbar{P}_{M^*}$ are assigned at least $\tau^*|\im_\varphi| \geq \tau|\im_\varphi|$ worth of thin resources in total in {\rm OPT}.  We argue that Lemma~\ref{lem:inj}(ii) implies that a large subset of these thin resources are in the stack.  Consider the time when a player $p$ was added to $X_k$ (and its corresponding addable edge was added to ${\cal X}_k$). The player $p$ was addable at that time, and so was $\varphi(p)$.  Since the algorithm preferred $p$ to $\varphi(p)$, either no addable edge was incident to $\varphi(p)$ or the maximal addable edge identified for $\varphi(p)$ had value no more than the maximal addable edge $e_p$ identified for $p$.  In both cases, at least $\tau-\val(e_p)$ worth of thin resources assigned to $\varphi(p)$ in {\rm OPT} were already in the stack.

To have a good lower bound for the total value of the thin resources in the stack, we also need to look at the players in $\overbar{P}_{M^*}\setminus \im_\varphi$.  Let $\ell$ be the index of the topmost layer in the stack.  Lemma~\ref{lem:inj}(iv) will allow us to prove that roughly $|Y_{\leq \ell-1}| - |X_{\leq \ell}\cup I|$ of players in $\overbar{P}_{M^*}\setminus \im_\varphi$ are still addable after we finish adding edges to ${\cal X}_{\ell}$ during the construction of layer $L_\ell$. However, there are no more addable edges (to be added to ${\cal X}_{\ell}$).  Therefore, each of these addable players can access no more than $\tau/\lambda$ worth of thin resources that are not in the stack.  In other words, at least $\tau - \tau/\lambda$ worth of the thin resources that are assigned to each of them in {\rm OPT} are already in the stack.

\begin{lemma}
\label{lem:inj}  
	Let $M^*$ be a maximum matching of $G$ induced by some optimal allocation.  For any state $(M,{\cal E},{\cal I},(L_1,\ldots,L_\ell))$ of the algorithm, there exists an injection $\varphi$ such that:
	\begin{emromani}
		\item The domain $\dm_\varphi$ and image $\im_\varphi$ of $\varphi$ are subsets of $X_{\leq \ell}$ and $\overbar{P}_{M^*}$, respectively.

		\item For every player $p \in \dm_\varphi$, when $p$ was added to $X_k$ for some $k \in [1,\ell]$, $\varphi(p)$ was also an addable player at that time.

		\item $|\dm_\varphi| \geq 2|X_{\leq \ell}| - \sum_{i = 1}^\ell z_i$.

		\item $f_M(\overbar{P}_M,(\overbar{P}_{M^*} \setminus \im_\varphi) \cup X_{\leq \ell}) 
			= |\overbar{P}_M|$.
\end{emromani}
\end{lemma}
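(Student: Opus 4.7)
The plan is to prove Lemma~\ref{lem:inj} by induction on the chronological sequence of atomic updates the algorithm performs: each insertion of a player into $X_{\ell+1}$ inside {\sc Build}, each flipping of $\Gamma_t^+$, and each removal from $\mathcal{X}_t$ inside {\sc Collapse}. The injection $\varphi$ is maintained across these updates. For the base case, before the first call to {\sc Build}, the stack equals $((\emptyset,\{(p_0,\emptyset)\},0,0))$, so $X_{\leq 1}=\emptyset$ and we take $\varphi$ to be the empty function. Properties (i)--(iii) hold vacuously, and (iv) reduces to $f_M(\overbar{P}_M,\overbar{P}_{M^*})=|\overbar{P}_M|$, which is Claim~\ref{cl:equal-path-num}(i) applied to the two maximum matchings $M$ and $M^*$ of $G$.

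The main work is the inductive step for {\sc Build}. Suppose a player $p$ is about to be added to $X_{\ell+1}$ in step~\ref{step:build-3}. By the inductive hypothesis, property (iv) yields $|\overbar{P}_M|$ node-disjoint paths $\Pi$ in $G_M$ from $\overbar{P}_M$ to $T^- := (\overbar{P}_{M^*}\setminus\im_\varphi^-)\cup X_{\leq\ell+1}^-$, where minus-superscripts denote the state just before the addition. The addability of $p$ yields, through Claim~\ref{cl:path-opt}, a path $\pi$ in $G_{M\oplus\Pi^+}$ from some node of $Y_{\leq\ell}\subseteq\overbar{P}_M$ to $p$. I would use $\pi$ together with Claims~\ref{cl:path-aug}--\ref{cl:path-reroute} to rearrange $\Pi$ into a new family $\Pi'$ of $|\overbar{P}_M|$ node-disjoint paths from $\overbar{P}_M$ to a set $T'\subseteq T^-\cup\{p\}$ that contains $p$. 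Since $|T^-\cup\{p\}|=|T^-|+1$ but $|\Pi'|=|\overbar{P}_M|$, exactly one element $q$ of $T^-$ must be dropped; I would then show that $q$ cannot lie in $X_{\leq\ell+1}^-$ by arguing that such a $q$ would, via Claim~\ref{cl:more-path}, contradict the fact that every earlier player of $X_{\leq\ell+1}^-$ was addable when it was inserted. Hence $q\in\overbar{P}_{M^*}\setminus\im_\varphi^-$. Its addability at the current moment is obtained by concatenating $\pi$ with the reversed prefix of the path in $\Pi$ that formerly ended at $q$, producing an augmenting path from $Y_{\leq\ell}$ to $q$ in $G_M$ that avoids the sinks in $X_{\leq\ell+1}^-\cup I$, and again invoking Claim~\ref{cl:more-path}. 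Setting $\varphi(p):=q$ preserves (i) and (ii); property (iv) is witnessed by $\Pi'$; and (iii) is preserved since $|\dm_\varphi|$ and $|X_{\leq\ell+1}|$ each grow by one while $\sum z_i$ is unchanged. When {\sc Build} eventually records $z_{\ell+1}:=|X_{\ell+1}|$ in step~\ref{step:build-5}, the totals on both sides of (iii) grow by the same $z_{\ell+1}$.

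The inductive step for {\sc Collapse} has two non-trivial parts. First, the flip $M':=M\oplus\Gamma_t^+$ in step~\ref{step:collapse-3}(a) changes the matching but, by Claim~\ref{cl:equal-path-num}(ii), $f_{M'}(\overbar{P}_{M'},T)=f_M(\overbar{P}_M,T)$ for every $T\subseteq P$, so property (iv) transfers to $M'$ without altering $\varphi$, and the other properties are unaffected. Second, in step~\ref{step:collapse-4} a player $p$ may be removed from $X_t$ (and possibly added to $I$). I would restore (iv) by deleting up to two players from $\dm_\varphi$: one to compensate for the loss of $p$ from the sink side $(\overbar{P}_{M^*}\setminus\im_\varphi)\cup X_{\leq\ell}$, and at most one more when a rerouting using Claim~\ref{cl:path-reroute} on the paths of $\Pi$ forces the release of a second image player (typically because the path through $p$ shares a node with another path to $\im_\varphi$). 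This is exactly why (iii) carries the factor $2$: one deletion from $X_{\leq\ell}$ can cost two deletions from $\dm_\varphi$, and the drop of $2$ in $2|X_{\leq\ell}|$ against no change in $\sum z_i$ absorbs the cost. Popping layers above $L_t$ in step~\ref{step:collapse-2} is handled by forgetting any domain elements of $\varphi$ that no longer lie in $X_{\leq\ell}$ and using $|X_i|\leq z_i$ from Lemma~\ref{lem:invar}(i) to check the arithmetic of (iii).

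The main obstacle is the rerouting argument in the {\sc Build} step that produces $q$ and simultaneously certifies its addability. Its delicacy comes from the asymmetry between the sources used in addability (players in $Y_{\leq\ell}\subseteq\overbar{P}_M$) and those used in property (iv) (all of $\overbar{P}_M$), so that splicing the addability witness $\pi$ for $p$ with a path of $\Pi$ must be done carefully enough that the composite path avoids the existing sinks in $X_{\leq\ell+1}^-\cup I$ and thereby witnesses the addability of $q$. Once this step is in place, all remaining cases are bookkeeping based on the flow-style manipulations of Section~\ref{subsubsec:alternating-path}.
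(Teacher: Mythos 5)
The heart of your plan — the {\sc Build} step — has a genuine gap that you yourself flag at the end, and it does not get resolved. You write that "the addability of $p$ yields, through Claim~\ref{cl:path-opt}, a path $\pi$ in $G_{M\oplus\Pi^+}$ from some node of $Y_{\leq\ell}\subseteq\overbar{P}_M$ to $p$," where $\Pi$ is the family of $|\overbar{P}_M|$ node-disjoint paths furnished by property~(iv). But Claim~\ref{cl:path-opt} does not produce such a path. The addability of $p$ is a statement about $G_M(Y_{\leq\ell},\,X_{\leq\ell+1}\cup I)$, a path problem whose source set is $Y_{\leq\ell}$ and sink set is $X_{\leq\ell+1}\cup I$, whereas $\Pi$ is an optimal solution of $G_M(\overbar{P}_M,\,T^-)$ with source set all of $\overbar{P}_M$ and sink set $T^-$. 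Claim~\ref{cl:path-opt} applied to $\Pi$ says only that $G_{M\oplus\Pi^+}$ has no path from $\overbar{P}_M\setminus\src(\Pi)$ to $T^-\setminus\sink(\Pi)$, and since $\src(\Pi)=\overbar{P}_M$ the source side is empty, so the claim gives nothing here. Nor does it transfer the addability certificate — which lives in the residual structure of a different flow problem — into a path in $G_{M\oplus\Pi^+}$. Your subsequent steps (rearranging $\Pi$ to contain $p$, identifying a dropped $q$, certifying $q$'s addability by "concatenating $\pi$ with the reversed prefix of the path in $\Pi$ that formerly ended at $q$") are all contingent on this unjustified $\pi$, and each also slides past the exact point you identify as delicate: the mismatch between the two source sets and the two sink sets.

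The paper sidesteps this by \emph{sequencing} rather than splicing. It first builds an optimal solution $\Pi_1$ of $G_M(Y_{\leq\ell-1},\,X_{\leq\ell}\cup I\cup\{q_1\})$, shows $q_1\in\sink(\Pi_1)$, and flips $\Pi_1^+$ so that $q_1$ becomes unmatched in $M\oplus\Pi_1^+$. Only then does it solve the property-(iv) problem over the new matching, i.e.\ it takes an optimal solution $\Pi_2$ of $G_{M\oplus\Pi_1^+}(\overbar{P}_{M\oplus\Pi_1^+},\,(\overbar{P}_{M^*}\setminus\im_\varphi)\cup X_{\leq\ell})$, which has size $|\overbar{P}_M|$ by Claim~\ref{cl:equal-path-num}. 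The image $q_2:=\sink(\pi)$ is read off the path $\pi\in\Pi_2$ originating at $q_1$, and a single application of Claim~\ref{cl:path-reroute} to $\Pi_1$ (viewed over the original $M$) rules out $q_2\in X_{\leq\ell}$ and simultaneously certifies $q_2$'s addability. Property~(iv) is then established by flipping $\Pi_2^+\setminus\{\pi\}$ and invoking Claim~\ref{cl:equal-path-num}(i). This two-stage construction is exactly what closes the gap you left open. Your collapse-phase handling is also looser than the paper's — when popping layers down to $L_t$, the paper rolls $\varphi$ back to a saved version from when $L_t$ was last the topmost layer, rather than merely restricting the domain as you propose — but that is a secondary issue compared to the missing construction in {\sc Build}.
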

\begin{proof}
	Our proof is by induction on the chronological order of the build and collapse phases. In the base case, $\ell = 1$, $X_1 = \emptyset$, and $z_1 = 0$.  The existence of $\varphi$ is trivial as its domain $D_\varphi \subseteq X_1 = \emptyset$ and image $\im_\varphi =\emptyset$.  Then, (i), (ii), and (iii) are satisfied trivially. As both $\im_\varphi$ and $X_{\leq 1}$ are empty, the left hand side of (iv) becomes $f_M(\overbar{P}_M,\overbar{P}_{M^*})$ which is equal to $|\overbar{P}_M|$ by Claim~\ref{cl:equal-path-num}(i).  We analyze how to update the injection $\varphi$ during the build and collapse phases in order to preserve (i)--(iv).

	\paragraph*{\emph{Build phase}.} Suppose that {\sc Build} begins to construct a new layer $L_\ell$. $X_\ell$ is initialized to be empty.  The value $z_\ell$ is computed only at the completion of $L_\ell$.  However, in this proof, we initialize $z_\ell = 0$, increment $z_\ell$ whenever we add an edge to ${\cal X}_\ell$ (and the corresponding player to $X_\ell$), and show the validity of (i)--(iv) inductively.  This will then imply the validity of (i)--(iv) at the completion of $L_\ell$.

	Since $X_\ell = \emptyset$ and $z_\ell = 0$ initially, properties (i)--(iv) are satisfied by the current $\varphi$ by inductive assumption.

	Step~\ref{step:build-2} of {\sc Build} does not change $X_\ell$, and so $\varphi$ needs no update.

	In step~\ref{step:build-3} of {\sc Build}, when we add an edge to ${\cal X}_\ell$, we need to update $z_\ell$, $\varphi$ and $\dm_\varphi$.  Suppose that a thin edge incident to player $q_1$ is added to ${\cal X}_\ell$.  So $q_1$ is an addable player.  For clarity, we use $X'_\ell$, $z'_\ell$, $\varphi'$, $\dm_{\varphi'}$, and $\im_{\varphi'}$ to denote the updated $X_\ell$, $z_\ell$, $\varphi$, $\dm_\varphi$, and $\im_{\varphi}$, respectively.  Clearly, $X'_\ell = X_\ell \cup \{q_1\}$ and $z'_\ell = z_\ell + 1$.  We set $\dm_{\varphi'} := \dm_{\varphi} \cup \{q_1\}$.  For every player $p \in \dm_{\varphi'} \setminus \{q_1\}$, we set $\varphi'(p) := \varphi(p)$.  We determine $\varphi'(q_1)$ as follows. Recall that for a set $\Pi$ of node-disjoint paths in $G_M$, we use $\Pi^+$ to denote the set of non-trivial paths in $\Pi$.

	Let $\Pi_1$ be an optimal solution of $G_M(Y_{\leq \ell-1}, X_{\leq \ell} \cup I \cup \{q_1\})$.  Player $q_1$ must be a sink of $\Pi_1$ since otherwise we would have 
		$f_M(Y_{\leq \ell-1}, X_{\leq \ell} \cup I \cup \{q_1\}) 
		= f_M(Y_{\leq \ell-1}, X_{\leq \ell} \cup I)$, 
	contradicting the addability of $q_1$.  For the same reason, we have $q_1 \not\in X_{\leq \ell}$.  As $q_1 \in \sink(\Pi_1)$, $q_1$ must be unmatched in the maximum matching 
		$M\oplus \Pi_1^+$, i.e., $q_1 \in \overbar{P}_{M\oplus \Pi^+_1}$.
	Let $\Pi_2$ be an optimal solution of 
		$G_{M\oplus \Pi_1^+}(\overbar{P}_{M\oplus \Pi_1^+},
			(\overbar{P}_{M^*} \setminus \im_\varphi) \cup X_{\leq \ell})$.
	\begin{alignat*}{2}
		|\Pi_2| & =  
			f_{M\oplus \Pi_1^+}(
				\overbar{P}_{M\oplus \Pi_1^+},
				(\overbar{P}_{M^*} \setminus \im_\varphi) \cup X_{\leq \ell}
			)\\
		& =  f_{M}(
				\overbar{P}_{M},
				(\overbar{P}_{M^*} \setminus \im_\varphi) \cup X_{\leq \ell}
			) &\quad\quad \text{(by Claim~\ref{cl:equal-path-num}(ii))} \\
		& =  |\overbar{P}_M| &\quad\quad \text{(by induction assumption)} \\
		& =  |\overbar{P}_{M\oplus \Pi_1^+}|.
	\end{alignat*}
	It follows that every player in $\overbar{P}_{M\oplus \Pi_1^+}$ is a source in $\Pi_2$.  So there exists a path $\pi \in \Pi_2$ that originates from $q_1$.  Let $q_2 = \sink(\pi)$.

	We claim that $q_2 \not\in X_{\leq \ell}$. If $\pi$ is a trivial path, then the claim holds trivially since $q_2 = q_1\notin X_{\leq \ell}$.  Suppose that $\pi$ is a non-trivial path.  Assume, for the sake of contradiction, that $q_2 \in X_{\leq \ell}$.   This allows us to apply Claim~\ref{cl:path-reroute} to $G_M(Y_{\leq \ell-1}, X_{\leq \ell} \cup I \cup \{q_1\})$, it optimal solution $\Pi_1$, and the path $\pi$ (Recall that $q_1 = \sink(\pi)\in \sink(\Pi_1)$). By Claim~\ref{cl:path-reroute}, we can use $\pi$ to convert $\Pi_1$ to an equal-sized set of node-disjoint paths from $Y_{\leq \ell-1}$ to $X_{\leq \ell} \cup I$.  But then 
		$f_M(Y_{\leq \ell-1}, X_{\leq \ell} \cup I) 
		\geq |\Pi_1| = f_M(Y_{\leq \ell-1}, X_{\leq \ell} \cup I \cup \{q_1\})$, 
	a contradiction to the addability of $q_1$.  This proves our claim that $q_2 \not\in X_{\leq \ell}$.

	Observe that $q_2 \in \overbar{P}_{M^*} \setminus \im_\varphi$ because $q_2 \in \sink(\Pi_2) \subseteq (\overbar{P}_{M^*} \setminus \im_{\varphi}) \cup X_{\leq \ell}$ and $q_2 \not\in X_{\leq \ell}$.  This allows us to set $\varphi'(q_1) := q_2$ and keep $\varphi'$ injective.

	Now we show that properties (i)--(iv) are satisfied by $\varphi'$, $z'_\ell$, $\dm_{\varphi'}$, and $X'_\ell$. Properties (i) and (iii) are straightforwardly satisfied.

	By induction assumption, (ii) holds for players in $\dm_{\varphi'} \setminus \{q_1\} = \dm_{\varphi}$.  It remains to check the validity of (ii) for $\varphi'(q_1) = q_2$.  Recall that $\pi$ is a path from $q_1$ to $q_2$ in $G_{M\oplus\Pi_1^+}$. If $\pi$ is a trivial path, then (ii)  holds because $q_2 = q_1$ and $q_1$ is addable. Assume that $\pi$ is non-trivial.  By Claim~\ref{cl:path-reroute}, we can use $\pi$ to convert $\Pi_1$ to an equal-sized set of node-disjoint paths in $G_M$ from $Y_{\leq \ell-1}$ to $ X_{\leq \ell}\cup I \cup \{q_2\}$.  Thus, 
		$f_M(Y_{\leq \ell-1}, X_{\leq \ell}\cup I \cup \{q_2\}) \geq |\Pi_1| 
		= f_M(Y_{\leq \ell-1}, X_{\leq \ell}\cup I \cup \{q_1\})$, 
	which is equal to $f_M(Y_{\leq \ell-1}, X_{\leq \ell} \cup I) + 1$ as $q_1$ is addable.  Therefore, $q_2$ is also an addable player at the time when ${\cal X}_\ell$ gains a thin edge incident to $q_1$. Then (ii) holds for $\varphi'(q_1) = q_2$.

	Consider (iv).  If $\pi$ is a trivial path, i.e., $q_1 = q_2$, then (iv) holds because
	\[
		(\overbar{P}_{M^*} \setminus \im_\varphi) \cup X_{\leq \ell} 
		\subseteq (\overbar{P}_{M^*} \setminus (\im_\varphi \cup \{q_2\})) \cup X_{\leq \ell}\cup \{q_1\} 
		= (\overbar{P}_{M^*} \setminus \im_{\varphi'}) \cup X'_{\leq \ell} .
	\]
	Suppose that $\pi$ is non-trivial.  Recall that $\Pi_2$ is an optimal solution of 
		$G_{M\oplus \Pi^+_1}(
			\overbar{P}_{M\oplus \Pi^+_1},
			(\overbar{P}_{M^*} \setminus \im_\varphi) \cup X_{\leq \ell}
		)$, 
	and that $|\Pi_2| = |\overbar{P}_{M\oplus \Pi_1^+}|$.  Take the maximum matching $M\oplus \Pi^+_1$ of $G$ and flip the paths in $\Pi^+_2 \setminus\{\pi\}$ in $G$.  This produces another maximum matching $M' = (M\oplus \Pi^+_1)\oplus (\Pi^+_2\setminus \{\pi\})$.  All $|\overbar{P}_{M\oplus \Pi_1^+}|$ sinks of $\Pi_2$, except for $q_2$, are unmatched in $M'$.  Player $q_1$ is also unmatched in $M'$.  There are equally many unmatched players in $M'$ and $M\oplus \Pi_1^+$ as both are maximum matchings of $G$.  This implies that 
		$(\sink(\Pi_2) \setminus \{q_2\}) \cup \{q_1\}$ 
	is exactly $\overbar{P}_{M'}$.  Since 
		$\sink(\Pi_2) \subseteq (\overbar{P}_{M^*} \setminus \im_\varphi) \cup X_{\leq \ell}$, 
	we conclude that
	\begin{align*}
		\overbar{P}_{M'} 
		&\subseteq  \left(
		\left((\overbar{P}_{M^*} \setminus \im_\varphi) \cup X_{\leq \ell}\right)
		\setminus \{q_2\}\right) \cup \{q_1\}\\
		&\subseteq  (\overbar{P}_{M^*} \setminus (\im_\varphi \cup \{q_2\})) 
		\cup X_{\leq \ell}\cup \{q_1\}\\
		& =  (\overbar{P}_{M^*} \setminus \im_{\varphi'}) \cup X'_{\leq \ell}.
	\end{align*}
	By the above subset relation and Claim~\ref{cl:equal-path-num}(i),
	\begin{equation*}
		|\overbar{P}_M| 
		\geq f_M(\overbar{P}_{M},(\overbar{P}_{M^*} \setminus \im_{\varphi'}) \cup X'_{\leq \ell})
		\geq f_M(\overbar{P}_M,\overbar{P}_{M'})
		= |\overbar{P}_{M}|.
	\end{equation*}
	Hence, (iv) holds.
	
	Clearly, steps~\ref{step:build-4}--\ref{step:build-6} of {\sc Build} do not affect $\varphi$.

	\paragraph*{\emph{Collapse phase}.} Suppose that we are going to collapse the layer $L_t$.  Since we will set $\ell := t$ at the end of collapsing $L_t$, we only need to prove (i)---(iv) with $\ell$ substituted by $t$. 

	Clearly, step~\ref{step:collapse-1} of {\sc Collapse} has no effect on $\varphi$. 
	
	Consider step~\ref{step:collapse-2} of {\sc Collapse}.  Go back to the last time when $L_t$ was either created by {\sc Build} as the topmost layer or made by {\sc Collapse} as the topmost layer.  By the inductive assumption, there was an injection $\varphi''$ at that time that satisfies (i)--(iv).  We set $\varphi := \varphi''$, $\dm_{\varphi}: = \dm_{\varphi''}$, and $\im_{\varphi} := \im_{\varphi''}$.

	In step~\ref{step:collapse-3} of {\sc Collapse}, the maximum matching $M$ may change, so only (iv) is affected.  Nonetheless, by Claim~\ref{cl:equal-path-num}(ii), the value of $f_M(\overbar{P}_M, (\overbar{P}_{M^*} \setminus \im_\varphi) \cup X_\ell)$ remains the same after updating $M$.  So (iv) is satisfied afterwards.

	In step~4 of {\sc Collapse}, we may remove some edges from ${\cal X}_t$ and add some edges to ${\cal I}$.  Adding edges to ${\cal I}$ does not affect $\varphi$. We need to update $\varphi$ when an edge is removed from ${\cal X}_t$.  Suppose that we are going to remove from ${\cal X}_{t}$ an edge that is incident to a player $q_1$.  Let $X'_{\leq t}$, $\varphi'$, $\dm_{\varphi'}$, and $\im_{\varphi'}$ denote the updated $X_{\leq t}$, $\varphi$, $\dm_{\varphi}$, and $\im_{\varphi}$, respectively.  Note that $X'_{\leq t} = X_{\leq t} \setminus \{q_1\}$.  We show how to define $\varphi'$, $\dm_{\varphi'}$, and $\im_{\varphi'}$ appropriately.  Recall that $z_{t}$ was defined in the last construction of the layer $L_t$ during the build phase, and it has remained fixed despite possible changes to ${\cal X}_t$ since then.

	Consider (iv). If (iv) is not affected by replacing $X_{\leq \ell}$ with $X'_{\leq \ell}$, that is, 
		$f_M(\overbar{P}_M,(\overbar{P}_{M^*} \setminus \im_{\varphi}) \cup X'_{\leq t})
		= |\overbar{P}_M|$,
	then we simply set $\dm_{\varphi'} := \dm_{\varphi} \setminus \{q_1\}$ and $\varphi'(p) := \varphi(p)$ for all $p \in \dm_{\varphi'}$.  Since $\im_{\varphi'} \subseteq \im_{\varphi}$, $\varphi'$ satisfies (iv), i.e., $f_M(\overbar{P}_M,(\overbar{P}_{M^*} \setminus \im_{\varphi'}) \cup X'_{\leq t})= |\overbar{P}_M|$. Suppose that property (iv) is affected by replacing $X_{\leq \ell}$ with $X'_{\leq \ell}$, and as a consequence, 
	\begin{equation}
		f_M(\overbar{P}_M,(\overbar{P}_{M^*} \setminus \im_{\varphi}) \cup X'_{\leq t}) 
		= |\overbar{P}_M| - 1.
		\label{eq:inj-collapse-1}
	\end{equation}
	Since $f_M(\overbar{P}_M, \overbar{P}_{M^*}) = |\overbar{P}_M|$, we have  
	\begin{equation}
		f_M(\overbar{P}_M,\overbar{P}_{M^*} \cup X'_{\leq t})= |\overbar{P}_M|.
		\label{eq:inj-collapse-2}
	\end{equation}
	Comparing equations~(\ref{eq:inj-collapse-1}) and (\ref{eq:inj-collapse-2}), we conclude that there exists a player $q_2 \in \dm_{\varphi}$ such that 
	\[
		f_M(\overbar{P}_M,
		(\overbar{P}_{M^*} \setminus (\im_{\varphi}\setminus\{\varphi(q_2)\})) \cup X'_{\leq t})
		= |\overbar{P}_M|.
		\label{eq:inj-collapse-3}
	\]
	We set $\dm_{\varphi'} := \dm_{\varphi}\setminus\{q_1,q_2\}$, and $\varphi'(p) := \varphi(p)$ for all $p \in \dm_{\varphi'}$. Then (iv) is satisfied by $\varphi'$.
	
	Irrespective of which definition of $\varphi'$ above is used,  properties (i) and (ii) trivially hold.  Property (iii) holds because the left hand side decreases by at most $2$ and the right hand side decreases by exactly $2$.
\end{proof}

\subsection{Proof of Lemma~\ref{lem:key}}
\label{sec:key}

Suppose, for the sake of contradiction, that there exists an index $k \in [1,\ell-1]$ such that $|Y_{k+1}| < \sqrt{\mu}\,|Y_{\leq k}|$ but no layer below $L_{k+1}$ is collapsible.  Without loss of generality, we assume that $k$ is the smallest such index. Therefore, $|Y_{i+1}| \geq \sqrt{\mu}\,|Y_{\leq i}|$ for every $i \in [1,k-1]$.

Consider the moment immediately after the last construction of the $(k+1)$-th layer in the build phase. Let $(M', {\cal E}',{\cal I}',(L'_1,\ldots,L'_{k+1}))$ be the state of the algorithm at that moment, where $L'_{i} = ({\cal X}'_{i}, {\cal Y}'_{i}, d'_{k+1}, z'_{i})$.

We will derive a few inequalities that hold given the existence of $k$, and then obtain a contradiction by showing that the system of these inequalities is infeasible.

We first define some notation that will be used in the proof. Let $M^*$ be a maximum matching induced by an optimal allocation {\rm OPT}.  Let $\varphi'$ and $\dm_{\varphi'}$ be the injection and its domain as defined in Lemma~\ref{lem:inj} with respect to $M^*$ and the state $(M', {\cal E}',{\cal I}',(L'_1,\ldots,L'_{k+1}))$. Define 
\[
\forall\, p \in \dm_{\varphi'}, \, w_p := \val(B), \, \mbox{where $(p,B)$ is the thin edge in ${\cal X}'_{\leq k+1}$ that is incident to $p$}.  
\]
It is well defined because $\dm_{\varphi'} \subseteq X'_{\leq k+1}$ by Lemma~\ref{lem:inj}(i) and no player is incident to two edges in ${\cal X}'_{\leq k+1}$ by invariant~1 in Table~\ref{tb:invar}.  Also by invariant~1 in Table~\ref{tb:invar},
\[
\forall \, p \in \dm_{\varphi'}, \, w_p \in [\tau/\lambda,\tau + \tau/\lambda].
\]
Given the above definition, we already have two easy inequalities. Recall that given a set $\cal S$ of thin edges, $\val({\cal S})$ is the total value of the thin resources covered by $\cal S$.
\[
 	\val({\cal X}'_{\leq k+1} \cup {\cal Y}'_{\leq k})
 	\geq \val({\cal X}'_{\leq k+1}) 
 	\geq \sum_{p \in \dm_{\varphi'}} w_p
\]
\[
	\frac{\tau}{\lambda}|\dm_{\varphi'}| 
	\leq \sum_{p \in \dm_{\varphi'}} w_p 
	\leq \left(\tau + \frac{\tau}{\lambda}\right)|\dm_{\varphi'}|
\]
We list three more inequalities in the claims below.  Their proofs are deferred to Sections~\ref{sec:cl:ineq}, \ref{sec:cl:upper} and~\ref{sec:cl:lower}, respectively.

\begin{claim}
	\label{cl:ineq}
	$|\dm_{\varphi'}| \leq |Y'_{\leq k}|$.
\end{claim}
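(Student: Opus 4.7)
The plan is to obtain the bound by chaining together several inequalities already established earlier in the paper, terminating in the trivial fact that the number of node-disjoint paths from $Y'_{\leq k}$ is bounded by $|Y'_{\leq k}|$. No new machinery should be needed; the claim turns out to be essentially bookkeeping.

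First I would use Lemma~\ref{lem:inj}(i) to get $|\dm_{\varphi'}| \leq |X'_{\leq k+1}|$. Next, invariant~1 in Table~\ref{tb:invar} guarantees that no two edges in ${\cal X}'_{\leq k+1}$ cover the same player, so $X'_1,\ldots,X'_{k+1}$ are pairwise disjoint and $|X'_{\leq k+1}| = \sum_{i=1}^{k+1} |X'_i|$. Applying Lemma~\ref{lem:invar}(i) term-by-term gives $|X'_{\leq k+1}| \leq \sum_{i=1}^{k+1} z'_i$, and then Lemma~\ref{lem:invar}(ii) yields $\sum_{i=1}^{k+1} z'_i \leq d'_{k+1}$. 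Finally, invariant~7 (with $i=k$, $\ell=k+1$) provides $d'_{k+1} \leq f_{M'}(Y'_{\leq k},\, X'_{\leq k+1}\cup I')$. Chaining these inequalities:
\[
|\dm_{\varphi'}| \;\leq\; |X'_{\leq k+1}| \;\leq\; \sum_{i=1}^{k+1} z'_i \;\leq\; d'_{k+1} \;\leq\; f_{M'}\!\left(Y'_{\leq k},\, X'_{\leq k+1}\cup I'\right).
\]

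To finish, I would observe that any feasible solution of $G_{M'}(Y'_{\leq k}, X'_{\leq k+1}\cup I')$ is a collection of node-disjoint paths each originating at a distinct vertex of $Y'_{\leq k}$, so its cardinality is at most $|Y'_{\leq k}|$; hence $f_{M'}(Y'_{\leq k}, X'_{\leq k+1}\cup I') \leq |Y'_{\leq k}|$, which combined with the chain above gives the claim. The only mild subtlety to double-check is that $Y'_{\leq k} \subseteq \overbar{P}_{M'}$ so that the notation $f_{M'}(Y'_{\leq k},\cdot)$ is well defined; this is already implicit in the paper's use of $f_M(Y_{\leq\ell},\cdot)$ throughout the algorithm's invariants, since players covered by ${\cal E}'$ are unmatched by $M'$ (partial-allocation property (i)) and {\sc Collapse} removes from each ${\cal Y}_t$ precisely those edges whose players become re-matched after flipping the alternating paths. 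I do not anticipate any real obstacle; the ``hard'' content is hidden in Lemmas~\ref{lem:inj} and \ref{lem:invar}, and the claim itself is just the upper end of the resulting inequality chain.
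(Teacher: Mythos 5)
Your proof is correct and takes the same route as the paper's own proof of Claim~\ref{cl:ineq}: the identical chain
$|\dm_{\varphi'}| \leq |X'_{\leq k+1}| \leq \sum_{i=1}^{k+1} z'_i \leq d'_{k+1} \leq f_{M'}(Y'_{\leq k}, X'_{\leq k+1}\cup I') \leq |Y'_{\leq k}|$,
with each link justified by the same ingredients (Lemma~\ref{lem:inj}(i), Lemma~\ref{lem:invar}(i)--(ii), invariant~7, and the trivial source-set bound). You merely write the chain in the opposite direction and add an explicit check that $Y'_{\leq k}\subseteq \overbar{P}_{M'}$, which the paper states only later in the proof of Claim~\ref{cl:lower}.
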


\begin{claim}
	\label{cl:upper}
	$\val({\cal X}'_{\leq k+1} \cup {\cal Y}'_{\leq k}) \leq \frac{\tau}{\lambda}|\dm_{\varphi'}| +
		\frac{2\tau}{\lambda}|Y'_{\leq k}| + \frac{\delta_1\tau}{\lambda}|Y'_{\leq k}|$, where $\delta_1 = \lambda\mu + 2\mu + 2\sqrt{\mu}$.
\end{claim}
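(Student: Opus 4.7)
The plan is to partition the thin resources covered by ${\cal X}'_{\leq k+1} \cup {\cal Y}'_{\leq k}$ into three disjoint families, bound each piece separately, and then substitute the two size inequalities that are already available. Writing $R({\cal S})$ for the resources appearing in an edge set ${\cal S}$, I take
\[
A = R({\cal X}'_{\leq k+1}) \setminus R({\cal E}'), \qquad
B = R({\cal X}_{k+1}) \cap R({\cal Y}_{k+1}), \qquad
C = R({\cal Y}'_{\leq k}).
\]
Invariants~3, 4, and 5 in Table~\ref{tb:invar}, together with the resource-disjointness of edges in ${\cal E}'$, guarantee that $A$, $B$, $C$ are pairwise disjoint and that their union equals $R({\cal X}'_{\leq k+1}) \cup R({\cal Y}'_{\leq k})$: invariant~3 routes every ${\cal E}'$-resource of an ${\cal X}_i$-edge into ${\cal Y}_i$, so it lands in $C$ for $i \leq k$ and in $B$ for $i = k+1$, while invariants~4 and~5 keep the $B$- and $C$-pieces separate. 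Hence $\val$ is additive across the partition.

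Next I bound the three pieces. For $A$, each $(p, B_p) \in {\cal X}'_{\leq k+1}$ is blocked, so $\val(B_p \setminus R({\cal E}')) < \tau/\lambda$ by the definition of ``blocked''; combined with invariant~1 (${\cal X}$-edges have pairwise disjoint resources), this gives $\val(A) < (\tau/\lambda)|X'_{\leq k+1}|$. For $B$ and $C$, every edge of ${\cal E}'$ is a minimal thin edge, hence has value strictly less than $2\tau/\lambda$ (dropping any one resource would pull the value below $\tau/\lambda$, and every thin resource has individual value below $\tau/\lambda$); since edges of ${\cal E}'$ have disjoint resource sets, I obtain $\val(B) \leq \val({\cal Y}_{k+1}) < (2\tau/\lambda)|Y'_{k+1}|$ and $\val(C) < (2\tau/\lambda)|Y'_{\leq k}|$. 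Finally, Lemma~\ref{lem:inj}(iii) combined with Lemma~\ref{lem:non-collapse}(ii) (applicable because no layer below $L'_{k+1}$ is collapsible) yields $|X'_{\leq k+1}| < |\dm_{\varphi'}| + \mu |Y'_{\leq k}|$, and the standing hypothesis gives $|Y'_{k+1}| < \sqrt{\mu}\,|Y'_{\leq k}|$. Summing the three bounds and collecting terms produces
\[
\val({\cal X}'_{\leq k+1} \cup {\cal Y}'_{\leq k}) < \frac{\tau}{\lambda}|\dm_{\varphi'}| + \frac{2\tau}{\lambda}|Y'_{\leq k}| + \frac{\tau}{\lambda}(\mu + 2\sqrt{\mu})|Y'_{\leq k}|,
\]
which implies the claim since $\mu + 2\sqrt{\mu} \leq \lambda\mu + 2\mu + 2\sqrt{\mu} = \delta_1$.

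The main obstacle is the combinatorial verification that $A$, $B$, $C$ genuinely partition the resources on the left-hand side; this requires unpacking invariants~3--5 carefully to route every ${\cal E}'$-resource of an ${\cal X}$-edge into either $B$ (for the top layer) or $C$ (for the lower layers). Once the partition is justified, each individual bound is an immediate consequence of the definitions of ``blocked'' and ``minimal'', and the arithmetic matches the target $\delta_1$ comfortably within the slack already present in its stated form.
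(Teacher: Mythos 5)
There is a genuine gap, and it concerns the distinction between the current (possibly shrunk) layer $L_{k+1} = ({\cal X}_{k+1},{\cal Y}_{k+1},\ldots)$ and its state $L'_{k+1} = ({\cal X}'_{k+1},{\cal Y}'_{k+1},\ldots)$ at time $t_{k+1}$. By Claim~\ref{cl:same-layer} we have ${\cal X}_{k+1} \subseteq {\cal X}'_{k+1}$ and ${\cal Y}_{k+1} \subseteq {\cal Y}'_{k+1}$, so $|Y_{k+1}| \leq |Y'_{k+1}|$. The standing hypothesis is $|Y_{k+1}| < \sqrt{\mu}\,|Y_{\leq k}|$ for the \emph{current} set $Y_{k+1}$; it does \emph{not} give $|Y'_{k+1}| < \sqrt{\mu}\,|Y'_{\leq k}|$, and in fact that inequality can fail. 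Your proposal invokes exactly this false primed inequality to close the argument.

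Tracing through your partition: you define $B$ in terms of ${\cal X}_{k+1}$ and ${\cal Y}_{k+1}$ (unprimed), yet later bound $\val(B)$ by $|Y'_{k+1}|$. If you literally mean the unprimed $B$, then $A\cup B\cup C$ does not cover $R({\cal X}'_{\leq k+1}) \cup R({\cal Y}'_{\leq k})$: resources of ${\cal X}'_{k+1}\setminus {\cal X}_{k+1}$ that lie in $R({\cal E}')$ fall through, because by invariant~3 they lie in $R({\cal Y}'_{k+1})$ but possibly not in $R({\cal Y}_{k+1})$. If instead you mean the primed $B = R({\cal X}'_{k+1}) \cap R({\cal Y}'_{k+1})$, the partition does cover, but then your bound is $\val(B) < (2\tau/\lambda)|Y'_{k+1}|$ and you cannot feed $\sqrt{\mu}\,|Y'_{\leq k}|$ into it. Either way, the contribution of the edges in ${\cal X}'_{k+1}\setminus {\cal X}_{k+1}$ (and their blockers in ${\cal Y}'_{k+1}\setminus {\cal Y}_{k+1}$) is not accounted for. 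The paper handles this by bounding $|X'_{k+1}\setminus X_{k+1}| < \mu|Y'_{\leq k}|$ via Lemma~\ref{lem:invar}(i) and Lemma~\ref{lem:non-collapse}(ii), and charging each such extra edge at most $\tau + \tau/\lambda$ by invariant~1; this is precisely what produces the $\lambda\mu + \mu$ terms in $\delta_1 = \lambda\mu + 2\mu + 2\sqrt{\mu}$. Your derived constant $\mu + 2\sqrt{\mu}$ is strictly smaller than $\delta_1$ exactly because this charge is missing, so your proposal is not a stronger bound but an incomplete one. To repair it you would need to split $B$ (primed) further, bound the part coming from ${\cal X}_{k+1}$ by $(2\tau/\lambda)|Y_{k+1}| < (2\tau\sqrt{\mu}/\lambda)|Y'_{\leq k}|$, and bound the remainder (coming from ${\cal X}'_{k+1}\setminus{\cal X}_{k+1}$) via the $\mu|Y'_{\leq k}|$ cardinality estimate, which reconstructs the paper's argument.
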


\begin{claim}
	\label{cl:lower}
	$\val({\cal X}'_{\leq k+1}\cup {\cal Y}'_{\leq k}) \geq (\tau - \frac{\tau}{\lambda})(|Y'_{\leq k}|- |\dm_{\varphi'}|) + \sum_{p\in \dm_{\varphi'}}(\tau - w_p) 
	- \frac{\delta_2\tau}{\lambda}|Y'_{\leq k}|$, where $\delta_2 = 2\lambda\mu + 2\lambda\sqrt{\mu} + 6\sqrt{\mu}$.
\end{claim}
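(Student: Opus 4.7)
The plan is to lower bound $\val({\cal X}'_{\leq k+1}\cup {\cal Y}'_{\leq k})$ by identifying a large disjoint family of thin resources that OPT forces into the stack. Let $M^*$ be the induced maximum matching of Lemma~\ref{lem:induced-matching}: every player in $\overbar{P}_{M^*}$ receives only thin resources of total value $\geq \tau^*\geq \tau$ in OPT, and since OPT is an allocation, the bundles for distinct players are disjoint. I partition the bookkeeping into contributions from $\im_{\varphi'}$ and from a subset $A\subseteq \overbar{P}_{M^*}\setminus \im_{\varphi'}$ of currently addable players, which are disjoint subsets of $\overbar{P}_{M^*}$ and hence additive.

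For each $p\in\dm_{\varphi'}$, Lemma~\ref{lem:inj}(ii) tells us that $\varphi'(p)$ was addable when $p$ was added to $X_k$.  The greedy rule of step~\ref{step:build-3} of \textsc{Build} (pick the blocked addable edge of the largest value) forces the maximal addable edge for $\varphi'(p)$ at that moment to have value at most $w_p$; a short case split on whether the total value of $\varphi'(p)$'s free desired thin resources then lay at most $\tau+\tau/\lambda$ or exceeded it (in which case $w_p\geq \tau$ and the bound is vacuous) shows at least $\tau-w_p$ worth of OPT-assigned resources for $\varphi'(p)$ lay in the then-stack, and by tracking how the invariants of Table~\ref{tb:invar} propagate resources across build and collapse steps these resources are still in ${\cal X}'_{\leq k+1}\cup {\cal Y}'_{\leq k}\cup {\cal I}'$ now.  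Summed over $\dm_{\varphi'}$ this contributes $\sum_{p\in\dm_{\varphi'}}(\tau-w_p)$ of distinct thin resources.

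For the second part, I use a flow-transfer argument.  Lemma~\ref{lem:inj}(iv) provides $f_{M'}(\overbar{P}_{M'},(\overbar{P}_{M^*}\setminus \im_{\varphi'})\cup X'_{\leq k+1})=|\overbar{P}_{M'}|$; combining with Claim~\ref{cl:equal-path-num} and the identity $d'_{k+1}=f_{M'}(Y'_{\leq k},X'_{\leq k+1}\cup I')$, one can show $|A|\geq |Y'_{\leq k}|-|\dm_{\varphi'}|-O(\sqrt{\mu}\,|Y'_{\leq k}|)$, where the slack aggregates the bound $|I'|<\mu|Y'_{\leq k}|$ from Lemma~\ref{lem:non-collapse}(i), the gap $\sum z'_i-|X'_{\leq k+1}|$ from Lemma~\ref{lem:invar}(i) and Lemma~\ref{lem:non-collapse}(ii), Claim~\ref{cl:ineq}, and the hypothesis $|Y'_{k+1}|<\sqrt{\mu}\,|Y'_{\leq k}|$.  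Because \textsc{Build} terminates only when no addable edge remains, every $q\in A$ has desired thin resources outside the current stack of total value less than $\tau/\lambda$, so at least $\tau-\tau/\lambda$ worth of $q$'s OPT-bundle lies in the current stack.  Summing both contributions, deducting $\val({\cal I}')\leq (2\tau/\lambda)|I'|\leq (2\mu\tau/\lambda)|Y'_{\leq k}|$ to pass from the entire stack to ${\cal X}'_{\leq k+1}\cup {\cal Y}'_{\leq k}$, and folding all slack terms into $\frac{\delta_2\tau}{\lambda}|Y'_{\leq k}|$ with $\delta_2=2\lambda\mu+2\lambda\sqrt{\mu}+6\sqrt{\mu}$ yields the claim.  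The main obstacle is the flow-transfer step: converting from source $\overbar{P}_{M'}$ to source $Y'_{\leq k}$ (which requires a careful use of Claim~\ref{cl:equal-path-num}(ii) together with augmentation to upgrade Lemma~\ref{lem:inj}(iv) into a statement about $Y'_{\leq k}$), and then controlling $|X'_{\leq k+1}|+|I'|-|\dm_{\varphi'}|$ tightly enough so that the accumulated losses fit exactly inside the stated constants of $\delta_2$.
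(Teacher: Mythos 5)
Your high-level decomposition matches the paper: split $B\cap A_{\overbar{P}_{M^*}}$ into the parts contributed by $\im_{\varphi'}$ (via the greedy rule and Lemma~\ref{lem:inj}(ii)) and by addable players in $\overbar{P}_{M^*}\setminus\im_{\varphi'}$ (via Lemma~\ref{lem:inj}(iv) and the fact that {\sc Build} terminated with no addable edge left). The second part is essentially right, though you make the source-transfer step sound harder than it is: since every player in $\overbar{P}_{M'}$ is a source in an optimal solution of size $|\overbar{P}_{M'}|$ and $Y'_{\leq k}\subseteq \overbar{P}_{M'}$, you can simply restrict to the paths that originate in $Y'_{\leq k}$; no augmentation or Claim~\ref{cl:equal-path-num}(ii) is needed. (You also mention Claim~\ref{cl:ineq} and $|Y'_{k+1}|<\sqrt{\mu}|Y'_{\leq k}|$ in this step, but neither enters here; the latter is needed only in the $\im_{\varphi'}$ part.)

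The genuine gap is in the $\im_{\varphi'}$ part. You assert that ``by tracking how the invariants of Table~\ref{tb:invar} propagate resources across build and collapse steps these resources are still in ${\cal X}'_{\leq k+1}\cup{\cal Y}'_{\leq k}\cup{\cal I}'$ now,'' and then conclude the contribution is a clean $\sum_{p\in\dm_{\varphi'}}(\tau-w_p)$. That is false as stated: {\sc Collapse} can remove edges (and hence resources) from the stack between the time $p$ was added and the present, and the invariants do not prevent this. What the paper actually does is freeze the stack at time $t_j$ (just after the last construction of layer $j$) to get a $\tau-w_q$ guarantee in terms of ${\cal X}^{t_j}_{\leq j}\cup{\cal Y}^{t_j}_{\leq j-1}\cup{\cal I}^{t_j}$ for each $q\in\dm_{\varphi'}\cap X'_j$, and then accounts separately for the two sources of loss when passing to the current stack: the contribution of ${\cal I}^{t_j}$ (bounded via $|{\cal I}^{t_j}|<\mu|Y'_{\leq j-1}|$) and the edges in ${\cal X}^{t_j}_j\setminus{\cal X}'_j$ removed by later collapses (bounded via Lemma~\ref{lem:invar}(i) and Lemma~\ref{lem:non-collapse}(ii), giving $|{\cal X}^{t_j}_j\setminus{\cal X}'_j|\leq\mu|Y'_{\leq j-1}|$). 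Summing these per-layer losses over $j\in[2,k+1]$ is only controllable because of the geometric growth $|Y'_j|\geq\sqrt{\mu}|Y'_{\leq j-1}|$ for $j\leq k$, which gives $\sum_{j=2}^{k+1}|Y'_{\leq j-1}|\leq(2/\sqrt{\mu})|Y'_{\leq k}|$; this telescoping is exactly where the $\sqrt{\mu}$ terms in $\delta_2$ come from. Your single deduction of $\val({\cal I}')$ at the end cannot substitute for this per-layer, time-indexed bookkeeping, and without it the constants in $\delta_2$ are not recoverable.
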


Putting all the five inequalities together gives the following system:
\allowdisplaybreaks
\begin{align*}
	\val({\cal X}'_{\leq k+1} \cup {\cal Y}'_{\leq k}) & \geq  \sum_{p \in \dm_{\varphi'}} w_p,\\
	\frac{\tau}{\lambda}|\dm_{\varphi'}| & \leq  \sum_{p \in \dm_{\varphi'}} w_p \,\,\,\, \leq \,\,\,\, (\tau + \tau/\lambda)|\dm_{\varphi'}|,\\
	|\dm_{\varphi'}| & \leq  |Y'_{\leq k}|,\\
	\val({\cal X}'_{\leq k+1} \cup {\cal Y}'_{\leq k}) & \leq  \frac{\tau}{\lambda}|\dm_{\varphi'}| +
	\frac{2\tau}{\lambda}|Y'_{\leq k}| + \frac{\delta_1\tau}{\lambda}|Y'_{\leq k}|,\\
	\val({\cal X}'_{\leq k+1}\cup {\cal Y}'_{\leq k}) & \geq  (\tau - \tau/\lambda)(|Y'_{\leq k}|- |\dm_{\varphi'}|) + \sum_{p\in \dm_{\varphi'}}(\tau - w_p) - \frac{\delta_2\tau}{\lambda}|Y'_{\leq k}|.
\end{align*}
Divide the above system by $\frac{\tau}{\lambda}|\dm_{\varphi'}|$.  To simplify the system, define the variables 
	$B_1 := \val({\cal X}'_{\leq k+1} \cup {\cal Y}'_{\leq k}) / (\frac{\tau}{\lambda}|\dm_{\varphi'}|)$, 
$B_2 := |Y'_{\leq k}|/|\dm_{\varphi'}|$, and 
	$B_3 := (\sum_{p \in \dm_{\varphi'}} w_p)/(\frac{\tau}{\lambda}|\dm_{\varphi'}|)$.
Then we can write the above system equivalently as follows:
\begin{align*}
	B_1 \,\,& \geq \,\,  B_3,\\
	1 \,\,& \leq  \,\, B_3 \,\,\,\, \leq \,\,\,\, \lambda + 1,\\
	1 \,\,& \leq  \,\, B_2,\\
	B_1 \,\,& \leq \,\, 1 + 2B_2 + \delta_1B_2,\\
	B_1 \,\,& \geq \,\, (\lambda - 1)(B_2 -  1) + \lambda - B_3 - \delta_2B_2.
\end{align*}
The first, fourth, and fifth inequalities give
\begin{align}
	&2(1 + 2B_2 + \delta_1B_2) \geq B_1 + B_3 \geq (\lambda - 1)(B_2 -  1) + \lambda - \delta_2B_2  \nonumber \\
 	\Rightarrow&  2 + (4 + 2\delta_1)B_2 \geq (\lambda-1-\delta_2)B_2 + 1 \nonumber \\
 	\Rightarrow & (\lambda - 5 - 2\delta_1 - \delta_2)B_2 \leq  1
	\label{eq:key-1}
\end{align}
When $\mu$ tends to zero, both $\delta_1$ and $\delta_2$ tend to 0. Given that $\lambda = 6 + \delta$ for an arbitrary constant $\delta \in (0,1)$, when $\mu$ is sufficiently small, 
	$\lambda - 5 - 2\delta_1 - \delta_2 > 1$. 
Since $B_2 \geq 1$ by the third inequality, we obtain 
\begin{equation}
	(\lambda - 5 - 2\delta_1 - \delta_2)B_2 >  1.
	\label{eq:key-2}
\end{equation}
This is impossible because \eqref{eq:key-1} and \eqref{eq:key-2} contradict each other.  Hence, Lemma~\ref{lem:key} is true.

\subsection{Proofs of Claims~\ref{cl:ineq}--\ref{cl:lower}}

Recall a few things. $(M,{\cal E},{\cal I},(L_1,\ldots,L_\ell))$ is the current state of the algorithm.  The index $k$ is the smallest index such that $|Y_{k+1}| < \sqrt{\mu}|Y_{\leq k}|$ and no layer below $L_{k+1}$ is collapsible.  Hence, $|Y_{i+1}| \geq \sqrt{\mu}\,|Y_{\leq i}|$ for every $i \in [1,k-1]$.  $(M',{\cal E}',{\cal I}',(L'_1,\cdots,L'_{k+1}))$ is the state of the algorithm immediately after the last construction of the $(k+1)$-th layer in the build phase.  $\varphi'$, $\dm_{\varphi'}$ and $\im_{\varphi'}$ are the injection, its domain, and its image defined in Lemma~\ref{lem:inj} with respect to a maximum matching $M^*$ induced by an optimal allocation {\rm OPT} and the state $(M',{\cal E}',{\cal I}',(L'_1,\cdots,L'_{k+1}))$.

Before proving Claims~\ref{cl:ineq}--\ref{cl:lower}, we derive two more claims.
\begin{claim}
\label{cl:same-layer}
For $i \in [1, k]$, $L'_i = L_i$. That is, none of layers $L'_1, \ldots, L'_k$ have ever been collapsed since the last construction of the $(k+1)$-th layer in the build phase. For layers $L'_{k+1} = ({\cal X}'_{k+1}, {\cal Y}'_{k+1}, d'_{k+1}, z'_{k+1})$ and $L_{k+1} = ({\cal X}_{k+1}, {\cal Y}_{k+1}, d_{k+1}, z_{k+1})$, we have ${\cal X}_{k+1}\subseteq {\cal X}'_{k+1}$, ${\cal Y}_{k+1} \subseteq {\cal Y}'_{k+1}$, $d_{k+1} = d'_{k+1}$, and $z_{k+1} = z'_{k+1}$.
\end{claim}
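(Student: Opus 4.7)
The plan is to track, operation by operation, what can happen to each of the bottom $k+1$ layers between the two states, and to show that these events collectively force $L'_i = L_i$ for $i \le k$ and only shrink ${\cal X}_{k+1}$ and ${\cal Y}_{k+1}$.

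First I would exploit the phrase \emph{last construction}: since $(M',{\cal E}',{\cal I}',(L'_1,\ldots,L'_{k+1}))$ is the state immediately after the last time the $(k+1)$-th layer was built, layer $L_{k+1}$ has been continuously present in the stack between that moment and the current state. Inspecting {\sc Collapse}, the only operation that deletes a layer is step~\ref{step:collapse-2}, which removes layers strictly above the collapsed layer $L_t$. Deleting $L_{k+1}$ would therefore require some call to {\sc Collapse} with $t \le k$, and no such call has occurred since $(M',\ldots)$; otherwise $L_{k+1}$ would have had to be rebuilt afterwards, contradicting ``last construction''.

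Next I would analyze how the remaining operations can alter a given layer. {\sc Build} only creates a new topmost layer and never touches existing ones. {\sc Collapse} with index $t$ touches only layers of index $\ge t$, and within $L_t$ itself it can only shrink ${\cal Y}_t$ (step~\ref{step:collapse-3}(c)) and ${\cal X}_t$ (step~\ref{step:collapse-4}), while $d_t$ and $z_t$, once set in step~\ref{step:build-5} of {\sc Build}, remain fixed until the layer is destroyed (as the text preceding {\sc Collapse} explicitly states). Applied to each $i \in [1,k]$, this yields $L'_i = L_i$, because modifying any of ${\cal X}_i$, ${\cal Y}_i$, $d_i$, $z_i$ would require a {\sc Collapse} with $t \le k$, which we have already ruled out.

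Finally, for layer $L_{k+1}$ the only operations available between the two states are {\sc Build} calls, which leave $L_{k+1}$ untouched, and {\sc Collapse} calls with $t \ge k+1$. A collapse with $t > k+1$ leaves $L_{k+1}$ untouched, and a collapse with $t = k+1$ can only shrink ${\cal X}_{k+1}$ and ${\cal Y}_{k+1}$, without changing $d_{k+1}$ or $z_{k+1}$. The inclusions ${\cal X}_{k+1} \subseteq {\cal X}'_{k+1}$ and ${\cal Y}_{k+1} \subseteq {\cal Y}'_{k+1}$ together with the equalities $d_{k+1} = d'_{k+1}$ and $z_{k+1} = z'_{k+1}$ follow. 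There is no substantive obstacle here; the claim is pure bookkeeping on the specifications of {\sc Build} and {\sc Collapse}. The only subtlety is extracting the content of ``last construction'', which I would isolate as a separate first step because it powers both halves of the argument.
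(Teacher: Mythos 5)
Your argument is correct and takes essentially the same route as the paper's own (quite terse) explanation: exploit that the ``last construction'' of the $(k+1)$-th layer means $L_{k+1}$ was never removed, hence no collapse with $t \le k$ occurred, hence $L'_i = L_i$ for $i \le k$; and a collapse with $t = k+1$ only shrinks ${\cal X}_{k+1}$ and ${\cal Y}_{k+1}$ while leaving $d_{k+1}, z_{k+1}$ fixed. Your version is just a more careful, step-by-step rendering of the same bookkeeping.
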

If some of $L'_1, \ldots, L'_k$ has ever been collapsed, then $L'_{k+1}$ would be removed by {\sc Collapsed} and it could not be the \emph{last} $(k+1)$-th layer constructed so far.  As to layer $L'_{k+1}$, since its construction so far, it has never been destructed, but it may have been shrunk by {\sc Collapse}, and the resulting layer is $L_{k+1}$.

\begin{claim}
	\label{cl:large-domain}
	$|\dm_{\varphi'}| > |X'_{\leq k+1}| - \mu|Y'_{\leq k}|$.
\end{claim}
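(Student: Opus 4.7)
The plan is to apply Lemma~\ref{lem:inj}(iii) and Lemma~\ref{lem:non-collapse}(ii) directly to the snapshot state $(M',{\cal E}',{\cal I}',(L'_1,\ldots,L'_{k+1}))$ with $\ell = k+1$, and then chain the two resulting inequalities. Lemma~\ref{lem:inj}(iii) gives
\[
	|\dm_{\varphi'}| \;\geq\; 2|X'_{\leq k+1}| - \sum_{i=1}^{k+1} z'_i,
\]
while Lemma~\ref{lem:non-collapse}(ii), instantiated with $j = k$, gives $|X'_{\leq k+1}| > \sum_{i=1}^{k+1} z'_i - \mu|Y'_{\leq k}|$, i.e.\ $\sum_{i=1}^{k+1} z'_i < |X'_{\leq k+1}| + \mu|Y'_{\leq k}|$. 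Substituting the latter into the former,
\[
	|\dm_{\varphi'}| \;\geq\; 2|X'_{\leq k+1}| - \sum_{i=1}^{k+1} z'_i \;>\; 2|X'_{\leq k+1}| - \bigl(|X'_{\leq k+1}| + \mu|Y'_{\leq k}|\bigr) \;=\; |X'_{\leq k+1}| - \mu|Y'_{\leq k}|,
\]
which is exactly the desired bound.

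The only non-algebraic step, and the point that deserves care, is verifying that Lemma~\ref{lem:non-collapse}(ii) is legitimately applicable to the snapshot state, i.e.\ that each $L'_i$ with $i \in [1,k]$ is not collapsible at that moment. This reuses the reasoning behind Claim~\ref{cl:same-layer}: were some such $L'_i$ collapsible immediately after {\sc Build} produced $L'_{k+1}$, then the {\sc Collapse} call that runs next would pick a collapsible layer of index at most $k$, and step~\ref{step:collapse-2} of {\sc Collapse} would strip every layer above it from the stack, \emph{destroying} $L'_{k+1}$ in the process. But Claim~\ref{cl:same-layer} asserts that the current state still houses the $(k+1)$-th layer $L_{k+1}$ as an in-place shrinkage of $L'_{k+1}$, meaning $L'_{k+1}$ has survived intact (modulo shrinking) up to the present; this rules out any intervening collapse of a layer at index $\leq k$, and in particular the hypothetical one right after its construction. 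Hence the hypothesis of Lemma~\ref{lem:non-collapse} holds at the snapshot, and the displayed chain above finishes the proof.
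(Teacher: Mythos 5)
Your proof is correct and takes essentially the same route as the paper: both invoke Lemma~\ref{lem:inj}(iii) and Lemma~\ref{lem:non-collapse}(ii) at the snapshot state and chain them, with non-collapsibility of $L'_1,\ldots,L'_k$ at that snapshot justified via Claim~\ref{cl:same-layer}. Your write-up is a bit more explicit about why Claim~\ref{cl:same-layer} rules out a collapsible layer of index at most $k$ at the snapshot (an intervening {\sc Collapse} would have removed $L'_{k+1}$, contradicting its survival as $L_{k+1}$), which the paper leaves implicit.
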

By Claim~\ref{cl:same-layer}, none of layers $L'_1, \ldots, L'_k$ is collapsible with respect to $(M',{\cal E}',{\cal I}',(L'_1,\cdots,L'_{k+1}))$. By Lemma~\ref{lem:non-collapse}(ii), $|X'_{\leq k+1}| > \sum_{i=1}^{k+1} z_i - \mu |Y_{\leq k}|$. By Lemma~\ref{lem:inj}(iii), $|\dm_\varphi'| \geq 2|X'_{\leq k+1}| - \sum_{i = 1}^{k+1} z_i$. Combining the two inequalities proves Claim~\ref{cl:large-domain}.

\subsubsection{Proof of Claim~\ref{cl:ineq}}
\label{sec:cl:ineq}

By invariant~7 in Table~\ref{tb:invar}, $|Y'_{\leq k}| \geq f_{M'}(Y'_{\leq k},X'_{\leq k+1} \cup I') \geq d_{k+1}$.  By Lemma~\ref{lem:invar}, $d_{k+1} \geq \sum_{i=1}^{k+1}z_{k+1} \geq |X'_{\leq k+1}|$ which is at least $|\dm_{\varphi'}|$ as $\dm_{\varphi'} \subseteq X'_{\leq k+1}$.  Putting the above together shows that $|\dm_{\varphi'}| \leq |Y'_{\leq k}|$ as stated in Claim~\ref{cl:ineq}. 

\subsubsection{Proof of Claim~\ref{cl:upper}}
\label{sec:cl:upper}

We derive upper bounds for $\val({\cal X}'_{\leq k}\cup {\cal Y}'_{\leq k})$ and $\val({\cal X}'_{k+1})$ separately.  Combining them proves the claim. 

Every edge $e \in {\cal Y}'_{\leq k}$ belongs to some partial allocation.  By the definition of a partial allocation, every thin edge in it is minimal, i.e., taking away a thin resource from $e$ puts its value below $\tau/\lambda$, which implies that $\val(e) \leq 2\tau/\lambda$. 
Every edge in ${\cal X}'_{\leq k}$ is blocked, so it has less than $\tau/\lambda$ worth of resources that are not covered by edges in ${\cal Y}'_{\leq k}$. We conclude that 
\begin{equation}
	\val({\cal X}'_{\leq k}\cup {\cal Y}'_{\leq k}) < \frac{\tau}{\lambda}|X'_{\leq k}| + \frac{2\tau}{\lambda}|Y'_{\leq k}|.  
	\label{eq:upper-1}
\end{equation}
We bound $\val({\cal X}'_{k+1})$ next.  By Claim~\ref{cl:same-layer}, 
\begin{alignat*}{2}
|	X'_{k+1} \setminus X_{k+1}| & =  |X'_{\leq k+1}| - |X_{\leq k+1}| \\
	& \leq  \sum_{i=1}^{k+1} z_i - |X_{\leq k+1}| 
		& \quad\quad \text{(by Lemma~\ref{lem:invar}(i))} \\
	& <  \sum_{i=1}^{k+1} z_i - (\sum_{i=1}^{k+1} z_i - \mu|Y_{\leq k}|) 
		& \quad\quad \text{(by Lemma~\ref{lem:non-collapse}(ii))} \\
	& = \mu|Y_{\leq k}| \\
	& = \mu|Y'_{\leq k}|.
\end{alignat*}
Every edge in ${\cal X}'_{k+1}$ has value at most $\tau + \tau/\lambda$ by invariant~1 in Table~\ref{tb:invar}.  Therefore,
\begin{align}
	\val({\cal X}'_{k+1}) 
	& \leq  \val({\cal X}_{k+1}) 
		+ (\tau + \tau/\lambda)(|{\cal X}'_{k+1} \setminus {\cal X}_{k+1}|) \nonumber \\
	& <  \val({\cal X}_{k+1}) + \mu(\tau + \tau/\lambda)|Y'_{\leq k}|.
\label{eq:upper-2}
\end{align}
A reasoning analogous to that behind \eqref{eq:upper-1} gives
\begin{align}
	\val({\cal X}_{k+1}) 
	& <  \frac{\tau}{\lambda}|X_{k+1}| + \frac{2\tau}{\lambda}|Y_{k+1}| \nonumber \\
	& <  \frac{\tau}{\lambda}|X_{k+1}| + \frac{2\tau\sqrt{\mu}}{\lambda}|Y_{\leq k}|
	\quad\quad \text{(by assumption that $|Y_{k+1}| < \sqrt{\mu}|Y_{\leq k}|$)} \nonumber \\
	& =  \frac{\tau}{\lambda}|X_{k+1}| + \frac{2\tau\sqrt{\mu}}{\lambda}|Y'_{\leq k}|. \quad\quad \text{(by Claim~\ref{cl:same-layer})}
\label{eq:upper-3}
\end{align} 
Combining inequalities (\ref{eq:upper-2}) and (\ref{eq:upper-3}) yields
\begin{align}
	\val({\cal X}'_{k+1}) 
	& <  \frac{\tau}{\lambda}|X_{k+1}| + \frac{2\tau\sqrt{\mu}}{\lambda}|Y'_{\leq k}| +\mu(\tau + \tau/\lambda)|Y'_{\leq k}| \nonumber \\
	& \leq  \frac{\tau}{\lambda}|X'_{k+1}| + (\lambda\mu + \mu + 2\sqrt{\mu})\frac{\tau}{\lambda}|Y'_{\leq k}|.  \label{eq:upper-4}
\end{align}
The last inequality is due to ${\cal X}_{k+1}\subseteq {\cal X}'_{k+1}$. 

Putting \eqref{eq:upper-1} and \eqref{eq:upper-4} together, we obtain
\[
	\val({\cal X}'_{\leq k+1} \cup {\cal Y}'_{\leq k}) 
	< \frac{\tau}{\lambda}|X'_{\leq k+1}| 
		+ \frac{2\tau}{\lambda}|Y'_{\leq k}| 
		+ (\lambda\mu + \mu + 2\sqrt{\mu})\frac{\tau}{\lambda}|Y'_{\leq k}|.
\]
By Claim~\ref{cl:large-domain}, $|\dm_{\varphi'}| > |X'_{\leq k+1}| - \mu|Y'_{\leq k}|$. Substituting it into the above inequality gives the following inequality as stated in Claim~\ref{cl:upper}.
\[
	\val({\cal X}'_{\leq k+1} \cup {\cal Y}'_{\leq k}) 
	< \frac{\tau}{\lambda}|\dm_{\varphi'}| 
		+ \frac{2\tau}{\lambda}|Y'_{\leq k}| 
		+ (\lambda\mu + 2\mu + 2\sqrt{\mu})\frac{\tau}{\lambda}|Y'_{\leq k}|.
\]

\subsubsection{Proof of Claim~\ref{cl:lower}}
\label{sec:cl:lower}
Recall the things stated at the beginning of Section~\ref{sec:analysis}.

For each player $p$, let $A_p$ denote the set of resources assigned to $p$ in {\rm OPT}.  By Lemma~\ref{lem:induced-matching}, for any $p \in \overbar{P}_{M^*}$, $A_p$ consists of thin resources only and $\val(A_p) \geq \tau$.  Since {\rm OPT} is an allocation, $A_p \cap A_q = \emptyset$ for any distinct players $p$ and $q$.  For any subset $S$ of players, we define $A_S := \bigcup_{p \in S}A_p$.

Let $B$ denote the set of resources covered by the edges in ${\cal X}'_{\leq k+1} \cup {\cal Y}'_{\leq k}$.  To derive a lower bound of $\val(B)$, it suffices to focus on the value of a subset of $B$.  In particular, we are interested in resources in $B$ that are allocated to $\overbar{P}_{M^*}$ in {\rm OPT}, i.e., $B \cap A_{\overbar{P}_{M^*}}$. $B \cap A_{\overbar{P}_{M^*}}$ can be divided into two disjoint subsets: those in $A_{\im_{\varphi'}}$ and those in $A_{\overbar{P}_{M^*} \setminus \im_{\varphi'}}$.  We consider them separately in the following analysis.

\paragraph*{Resources in $B \cap A_{\overbar{P}_{M^*} \setminus \im_{\varphi'}}$.}
Lemma~\ref{lem:inj}(iv) implies that 
\begin{equation}
	f_{M'}(\overbar{P}_{M'},(\overbar{P}_{M^*}
	\setminus \im_{\varphi'}) \cup X'_{\leq k+1} \cup I') = |\overbar{P}_{M'}|.
	\label{eq:lower-1}
\end{equation}
We have ${\cal Y}'_{i} \subseteq {\cal E}'$ for $i\in [2, k]$ by invariant~4 in Table~\ref{tb:invar}, and ${\cal Y}_1 = \{(p_0, \emptyset)\}$ where $p_0$ is the player we want to satisfy. Therefore, the players in $Y'_{\leq k}$ are not matched by $M'$, i.e., $Y'_{\leq k} \subseteq \overbar{P}_{M'}$.  It follows from \eqref{eq:lower-1} that
\begin{equation}
	f_{M'}(Y'_{\leq k},(\overbar{P}_{M^*}
	\setminus \im_{\varphi'}) \cup X'_{\leq k+1} \cup I') = |Y'_{\leq k}|. 
\label{eq:lower-2}
\end{equation}
We also have
\begin{equation}
	f_{M'}(Y'_{\leq k}, X'_{\leq k+1} \cup I') \leq |X'_{\leq k+1}| + |I'|.
\label{eq:lower-3}
\end{equation}
Comparing \eqref{eq:lower-2} and \eqref{eq:lower-3}, we conclude that, during the construction 
of $L'_{k+1}$, after we finish adding edges to ${\cal I}'$ and ${\cal X}'_{k+1}$ in steps~\ref{step:build-2} and~\ref{step:build-3} of {\sc Build}, at least $|Y'_{\leq k}|- |X'_{\leq k+1}| - |I'|$ players in $\overbar{P}_{M^*} \setminus \im_{\varphi'}$ are still addable.   Let $S$ denote this subset of addable players in $\overbar{P}_{M^*} \setminus \im_{\varphi'}$.  Still, no more edge is added to ${\cal X}'_{\leq k+1} \cup {\cal I}'$ in steps~2 and~3 of {\sc Build}.  The reason must be that the players in $S$ do not have addable edges.   In other words, for each player $p \in S$, since $\val(A_p)\geq \tau$, at least $\tau - \tau/\lambda$ worth of the thin resources in $A_p$ must already be covered by ${\cal X}'_{\leq k+1}\cup {\cal Y}'_{\leq k} \cup {\cal I}'$.  Therefore, the value of resources in $A_S$ that are covered by 
	${\cal X}'_{\leq k+1}\cup {\cal Y}'_{\leq k} \cup {\cal I}'$
is at least
\[
	(\tau - \tau/\lambda)(|Y'_{\leq k}|- |X'_{\leq k+1}| - |I'|).
\]
Subtracting the contribution of ${\cal I}'$, we obtain
\begin{align}
	\val(B \cap A_{\overbar{P}_{M^*} \setminus \im_{\varphi'}}) 
	& \geq  \val(B \cap A_S) \nonumber \\ 
	& \geq  (\tau - \tau/\lambda)(|Y'_{\leq k}|- |X'_{\leq k+1}| - |I'|) - \val({\cal I'}).  \label{eq:lower-4}
\end{align}
By Claim~\ref{cl:same-layer}, $L'_i$ is not collapsible for $i \in [1,k]$. Then by Lemma~\ref{lem:non-collapse}(i), $|I'| < \mu|Y'_{\leq k}|$. By invariant~1 in Table~\ref{tb:invar}, edges in ${\cal I}'$ have values at most $2\tau/\lambda$, so $\val({\cal I}') \leq \frac{2\tau}{\lambda}|I'|$.  This allow us to modify \eqref{eq:lower-4} and get
\begin{align}
	\val(B \cap A_{\overbar{P}_{M^*} \setminus \im_{\varphi'}}) 
	& \geq 
	(\tau - \tau/\lambda)(|Y'_{\leq k}|- |X'_{\leq k+1}|) -  (\tau + \tau/\lambda)|I'| \nonumber \\
	& > 
	(\tau - \tau/\lambda)(|Y'_{\leq k}|- |X'_{\leq k+1}|) - (\tau + \tau/\lambda)\mu|Y'_{\leq k}|.  \label{eq:lower-5}
\end{align}
By Claim~\ref{cl:large-domain}, $|\dm_{\varphi'}| > |X'_{\leq k+1}| - \mu|Y'_{\leq k}|$.  Substituting this inequality into \eqref{eq:lower-5} gives the final inequality that we want in this case.
\begin{equation}
	\val(B \cap A_{\overbar{P}_{M^*} \setminus \im_{\varphi'}}) 
	> (\tau - \tau/\lambda)(|Y'_{\leq k}|- |\dm_{\varphi'}|) - 2\mu\tau|Y'_{\leq k}|.
\label{eq:lower-key-1}
\end{equation}

\paragraph*{Resources in $B \cap A_{\im_{\varphi'}}$.}
As $\dm_{\varphi'} \subseteq X'_{\leq k+1}$ by Lemma~\ref{lem:inj}(i), every player in $\dm_{\varphi'}$ belongs to $X'_j$ for some $j \in [1,k+1]$.  For every $j \in [1,k]$, let $t_j$ denote the time immediately after the last construction of the $j$-th layer in the building phase prior to the creation of $L'_{k+1}$. Similarly let $t_{k+1}$ be the time immediately after the construction of $L'_{k+1}$. One can see that $t_1 < t_2 <\cdots < t_{k+1}$.  For each $t_j$ and $c \in [1, j]$, we use ${\cal X}^{t_j}_{c}$ and ${\cal Y}^{t_j}_{c}$ to denote the set of addable edges and the set of blocking edges in the $c$-th layer at time $t_j$. Similarly, we define ${\cal I}^{t_j}$ to be the set of unblocked addable edges at time $t_j$.

For reasons similar to those underlying Claim~\ref{cl:same-layer}, from time $t_j$ to time $t_{k+1}$, none of the layers below the $j$-th layer have ever been collapsed. As a result, we have ${\cal X}^{t_j}_c = {\cal X}'_c$ and ${\cal Y}^{t_j}_c = {\cal Y}'_c$ for $c \in [1, j-1]$. The $j$-th layer may have been collapsed several times, so ${\cal X}'_j \subseteq {\cal X}^{t_j}_{j}$. We also have $z'_j = |{\cal X}^{t_j}_j|$ because the $z'_j$ is unchanged from time $t_j$ to time $t_{k+1}$.

For every $q \in \dm_{\varphi'}$, let $e_q$ denote the edge in ${\cal X}'_{\leq k+1}$ that is incident $q$.  Recall that $w_{q} = \val(e_q)$ by definition.  Let $q^* = \varphi'(q)$.
\begin{quote}
	\begin{claim}
		\label{cl:lower-1}
		Suppose that $q \in \dm_{\varphi'} \cap {X}'_j$ for some $j \in [2,k+1]$.  The total value of resources shared by ${\cal X}^{t_j}_{\leq j}\cup {\cal Y}^{t_j}_{\leq j-1}\cup {\cal I}^{t_j}$ and $A_{q^*}$ is at least $\tau - w_{q}$.
	\end{claim}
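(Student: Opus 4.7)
My plan is to split on the magnitude of $w_q$.  If $w_q \geq \tau$, then $\tau - w_q \leq 0$ and the claim is trivial, so I will assume $w_q < \tau$.  Let $t_q \leq t_j$ denote the instant within the (last) execution of {\sc Build} for layer $j$ at which $q$ was inserted into $X_j$ in step~\ref{step:build-3}.  By Lemma~\ref{lem:inj}(ii), $q^* := \varphi'(q)$ is an addable player at $t_q$.  Since $q$ entered via step~\ref{step:build-3}, step~\ref{step:build-2} has already drained all unblocked addable edges, so any addable edge incident to $q^*$ at $t_q$ is necessarily blocked.  This primes a case analysis according to whether $q^*$ has an addable edge at all.

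Next I will consider two cases at time $t_q$.  In case (a), $q^*$ has no addable edge.  Then the thin resources in $R_{q^*}$ not covered by ${\cal X}^{t_q}_{\leq j}\cup {\cal Y}^{t_q}_{\leq j-1}\cup {\cal I}^{t_q}$ must have total value strictly less than $\tau/\lambda$, because otherwise a minimal subset of value at least $\tau/\lambda$ would form an addable edge.  Since $A_{q^*}\subseteq R_{q^*}$, the uncovered portion of $A_{q^*}$ has value less than $\tau/\lambda \leq w_q$, where the last inequality is invariant~1 in Table~\ref{tb:invar}.  In case (b), $q^*$ has a (blocked) addable edge, so step~\ref{step:build-3} identifies a maximal blocked addable edge $(q^*,B^*)$ for $q^*$, and the greedy rule forces $\val(B^*) \leq w_q < \tau$.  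I will then invoke the resource-by-resource construction of $B^*$ described in the proof of Lemma~\ref{lem:time-build}: the procedure halts only when either $\val(B^*)$ is about to exceed $\tau + \tau/\lambda$ or every eligible thin resource in $R_{q^*}$ has been examined.  Since each thin resource has value strictly below $\tau/\lambda$ and $\val(B^*) < \tau$, adding one more resource would keep the total strictly below $\tau + \tau/\lambda$, so the first halt condition is impossible and the second must apply.  Hence $A_{q^*}\setminus (\text{resources covered at }t_q) \subseteq B^*$, and therefore $\val\bigl(A_{q^*}\setminus (\text{covered at }t_q)\bigr) \leq \val(B^*) \leq w_q$.

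The two cases together bound the uncovered value of $A_{q^*}$ at $t_q$ by $w_q$.  To promote this to time $t_j$, I will note that the interval $[t_q,t_j]$ lies entirely within a single {\sc Build} pass on layer $j$: the earlier layers $L_1,\ldots,L_{j-1}$ are untouched while ${\cal X}_j$ and $\cal I$ only grow.  Hence the covered set at $t_j$ is a superset of the covered set at $t_q$, and the uncovered value of $A_{q^*}$ at $t_j$ is still at most $w_q$.  Combining with $\val(A_{q^*}) \geq \tau$ from Lemma~\ref{lem:induced-matching} (since $q^*\in \overbar{P}_{M^*}$) gives that the resources shared by $A_{q^*}$ and ${\cal X}^{t_j}_{\leq j}\cup {\cal Y}^{t_j}_{\leq j-1}\cup {\cal I}^{t_j}$ have total value at least $\tau - w_q$.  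The main obstacle I expect is case (b): one must leverage both $w_q < \tau$ and the strict inequality $v_r < \tau/\lambda$ for thin resources to rule out that the greedy construction of $B^*$ halted at the upper threshold $\tau + \tau/\lambda$ rather than by exhausting its candidate pool; once this is settled, the remainder of the argument is simply monotonicity across a single Build pass and arithmetic.
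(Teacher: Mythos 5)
Your proof is correct and follows essentially the same route as the paper: split on $w_q\geq\tau$, use Lemma~\ref{lem:inj}(ii) to establish that $q^*$ was addable at the moment $q$ was selected, and subdivide on whether $q^*$ had an addable edge, bounding the uncovered value of $A_{q^*}$ by $\tau/\lambda$ in the first subcase and by $\val(e_{q^*})\leq w_q$ in the second. Two small stylistic differences: the paper resolves subcase (b) directly from maximality (if $\val(e_{q^*})<\tau$ and some uncovered thin resource $r$ desired by $q^*$ were missing, then $e_{q^*}\cup\{r\}$ would still be addable since $v_r<\tau/\lambda$, contradicting maximality), whereas you route through the specific halting conditions of the construction in Lemma~\ref{lem:time-build}, which reaches the same conclusion but ties the argument to implementation details unnecessarily; and you make explicit the monotonicity step promoting the bound from the instant $q$ was picked to the time $t_j$, which the paper leaves implicit.
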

	\begin{proof}
		If $w_{q} \geq \tau$, the correctness is trivial.  Assume that $w_{q} < \tau$. Recall that $e_q \in {\cal X}'_j \subseteq {\cal X}^{t_j}_j$.  Consider the moment when $q$ is picked by the algorithm and $e_q$ is added to ${\cal X}^{t_j}_{j}$.  By Lemma~\ref{lem:inj}(i) and (ii), $q^*$ is also an addable player.  Since we do not choose $q^*$, either $q^*$ is not incident to any addable edge or the maximal blocked addable thin edge $e_{q^*}$ identified for $q^*$ has value no greater than $\val(e_q)$. In the former case, $A_{q^*}$ has at most $\tau/\lambda$ worth of its resources not covered by ${\cal X}^{t_j}_{\leq j}\cup {\cal Y}^{t_j}_{\leq j-1}\cup {\cal I}^{t_j}$.  Since $\val(A_{q^*}) \geq \tau$ by Lemma~\ref{lem:induced-matching}, we conclude that more than $\tau - \tau/\lambda$ worth of thin resources in $A_{q^*}$ are already covered by ${\cal X}^{t_j}_{\leq j}\cup {\cal Y}^{t_j}_{\leq j-1}\cup {\cal I}^{t_j}$.  Note that $\tau-\tau/\lambda \geq \tau-w_q$ as $w_q = \val(e_q) \geq \tau/\lambda$ by invariant 1 in Table~\ref{tb:invar}.  In the latter case, it must be  that $\val(e_{q^*}) \leq \val(e_q) = w_{q} < \tau$.  Recall that a maximal blocked addable edge has value in $[\tau/\lambda,\tau+\tau/\lambda]$.   Therefore, given that  $\val(e_{q^*}) < \tau$ and yet $e_{q^*}$ is maximal, $e_{q^*}$ must contain all the thin resources desired by $q^*$ but not yet covered by ${\cal X}^{t_j}_{\leq j}\cup {\cal Y}^{t_j}_{\leq j-1}\cup {\cal I}^{t_j}$. Therefore, $A_{q^*}$ has at least $\tau - value(e_{q^*}) \geq \tau - w_{q}$ worth of thin resources covered by ${\cal X}^{t_j}_{\leq j}\cup {\cal Y}^{t_j}_{\leq j-1}\cup {\cal I}^{t_j}$. 
	\end{proof}
\end{quote}

The total value of resources shared by ${\cal X}^{t_j}_{\leq j}\cup {\cal Y}^{t_j}_{\leq j-1}$ and $A_{\varphi'(\dm_{\varphi'} \cap X'_j)}$ is at least the total value of those shared by ${\cal X}^{t_j}_{\leq j}\cup {\cal Y}^{t_j}_{\leq j-1} \cup {\cal I}^{t_j}$ and $A_{\varphi'(\dm_{\varphi'} \cap X'_j)}$ minus $\val({\cal I}^{t_j})$.  Since none of the layers below the $j$-th layer is collapsible, by Lemma~\ref{lem:non-collapse}(i), $|{\cal I}^{t_j}| \leq \mu|Y^{t_j}_{\leq j-1}|$.  By invariant~1 in Table~\ref{tb:invar}, edges in ${\cal I}^{t_j}$ have values at most $2\tau/\lambda$.  Therefore, by applying Claim~\ref{cl:lower-1} to all players in $\dm_{\varphi'} \cap X'_{j}$, the total value of resources shared by ${\cal X}^{t_j}_{\leq j}\cup {\cal Y}^{t_j}_{\leq j-1}$ and $A_{\varphi'(\dm_{\varphi'} \cap X'_{j})}$ is at least
\[
	\sum_{q \in \dm_{\varphi'}\cap X'_{j}}(\tau - w_q) - \val({\cal I}^{t_j}) 
	\geq \sum_{q \in \dm_{\varphi'}\cap X'_{j}}(\tau - w_q) 
	 	 -\frac{2\mu\tau}{\lambda}|Y^{t_j}_{\leq j-1}|.
\]
Comparing ${\cal X}^{t_j}_{\leq j}\cup {\cal Y}^{t_j}_{\leq j-1}$ and ${\cal X}'_{\leq j}\cup {\cal Y}'_{\leq j-1}$,  the only difference is ${\cal X}^{t_j}_{j} \setminus {\cal X}'_{j}$ because ${\cal X}^{t_j}_c = {\cal X}'_c$, ${\cal Y}^{t_j}_c = {\cal Y}'_c$ for all $c \in [1, j-1]$, and ${\cal X}'_j \subseteq {\cal X}^{t_j}_j$.  Since none of the layers below the $j$-th layer is collapsible, Lemmas~\ref{lem:invar} and~\ref{lem:non-collapse} imply that
	$|{\cal X}^{t_{j}}_{j}| - |{\cal X}'_{j}| 
	= z'_j - |{\cal X}'_{j}| 
	\leq \sum_{i=1}^{j} z'_i - |{\cal X}'_{\leq j}| \leq \mu|Y'_{\leq j-1}|$.
Also each blocked addable edge has value at most $\tau + \tau/\lambda$.  Therefore, 
	$\val({\cal X}^{t_j}_{j} \setminus {\cal X}'_{j}) 
	\leq (\tau+\tau/\lambda)\mu|Y'_{\leq j-1}|$.
As a result, the total value of resources shared by ${\cal X}'_{\leq j}\cup {\cal Y}'_{\leq j-1}$ and $A_{\varphi'(\dm_{\varphi'} \cap X'_{j})}$ is at least
\[
	\sum_{q \in \dm_{\varphi'}\cap X'_{j}}(\tau - w_q)
		 - \frac{2\mu\tau}{\lambda}|Y^{t_j}_{\leq j-1}|
		 - (\tau+\tau/\lambda)\mu|Y'_{\leq j-1}| 
	= \sum_{q \in \dm_{\varphi'}\cap X'_{j}}(\tau - w_q)
		 - (\tau+3\tau/\lambda)\mu|Y'_{\leq j-1}|.
\]
Recall that $X'_1 = \emptyset$ by the initialization of the building phase. Summing the right hand side of the above inequality over all $j \in [2,k+1]$, we conclude that the total value of resources shared by ${\cal X}'_{\leq k+1}\cup {\cal Y}'_{\leq k}$ and $A_{\im_{\varphi'}}$ is at least
\begin{equation}
	\sum_{q \in \dm_{\varphi'}}(\tau - w_q) 
	- (\tau+3\tau/\lambda)\mu\sum_{j = 2}^{k+1}|Y'_{\leq j-1}|.
\label{eq:lower-6}
\end{equation}
By the definition of $k$, for all $j \in [2,k]$, $|Y'_j| \geq \sqrt{\mu}|Y'_{\leq j-1}|$.  By invariant~4 in Table~\ref{tb:invar}, $\sum_{j=2}^{k} |Y'_j| \leq |Y'_{\leq k}|$.  Therefore,
\begin{equation}
\label{eq:lower-7}
	\sum_{j = 2}^{k+1}|Y'_{\leq j-1}| 
	\leq  \frac{1}{\sqrt{\mu}}\sum_{j=2}^k |Y'_{j}| + |Y'_{\leq k}| 
	\leq \left(\frac{1}{\sqrt{\mu}} + 1\right)|Y'_{\leq k}| 
	\leq \frac{2}{\sqrt{\mu}}|Y'_{\leq k}|.
\end{equation}
Recall that $B$ is the set of resources covered by edges in ${\cal X}'_{\leq k+1}\cup {\cal Y}'_k$. Substituting (\ref{eq:lower-7}) into (\ref{eq:lower-6}) gives
\begin{equation}
\label{eq:lower-key-2}
\val(B \cap A_{\im_{\varphi'}}) \geq \sum_{q \in \dm_{\varphi'}}(\tau - w_q) - (2\tau+6\tau/\lambda)\sqrt{\mu}|Y'_{\leq k}|.
\end{equation}
Combining \eqref{eq:lower-key-2} and \eqref{eq:lower-key-1} completes the proof:
\begin{align*}
	\val({\cal X}'_{\leq k+1}\cup {\cal Y}'_{\leq k}) 
	 \geq & 
	\val(B \cap A_{\overbar{P}_{M^*} \setminus A_{\im_{\varphi'}}}) + \val(B \cap A_{\im_{\varphi'}}) \\
	\geq & (\tau - \tau/\lambda)(|Y'_{\leq k}|- |\dm_{\varphi'}|) + \sum_{q \in \dm_{\varphi'}}(\tau - w_q) - \\
	 & (2\lambda\mu  + 2\lambda\sqrt{\mu} +6\sqrt{\mu})\frac{\tau}{\lambda}|Y'_{\leq k}|.
\end{align*}
\section{Conclusion}

We show that for any constant $\delta \in (0,1)$, a $(6+\delta)$-approximate solution can be computed for the restricted fair allocation problem in polynomial time.  There is still a gap between the current best estimation ratio and the approximation ratio $6+\delta$ achieved by this paper.  Whether the approximation ratio can match the estimation ratio remains an open problem.
 
The problem can be generalized slightly to the RAM model where the values of resources are non-integers.  Only the binary search scheme needs to adjusted.  When we work on an interval $[a,b]$ for $\tau$, we will recurse on either $[a,(a+b)/2]$ or $[(a+b)/2,b]$ depending on the outcome of solving the reduced problem.  We terminate the binary search when $b \leq (1+\delta)a$.  This gives an approximation ratio of $6 + \delta'$ for a constant $\delta' \in (0,1)$ depending on $\delta$.  

\bibliography{paper}
\end{document}